    \newcommand{\ifelsarticle}{\expandafter\@firstoftwo}
    \newcommand{\ifelsarticle}{\expandafter\@secondoftwo}
    \patchcmd{\section}{18\p@ \@plus 6\p@ \@minus 3\p@}{\bigskipamount}{}{}
    \patchcmd{\section}{9\p@ \@plus 6\p@ \@minus 3\p@}{\medskipamount}{}{}
    \patchcmd{\subsection}{12\p@ \@plus 6\p@ \@minus 3\p@}{\medskipamount}{}{}
    \patchcmd{\subsubsection}{12\p@ \@plus 6\p@ \@minus 3\p@}{\medskipamount}{}{}
    \patchcmd{\elsparagraph}{10\p@ \@plus 6\p@ \@minus 3\p@}{\medskipamount}{}{}
    \renewcommand{\subsubsection}{\paragraph}
\newcommand{\testf}{Z}
\newcommand{\wk}{W}
\newcommand{\rema}{A_\res}
\newcommand{\agea}{A_\age}
\newcommand{\serv}{S}
\newcommand{\arri}{A}
\newcommand{\idle}{\mathrm{idle}}
\newcommand{\bs}{J}
\newcommand{\age}{\mathrm{age}}
\newcommand{\res}{\mathrm{res}}
\newcommand{\upti}{U}
\newcommand{\arate}{\lambda}
\newcommand{\rate}{\lambda}
\newcommand{\load}{\rho}
\newcommand{\rcy}{\mathrm{rcy}}
\newcommand{\rfresh}{r}
\newcommand{\rrcy}{{\r-\rcy}}
\newcommand{\acc}{\mathrm{acc}}
\newcommand{\racc}{{\r-\acc}}
\newcommand{\arv}{\mathrm{arv}}
\newcommand{\setup}{\mathrm{setup}}
\newcommand{\remamin}{A_{\min}}
\newcommand{\remamax}{A_{\max}}
\newcommand{\gittins}{\mathrm{Gtn}}
\newcommand{\indic}[1]{\mathbb{1}{\{#1\}}}
\newcommand{\rank}{\mathrm{rank}}
\newcommand{\excess}{\mathrm{e}}
\newcommand{\Sbig}{S_{\mathrm{big}}}
\newcommand{\abs}{\vert}
\newcommand{\loss}[1]{\ell_{\text{#1}}}
\newcommand{\mResidual}{m_\res}
\newcommand{\mRcy}{m_\rcy}
\newcommand{\mIdle}{m_{\mathrm{idle}}}
\newcommand{\mSetup}{m_\setup}
\def\r-{\texorpdfstring{%
    \ifbool{mmode}{%
        r\mathhyphen%
    }{%
        \ifdefstring{\f@shape}{it}{\textrm{r}}{\textit{\textrm{\lowercase{r}}}}-%
    }%
}{r-}}
\def\M/#1/#2{\kendallNotation{M}{#1}{#2}}
\def\G/#1/#2{\kendallNotation{G}{#1}{#2}}
\newcommand{\kendallNotation}[3]{%
    \ifbool{mmode}{%
        \text{#1/#2/}#3\@ifnextchar{/}{\@gobblesetup}{}
    }{%
        #1/#2/\texorpdfstring{%
            \ifbool{mmode}{#3}{%
                    \ifstrequal{#3}{k}{%
                    \ifdefstring{\f@shape}{it}{\textrm{k}}{%
                        \kern-0.04em\textup{\textit{\textrm{\lowercase{k}}}}%
                    }%
                }{#3}%
            }%
        }{#3}%
    }%
}
\def\@gobblesetup/setup{\text{/setup}}
\newcommand{\dtime}[1]{#1'}
    \title{Performance of the Gittins Policy in the \G/G/1 and \G/G/k, \\ With and Without Setup Times}
    \author[1]{Yige Hong}
    \ead{yigeh@cs.cmu.edu}
    \author[2]{Ziv Scully}
    \ead{zivscully@cornell.edu}
    \affiliation[1]{organization={Carnegie Mellon University, Computer Science Department}, addressline={5000 Forbes Ave}, city={Pittsburgh}, state={PA}, postcode={15213}, country={USA}}
    \affiliation[2]{organization={Cornell University, School of Operations Research and Information Engineering}, addressline={206 Rhodes Hall}, city={\\Ithaca}, state={NY}, postcode={14853-3801}, country={USA}}
    \title[Gittins in the \G/G/1 and \G/G/k]{\vspace{-\baselineskip}Performance of the Gittins Policy in the \G/G/1 and \G/G/k, With and Without Setup Times}
    \author[Yige Hong and Ziv Scully]{%
        Yige Hong \\
        Carnegie Mellon University \\
        \texttt{yigeh@andrew.cmu.edu}
        \and
        Ziv Scully \\
        Cornell University \\
        \texttt{zivscully@cornell.edu}
    }
\date{January 2024}
\begin{document}

\ifelsarticle{}{\maketitle}
\begin{abstract}
    How should we schedule jobs to minimize mean queue length? In the preemptive \M/G/1 queue, we know the optimal policy is the Gittins policy, which uses any available information about jobs' remaining service times to dynamically prioritize jobs. For models more complex than the \M/G/1, optimal scheduling is generally intractable. This leads us to ask: beyond the \M/G/1, does Gittins still perform well?

Recent results show Gittins performs well in the \M/G/k, meaning that its additive suboptimality gap is bounded by an expression which is negligible in heavy traffic. But allowing multiple servers is just one way to extend the \M/G/1, and most other extensions remain open. Does Gittins still perform well with non-Poisson arrival processes? Or if servers require setup times when transitioning from idle to busy?

In this paper, we give the first analysis of the Gittins policy that can handle any combination of (a)~multiple servers, (b)~non-Poisson arrivals, and (c)~setup times. Our results thus cover the \G/G/1 and \G/G/k, with and without setup times, bounding Gittins's suboptimality gap in each case. Each of (a), (b), and (c) adds a term to our bound, but all the terms are negligible in heavy traffic, thus implying Gittins's heavy-traffic optimality in all the systems we consider. Another consequence of our results is that Gittins is optimal in the \M/G/1 with setup times at all loads.


%
%
\end{abstract}
\ifelsarticle{\maketitle}{}

\section{Introduction}
\label{sec:intro}

We consider the classic problem of preemptively scheduling jobs in a queue to minimize mean number-in-system, or equivalently mean response time (a.k.a. sojourn time). Even in single-server queueing models, this can be a nontrivial problem whose answer depends on the information available to the scheduler. The simplest case is when the scheduler knows each job's size (a.k.a. service time), for which the optimal policy is Shortest Remaining Processing Time (SRPT) \citep{schrage_proof_1968}: always serve the job of least remaining work.

In the more realistic case of scheduling with unknown or partially known job sizes, the optimal policy is only known for the \M/G/1. It is called the \emph{Gittins} policy (a.k.a. Gittins index policy) \citep{aalto_gittins_2009, aalto_properties_2011, gittins_multi-armed_2011, scully_gittins_2021}. Based on whatever service time information is available for each job, Gittins assigns each job a scalar \emph{rank} (i.e. priority), then serves the job of least rank. For example, SRPT is the special case of Gittins where job sizes are known exactly, and a job's rank is its remaining work. More generally, a job's rank is, roughly speaking, an estimate of its remaining work based on whatever information is available.

The Gittins policy is known to be optimal in the \M/G/1 \citep{gittins_multi-armed_2011, scully_gittins_2021}. But plenty of systems and models have more complex features, including:
\label{sec:intro:feature_list}
\*[(a)] \emph{Multiple servers}, such as the \M/G/k.
\* \emph{Non-Poisson arrival processes}, such as the \G/G/1 (more specifically, the GI/GI/1).
\* \emph{Periods of server unavailability}, such as models with setup times.
\*/
Either (a) or (b) alone makes optimal scheduling intractable. Combining all three, as in the \G/G/k with setup times (\G/G/k/setup), only adds to the challenge.

With optimality out of reach, we are left to find a tractable near-optimal policy. We thus ask:
\begin{quote}
    How well does Gittins perform in systems with features (a), (b), and~(c) like the \G/G/k/setup?
\end{quote}
Gittins is a natural candidate because its definition naturally generalizes beyond the \M/G/1, even if its optimality proof does not \citep{gittins_multi-armed_2011}. For instance, in a \G/G/k, Gittins simply serves the $k$~jobs of $k$ least ranks, or all jobs if there are fewer than~$k$.

Only feature (a) has been addressed in full generality in prior work \citep{scully_gittins_2020, grosof_optimal_2022, scully_new_2022}. Specifically, it is known that in the \M/G/k, the additive suboptimality gap of Gittins (abbreviated in equations to ``$\gittins$'') is bounded by \citep{scully_new_2022}\footnote{%
    Throughout this paper, $\log$ is the natural logarithm.}
\[
    \label{eq:intro:mgk_bound}
    \E{N}_{\M/G/k}^\gittins - \inf_{\text{policies }\pi} \E{N}_{\M/G/k}^\pi 
    \leq C (k-1) \log \frac{1}{1-\load}.
\]
Let us briefly explain the notation used in~\cref{eq:intro:mgk_bound}:
\* $\E{N}_{\M/G/k}^\pi$ is the mean number-in-system under policy~$\pi$ in \M/G/k.
\* $k$ is the number of servers.
\* $\load \in [0, 1)$ is the \emph{load} (a.k.a. utilization), namely the average fraction of servers that are busy.
\* $C \approx 3.775$ is a constant.
\*/
A notable feature of \cref{eq:intro:mgk_bound} is that under mild conditions \citep{scully_gittins_2020}, the right-hand side is dominated by $\inf_\pi \E{N}_{\M/G/k}^\pi$, the performance of the optimal policy, in the \emph{heavy-traffic limit}, meaning as $\load \to 1$. That is, as the \M/G/k gets busier and busier, the difference between Gittins's performance and that of the optimal policy becomes negligible. Gittins is thus considered \emph{heavy-traffic optimal} in the \M/G/k.

The above progress on analyzing Gittins in the multiserver \M/G/k is certainly promising for handling~(a). But, as we explain in more detail in \cref{sec:intro:obstacles}, key steps of the prior \M/G/k analysis rely on Poisson arrivals and uninterrupted server availability, so they cannot handle (b) and~(c).

\subsection{Results: Performance Bounds and Heavy-Traffic Optimality}
\label{sec:intro:results}

We give the first analysis of the Gittins policy for systems with any combination of (a)~multiple servers, (b)~G/G arrivals, and (c)~setup times. We frame our results in terms of the \G/G/k with setup times (\G/G/k/setup), which can have all three features. But they also apply to systems with a subset of the features, such as the \G/G/1 or \M/G/k/setup, because we can view them as special cases of the \G/G/k/setup.

Our main results, presented in full in \cref{sec:results}, can be summarized as
\[
    \label{eq:intro:results}
    \E{N}_{\G/G/k/setup}^\gittins - \inf_{\text{policies }\pi} \E{N}_{\G/G/k/setup}^\pi
    \leq \loss{(a)} + \loss{(b)} + \loss{(a)\&(c)},
\]
where each term on the right-hand side is a ``suboptimality loss'' that is non-zero when the model has the features in the subscript. For example, the \M/G/k/setup has only features~(a) and~(c), so $\loss{(b)} = 0$. A particularly notable case is the \M/G/1/setup, which has only feature~(c), so all three loss terms in \cref{eq:intro:results} are zero. Indeed, we show that Gittins is optimal among non-idling policies in the \M/G/1/setup, a previously unknown result.

Our result generalizes prior work on Gittins in the \M/G/k in the sense that $\loss{(a)}$ turns out to be the right-hand side of \cref{eq:intro:mgk_bound}. Remarkably, the other loss terms, $\loss{(b)}$ and~$\loss{(a)\&(c)}$, are uniformly bounded at all loads. This implies that, under mild conditions, Gittins is heavy-traffic optimal in the \G/G/k/setup.

\subsubsection*{Stability of the \G/G/k/setup under Complex Scheduling Policies}

Perhaps a more basic question than optimizing mean response time is stability: under which scheduling policies is the \G/G/k/setup stable? This is an open question which is outside the scope of our work. As such, our results hold under an assumption on the stability region of the \G/G/k/setup (\cref{assump:stability}). We believe the assumption always holds, giving a partial proof sketch in \cref{sec:stability_sketch}.

\subsubsection*{Beyond the \G/G/k/setup}

The techniques underlying our results are very general, applying even beyond the \G/G/k/setup. In \cref{sec:extensions}, we sketch how our results could be extended to other systems.
\* Building on the theme of multiple servers, we consider systems with \emph{multiserver jobs}, which must be simultaneously served by multiple servers \citep{brill_queues_1984}. Due to the prevalence of multiserver jobs in cloud computing, these models have received lots of recent attention \citep{grosof_optimal_2022, harchol-balter_open_2021, wang_zero_2021, hong_sharp_2022, rumyantsev_stability_2017, rumyantsev_three-level_2022, grosof_wcfs_2022}.
\* Building on the theme of non-Poisson arrivals, we consider \emph{batch arrivals} of jobs \citep{chaudhry_first_1983}.
\* Building on the theme of setup times, we consider \emph{generalized vacations}, which model a variety of scenarios where servers are temporarily unavailable \citep{fuhrmann_note_1984, fuhrmann_stochastic_1985}.
\*/

\subsection{Main Obstacles and Key Ideas}
\label{sec:intro:obstacles}

While there is a substantial literature on scheduling in the \M/G/1 \citep[Part~VII]{harchol-balter_performance_2013}, much less is known as soon as we introduce features (a), (b), and~(c). Any two of these, let alone all three, yields a system where optimal scheduling has never been studied. This is perhaps unsurprising in light of the fact analyzing these systems under First-Come First-Served (FCFS) is already very difficult. See \citet{li_simple_2017} and references therein for a review of the \G/G/k, and similarly for \citet{williams_mmk_2022} for the \M/G/k/setup. Even in the \G/G/1, we only know the optimal scheduling policy for known job sizes, when it is SRPT \citep{schrage_proof_1968}.

Fortunately, recent advances analyzing Gittins in the \M/G/k \citep{scully_gittins_2020, scully_new_2022, grosof_optimal_2022} give us hope in the form of a new avenue for analyzing performance. \Citet{scully_gittins_2020} introduce a new queueing identity, now known as \emph{WINE} \citep{scully_new_2022} (\cref{sec:background}),\footnote{%
    \Citet[Section~2.2.3]{scully_new_2022} notes that WINE builds upon several similar identities that precede it \citep{glazebrook_parallel_2001, glazebrook_analysis_2003, righter_extremal_1990, banerjee_heavy_2022}.}
which relates the number of jobs~$N$ in the system to, roughly speaking, the amount of \emph{work} in the system. This is helpful because bounding the amount of work in an \M/G/k, which WINE turns into a bound on $\E{N}$, turns out to be significantly easier than directly bounding $\E{N}$.

WINE holds in any queueing system, including the \G/G/k/setup, so we can and do use the same overall strategy of bounding work, then using WINE to turn the work bound into an $\E{N}$ bound. However, there are significant obstacles to carrying out this strategy in the \G/G/k/setup.

\subsubsection*{Non-Poisson Arrivals}

The first step of our strategy is to analyze the amount of work in the system. The approach that prior work takes to analyze the \M/G/k is to use a \emph{work decomposition law}. This is a result which, in its most general form, relates the amount of work in a generic system with M/G arrivals to the amount of work in a ``resource-pooled'' \M/G/1 experiencing the same arrivals. The prior \M/G/k bound in \cref{eq:intro:mgk_bound} comes from the fact that the \M/G/k and resource-pooled \M/G/1 turn out to have similar amounts of work.
We would like to take a similar approach with the \G/G/k/setup. Unfortunately, the combination of G/G arrivals and multiple servers rules out using existing work decomposition laws (\cref{sec:prior:work_decomposition}).

To overcome this, we prove a \emph{new work decomposition law for G/G arrivals} (\cref{sec:work-decomp}). We view this as the main technical contribution that makes our results possible. Indeed, by combining WINE and our new work decomposition law, the $\loss{(a)}$ and $\loss{(b)}$ terms of \cref{eq:intro:results} follows relatively easily. But the $\loss{(c)}$ term and heavy-traffic analysis present additional obstacles, as discussed below.

\subsubsection*{Setup Times}

One of the key observations behind the prior \M/G/k analysis is that whenever there are $k$ jobs in the system, all servers are occupied. This implies that in terms of work, an \M/G/k never falls too far behind an \M/G/1, where the \M/G/1 experiences the same arrivals and has the same total service capacity. But in an \M/G/k/setup or \G/G/k/setup, there is no analogous limit to how far behind an \M/G/1/setup or \G/G/k/setup system can be, because there is no limit on the number of jobs that might arrive during a setup time.

To overcome this, we perform a novel analysis of setup times to bound the number of arrivals during one setup time \emph{in expectation}. This analysis is the basis of the $\loss{(a)\&(c)}$ term of~\cref{eq:intro:results}.

\subsubsection*{Heavy Traffic Analysis}

The above ideas are enough to prove the bound in~\cref{eq:intro:results}. But the question remains: is the right-hand side of \cref{eq:intro:results} small or large relative to $\inf_\pi \E{N}_{\G/G/k/setup}^\pi$, the performance of the optimal policy? If the latter dominates the former in the $\load \to 1$ limit, then Gittins is optimal in heavy traffic. The right-hand side grows as $O\gp[\big]{\log \frac{1}{1 - \load}}$, so the main challenge is to give a lower bound on the performance of the optimal policy. In prior work on the \M/G/k, one can use SRPT in a resource-pooled \M/G/1 as a lower bound on the optimal policy, which is helpful because the SRPT has been analyzed in heavy traffic \citep{lin_heavy-traffic_2011}. We would like to use the same approach with the \G/G/1 as the lower bound, but SRPT has never been analyzed in the heavy-traffic \G/G/1.

To overcome this, we give the \emph{first heavy-traffic analysis of SRPT in the \G/G/1}. This provides a lower bound on $\inf_\pi \E{N}_{\G/G/k/setup}^\pi$, which turns out to be enough for our purposes. The key ingredient of our heavy-traffic analysis is, once again, our new work decomposition law for G/G arrivals, underscoring its importance as our key technical contribution.

\subsection{Outline}

The rest of the paper is organized as follows:
\* \Cref{sec:prior} reviews related work.
\* \Cref{sec:model} describes our \G/G/k/setup model, and in particular details of the setup times.
\* \Cref{sec:results} presents our main results on Gittins: suboptimality gap bounds (\cref{thm:main_multiserver, thm:main_single-server}) and heavy-traffic optimality (\cref{thm:heavy_traffic}).
\* \Cref{sec:proof-overview} gives a high-level overview of how we prove our main results.
\* \Cref{sec:background} reviews necessary background on Gittins and WINE.
\* \Cref{sec:work-decomp} proves a \emph{new work decomposition law for systems with G/G arrivals}. This is the key technical contribution that underlies all of our other results.
\* \Cref{sec:proofs} proves the suboptimality gap bounds (\cref{thm:main_multiserver, thm:main_single-server}).
\* \Cref{sec:heavy-traffic} proves heavy-traffic optimality (\cref{thm:heavy_traffic}). The key step involves giving \emph{first heavy-traffic analysis of SRPT in the \G/G/1}, a result of independent interest.
\*/

\section{Related Work}
\label{sec:prior}

\subsection{Optimal Scheduling in Queues}
\label{sec:prior:scheduling}

\subsubsection*{Gittins in Single-Server Systems}

The Gittins policy was originally conceived to solve the Markovian multi-armed bandit problem \citep{gittins_multi-armed_2011, gittins_dynamic_1974}, but it was soon adapted to also solve the problem of scheduling in an \M/G/1 to minimize mean number of jobs and similar metrics. See \citet{scully_gittins_2021} and the references therein for a review of Gittins in the \M/G/1. However, aside from some particular cases \citep{schrage_proof_1968, righter_scheduling_1989}, the degree to which Gittins performs well in the \G/G/1 or \G/G/1/setup was previously unknown.

The ``SOAP'' technique of \citet{scully_soap_2018} can be used to analyze the performance of the Gittins policy in the \M/G/1. However, while SOAP is convenient for analyzing any fixed size distribution (e.g. numerically), using it to prove theorems that hold for all size distributions is cumbersome \citep[Section~1.1]{scully_simple_2020}. Moreover, SOAP is limited to the \M/G/1 and, thanks to an extension by \citet{vreumingen_queueing_2019}, the \M/G/1/setup. Analyzing Gittins with G/G arrivals or multiple servers seems to be beyond SOAP \citep[Appendix~B]{scully_optimal_2021}.

\subsubsection*{Gittins in Multi-Server Systems}

Gittins is known to be suboptimal with multiple servers \citep{gittins_multi-armed_2011}, but researchers have studied the extent to which the suboptimality gap is large or small. The earliest results of this type analyzed an \M/M/k with Bernoulli feedback \citep{glazebrook_parallel_2001} and \emph{nonpreemptive} \M/G/k with Bernoulli feedback \citep{glazebrook_analysis_2003}. These results proved (in the latter case, under an additional assumption) constant suboptimality gaps for Gittins in these systems. But both models are somewhat restrictive, excluding, for instance, heavy-tailed job size distributions that are common in computer systems \citep{harchol-balter_exploiting_1997, harchol-balter_network_1996, crovella_self-similarity_1997, park_self-similar_2000, peterson_data_1996}. More recent work, which we discussed in \cref{sec:intro}, overcomes these limitations to bound the performance of Gittins in the \M/G/k for general job sizes, including heavy-tailed sizes \citep{scully_characterizing_2020, grosof_optimal_2022, scully_new_2022}. However, all of the above work assumes M/G arrivals with no server unavailability.

\subsection{Setup Times}

\subsubsection*{Multiserver Models}

A significant line of previous work has studied the \M/M/k/setup with exponential setup times and FCFS scheduling \citep{artalejo_analysis_2005, gandhi_mgk_2009, gandhi_server_2010, gandhi_mgk_2013, gandhi_exact_2014, pender_law_2016}. Among those works, \citet{gandhi_mgk_2009} and \citet{gandhi_mgk_2013}
also demonstrate that their results generalize to \M/G/k/setup with exponential setup times via simulation or analyzing special examples. Recently, \citet{williams_mmk_2022} go beyond exponential setup times, studying \M/M/k/setup with deterministic setup times and FCFS scheduling. However, none of these prior works apply to general setup times, non-Poisson arrivals, or scheduling policies beyond FCFS.

We note that \citet{glazebrook_analysis_2003}, who studies the nonpreemptive \M/G/k with Bernoulli feedback, actually studies a more general model that allows for certain types of server unavailability, such as server breakdowns. However, setup times are not covered by \citet{glazebrook_analysis_2003}. It is likely that more general future work could simultaneously cover setup times, server breakdowns, and other types of server unavailability. See \cref{sec:extensions:generalized_vacations} for additional discussion.

\subsubsection*{Single-Server Models}

Compared with multiserver models, single-server models with setup times are better understood \citep{welch_generalized_1964, doshi_note_1985, bischof_analysis_2001, choudhury_batch_1998, he_flow_1995, li_new_1994, miyazawa_decomposition_1994, fuhrmann_stochastic_1985, fuhrmann_note_1984, doshi_queueing_1986, upadhyaya_queueing_2016}. See \citet{doshi_queueing_1986} for a survey of the work before 1986 and \citep{upadhyaya_queueing_2016} for a more recent survey. These works consider various arrival and service processes, as well as other types of server unavailability in addition to setup times.

However, none of the above works discuss optimal scheduling in the presence of setup times. Progress was made by \citet{vreumingen_queueing_2019}, who obtains the mean response time of Gittins in the \M/G/1/setup as a special case of a more general analysis (\cref{sec:prior:scheduling}). But the analysis does not show that other policies might outperform Gittins, nor does it apply to the \G/G/1/setup.

\subsection{Decomposition Laws in Queues}
\label{sec:prior:work_decomposition}

There is a long tradition of proving work decomposition laws for queueing systems \citep{fuhrmann_note_1984, fuhrmann_stochastic_1985, boxma_pseudo-conservation_1987, miyazawa_decomposition_1994, glazebrook_parallel_2001, glazebrook_analysis_2003, scully_gittins_2020, scully_new_2022, doshi_generalizations_1990}. Most of these laws take the form
\[
    \E{\text{work in complex system with M/G arrivals}}
    = \E{\text{work in M/G/1}} + \E{\text{extra work due to complexity}}.
\]
For example, if the complex system is an M/G/1/setup, the extra work from complexity depends on the setup time distribution. Most work decomposition laws are actually even stronger, holding \emph{distributionally} instead of just in expectation.

We need a work decomposition law where the complexity includes, among other factors, having multiple servers. Such a result for M/G arrivals is relatively recent \citep{scully_gittins_2020, scully_new_2022}, and no such result exists for G/G arrivals. While there are work decomposition laws for G/G arrivals in the literature \citep{doshi_note_1985, doshi_generalizations_1990, miyazawa_decomposition_1994}, to the best of our knowledge, they apply only to single-server models with vacations. To the best of our knowledge, we prove the first work decomposition law for G/G arrivals that holds for multiserver systems like the \G/G/k.

\section{Model}
\label{sec:model}

\subsection{Core Queueing Models: \G/G/k, \G/G/1, \M/G/k, and \M/G/1}
\label{sec:model:ggk}

We consider a \G/G/k queueing model with a single central queue and $k$ identical servers. The system experiences \emph{G/G arrivals}: jobs arrive one-by-one with i.i.d. \textit{interarrival times}, and each job has an i.i.d. \emph{size}, or service requirement. Interarrival times and job sizes are independent of each other. We denote a generic random interarrival time by~$\arri$ and a generic random job size by~$S$.\footnote{%
    This arrival process is often referred to more specifically as GI/GI arrivals, with the ``I'' emphasizing the independence assumption. Under this convention, G/G arrivals include even more general stationary arrival processes where independence does not hold. In this work, we focus only on the independent case, so we write simply ``G/G'' instead of ``GI/GI'' for brevity.}

At any moment of time, a job in the system can be served by one server. Any jobs not in service wait in the queue. Once a job's service is finished, it departs. We follow the convention that each of the $k$ servers has service rate~$1/k$. A job of size~$S$ thus requires $kS$ time in service to finish. This convention gives all systems we study the same maximum total service rate, namely $k \cdot 1/k = 1$, and thereby the same stability condition.

The name ``\G/G/k'' denotes the fact that the system has G/G arrivals and~$k$ servers. When $A$ is exponentially distributed, we write \emph{M/G} in place of G/G, as in ``\M/G/k''.

\subsubsection*{Scheduling Policies}

The scheduling policy decides, at every moment in time, which job is in service at which server. We consider a preempt-resume model where preemption occurs without delay or loss of work.

The scheduling objective is minimizing the \emph{mean number of jobs} in the system. We denote the mean number of jobs in system~SYS under scheduling policy~$\pi$ by $\E{N}_{\text{SYS}}^\pi$, omitting the ``SYS'' and/or ``$\pi$'' if there is no ambiguity. By Little's law \citep{little_littles_2011}, minimizing $\E{N}$ is equivalent to minimizing \emph{mean response time}, the average amount of time a job spends between its arrival and departure.

We use a flexible model of how much the scheduler knows about each job's size (\cref{sec:model:scheduling}). We restrict attention to \emph{non-idling} policies, which are those that never unnecessarily leave servers idle. Nevertheless, our results have implications even for idling policies (\cref{sec:results:remarks}).

As a consequence of frequent preemption, the server can share one server between multiple jobs. We formalize this in \cref{sec:model_extra:scheduler_actions}, but our presentation does not depend on the formal details.

\subsubsection*{Load and Stability}

We write $\lambda = 1/\E{\arri}$ for the average arrival rate and $\load = \lambda \E{S}$ for the system's \emph{load}, or utilization. One can think of $\load$ as the average fraction of servers that are busy. It is clear that $\load < 1$ is a necessary condition for stability (unless both $A$ and $S$ are deterministic), so we assume this throughout.

Some of our results are stated for the \emph{heavy-traffic limit}. For our purposes, this limit, denoted $\load \to 1$, refers to a limit as the job size distribution~$S$ remains constant, and the interarrival time distribution~$A$ is scaled uniformly down with its mean approaching the mean job size. That is, the system with load~$\load$ has interarrival time $A_\rho = A_1 / \rho$ for some fixed distribution~$A_1$, where $\E{A_1} = \E{S}$.

It seems intuitive that $\load < 1$ should be sufficient for stability under non-idling policies, and it is in the \G/G/1 \citep{loynes_stability_1962}. But to the best of knowledge, there are no results characterizing stability of the \G/G/k under complex scheduling policies. Even under FCFS, proving stability of the \G/G/k is not simple, because the system can be stable even when it never empties \citep{kiefer_theory_1955, whitt_existence_1982, sigman_one-dependent_1990, morozov_stability_2021}. Setup times further complicate the matter.

We consider the question of proving stability of the \G/G/k/setup under arbitrary non-idling scheduling policies to be outside the scope of this paper, so we simply assume stability when $\load < 1$. We expect this is indeed the case, giving the initial steps of a proof sketch in \cref{sec:stability_sketch}.

\begin{assumption}
    \label{assump:stability}
    For all $\load < 1$, the \G/G/k/setup is stable under all non-idling scheduling policies, including the Gittins policy.
\end{assumption}

\subsubsection*{Additional Assumption on Interarrival Times}

Our results for G/G arrivals depend on ``how non-Poisson'' arrival times are, which we quantify using the following assumption.

\begin{assumption}
    \label{assump:remabd}
    There exist $\remamin, \remamax \in \R_{\geq 0}$ such that $\E{\arri - a \given \arri > a} \in [\remamin, \remamax]$ for all $a \geq 0$. 
    That is, letting the \emph{interarrival age}~$\agea$ be the time since the last arrival and \emph{residual interarrival time}~$\rema$ be the amount of time until the next arrival, we have
    \[
        \E{\rema \given \agea} \in [\remamin, \remamax] \quad \text{with probability~$1$.}
    \]
    One may always use $\remamin = \inf_{a\geq 0} \E{\arri - a \given \arri > a}$ and $\remamax = \sup_{a\geq 0} \E{\arri - a \given \arri > a}$, so this assumption boils down to the latter being finite.
\end{assumption}

Our results use \cref{assump:remabd} via the quantity $\lambda(\remamax - \remamin)$, which we can think of as measuring ``how non-Poisson'' arrival times are. In the Poisson case, one may use $\remamin = \remamax = 1 / \lambda$, so $\lambda(\remamax - \remamin) = 0$.

Many interarrival distributions~$A$ satisfy \cref{assump:remabd}, such as all phase-type distributions. One can also think of \cref{assump:remabd} as a relaxation of the well-known \emph{New Better than Used in Expectation} (NBUE) property, which is the special case where $\remamax = \E{A}$. The main distributions ruled out by \cref{assump:remabd} are various classes of heavy-tailed distributions, e.g. power-law tails.

\subsection{Setup Times}
\label{sec:model:setup_times}

In addition to the basic \G/G/k model defined above, we also consider models in which servers require \emph{setup times} to transition from idle to busy. We denote these models with an extra ``/setup'', as in \G/G/k/setup. Whenever a server switches from idle to busy, it must first complete an i.i.d. amount of \emph{setup work}, denoted~$\upti$. Like work from jobs, servers complete setup work at rate~$1/k$, so setup \emph{work} $\upti$ results in setup \emph{time} $k \upti$. Setup work amounts are independent of interarrival times and job sizes.

For the purposes of stating our results and proofs in a unified manner, we consider the \G/G/k without setup times to be the special case of the \G/G/k/setup where $\upti = 0$ with probability~$1$.

In our model, a server can be in one of three states:
\* \emph{Setting up}, i.e. doing setup work.
\* \emph{Busy}, i.e. serving a job.
\* \emph{Idle}, i.e. neither serving a job nor doing setup work.
\*/
In the \G/G/1/setup, state transitions are straightforward: the server goes
from setting up to busy when it finishes its setup work,
from busy to idle when no jobs remain in the system, and
from idle to setting up when a job arrives to an empty system.
But in the \G/G/k/setup, the transitions are more complicated. This is because there are several design choices to make, and thus multiple models that might be studied. For example, if we already have one busy server, how many jobs should there be in the queue before we start setting up a second server? For concreteness, we study one particular setup time model, described below, but our work still has implications for alternative models (\cref{sec:results:remarks, sec:extensions:generalized_vacations}).

In the \G/G/k/setup, we use the following setup time model: a server transitions
\* from setting up to busy when it finishes its setup work,
\* from busy to idle when the system has fewer jobs than busy servers, and
\* from idle to setting up when the system has fewer busy or setting up servers than jobs.
\*/
Thus, transitions to setting up are triggered by arrivals, and transitions to idle are generally triggered by departures. Servers transition ``one at a time'', e.g. an arrival triggers at most server to start setting up.

Note that once a setup time begins, it is never canceled, even if the job whose arrival triggered the setup time begins service at another server. Unless another job arrives during the setup time, the server will transition from setting up to busy, then immediately back to idle. Not canceling setup times is a natural modeling choice for some systems, e.g. computer systems where cutting power during startup is undesirable. But our techniques could also be used to analyze setup times that can be canceled (\cref{sec:extensions:generalized_vacations}).

\subsection{What the Scheduler Knows About Jobs' Sizes}
\label{sec:model:scheduling}

We consider a flexible model of the scheduler's knowledge called the \emph{Markov-process job model} \citep{scully_gittins_2020, scully_gittins_2021, scully_new_2022}. In this model, each job has a \emph{state}, which inhabits some \emph{job state space}~$\mathbb{X}$, representing what the scheduler knows about that job. Each job's state evolves as an i.i.d. absorbing continuous-time Markov-process $\{X(t)\}_{t \geq 0}$ on some state space~$\mathbb{X}$, where $X(t)$ is the state of the job after it has received $t \geq 0$ service. That is, a job's state evolves while it is in service but stays static while it is in the queue. There is an extra absorbing state $\top \not\in \mathbb{X}$, corresponding to the job finishing, i.e. jobs exit the system when their state becomes~$\top$.

We call $\{X(t)\}_{t \geq 0}$ the \emph{job Markov process}. We can recover the job size from the job Markov process as
\[
    S = \inf\curlgp{t \geq 0 \given X(t) = \top}.
\]
To clarify, the amount of service~$t$ in $X(t)$ is measured in \emph{work} rather than \emph{time}, so jobs evolve at rate $1/k$ when served in a $k$-server system (\cref{sec:model:ggk}). As discussed in \cref{sec:model_extra:technicalities}, we make some purely technical assumptions on the job Markov process (e.g. r.c.l.l.) to ensure Gittins is well defined.

We assume that the scheduler always knows the state of all jobs in the system, which we denote by $(X_1, \dots, X_N)$. We also assume the scheduler knows the dynamics of the job Markov process, e.g. the size distribution~$S$. A job's state thus encodes everything the scheduler knows about the job. For example, given a job in state~$x$, the scheduler knows its \emph{remaining work}, namely the amount of service the job needs to complete, is distributed as\footnote{%
    Abusing notation slightly, we interpret conditioning $X(0) = x$ as the usual notion of starting the job Markov process from state~$x$. This gets around the corner case where the initial state $X(0)$ is never~$x$.}
\[
    S(x) = \gp[\big]{\inf\curlgp{t \geq 0 \given X(t) = \top} \given X(0) = x}.
\]

Below are two concrete examples of the Markov-process job model. These are extremes: the first is perfect size information, and the other is zero size information beyond knowing the distribution~$S$. For additional examples, including cases where the scheduler has partial size information, see \citet[Section~3]{scully_gittins_2020}.

\begin{example}
    \label{ex:known_model}
    The case of \emph{known sizes} is when a job's state is its remaining work. The state space is $\mathbb{X} = (0, \infty)$, the initial state is distributed as $X(0) \sim S$, and the absorbing state is $\top = 0$. During service, the job's state decreases at rate~$1$. That is, $X(t) = (X(0) - t)^+$. In state~$x$, the remaining work is $S(x) = x$.
\end{example}

\begin{example}
    \label{ex:unknown_model}
    The case of \emph{unknown sizes} is when a job's state is the amount of service it has received so far. The state space is $\mathbb{X} = [0, \infty)$, the initial state is $X(0) = 0$, and the absorbing state is an isolated point~$\top$. During service, the job's state increases at rate~$1$ and has a chance to jump to~$\top$, with the exact chance depending on the distribution of~$S$. That is, $X(t) = t$ until the job completes, after which $X(t) = \top$. In state~$x$, the remaining work is the conditional distribution $S(x) \sim (S - t \given S > t)$.
\end{example}

\subsection{The Gittins Policy}
\label{sec:model:gittins}

The scheduling policy we focus on in this work is the \emph{Gittins} policy (a.k.a. \emph{Gittins index} policy). Gittins is primarily known for the fact that it minimizes $\E{N}$ in the \M/G/1 \citep{gittins_multi-armed_2011, scully_gittins_2021}. In formulas, we abbreviate Gittins to ``$\gittins$'', as in $\E{N}_{\G/G/k/setup}^\gittins$.

The Gittins policy has a relatively simple form. It assigns each job a numerical priority, called a \emph{rank}, where lower rank is better. Gittins always serves the job or jobs of least rank,\footnote{%
    Much literature on the Gittins policy uses the opposite convention, where higher numbers are better. These works typically call a job's priority its \emph{index} \citep{gittins_multi-armed_2011, aalto_gittins_2009, aalto_properties_2011}, which is the reciprocal of its rank \citep{scully_gittins_2021}.}
and it is non-idling, serving as many jobs as the number of available servers allows. Gittins determines ranks using a \emph{rank function}
\[
    \rank_\gittins : \mathbb{X} \to \R_{\geq 0},
\]
assigning $\rank_\gittins(x)$ to a job in state $x \in \mathbb{X}$. A job's rank thus depends only on its own state.

It turns out that our proofs do not directly use the definition of Gittins's rank function. 
As such, we specify the Gittins rank function for the concrete job Markov processes in \cref{ex:known_model, ex:unknown_model}, the latter of which in particular explains the key intuition. We refer the curious reader to \cref{sec:model_extra:gittins_rank} for the general definition, though we emphasize it does not play a direct role in our proofs.

\begin{example}
    \label{ex:known_gittins}
    In the case of \emph{known} job sizes, it turns out that Gittins reduces to SRPT, which always serves the job of least remaining work. A job's rank is thus its remaining work. Recalling from \cref{ex:known_model} that a job's state is its remaining work under known sizes, we simply have $\rank_\gittins(x) = x$.
\end{example}

\begin{example}
    \label{ex:unknown_gittins}
    In the case of \emph{unknown} job sizes, recall from \cref{ex:unknown_model} that a job's state~$x$ is the amount of service it has already received. In this case, the Gittins rank function is \citep{gittins_multi-armed_2011}
    \[
        \rank_\gittins(x) = \inf_{y>x} \frac{\E{\min\{S, y\}-x \given S>x}}{\P{S\leq y \given S>x}}.
    \]
    The intuition for this formula is as follows. Consider a job in state~$x$, and suppose we start serving the job, but decide to ``give up'' if it reaches state~$y$. On the right-hand side, the numerator is the expected amount of service until we either complete the job or give up, and the denominator is the probability the job completes before we give up. The right-hand side is thus a ``service-per-completion'' ratio, giving an expected amount of effort it would take to finish one job in expectation. A job's rank under Gittins is the best service-per-completion ratio one can obtain by optimally choosing the state~$y$ in which to give up.
\end{example}

\section{Main Results}
\label{sec:results}

We now state our main results. All of our results hold under the assumptions of \cref{sec:model}, and in particular \cref{assump:stability, assump:remabd}. As in \cref{sec:intro}, we can view a \G/G/k/setup system, or any special case thereof, by whether it has (a)~multiple servers, (b)~non-Poisson arrivals, and (c)~setup times. Our bounds use the quantities
\[
    \loss{(a)} &= C(k - 1)\log\frac{1}{1 - \load},
    \\
    \loss{(b)} &= \lambda(\remamax - \remamin),
    \\
    \loss{(c)} &= \1(\P{U > 0} > 0) \gp[\big]{2(k - 1) + \arate(\remamax + k\E{\upti_\excess})},
\]
where $C = \frac{9}{8\log 1.5} + 1 \approx 3.775$. The idea is that $\loss{(a)}$ is the loss due to feature~(a), as it is nonzero only for systems with $k \geq 2$ servers, and similarly for $\loss{(b)}$ and $\loss{(c)}$.\footnote{%
    The reason \cref{eq:intro:results} has an $\loss{(a)\&(c)}$ term instead of an $\loss{(c)}$ term is because it summarizes both \cref{thm:main_multiserver, thm:main_single-server}.}

\begin{theorem}\label{thm:main_multiserver}
    The performance gap between the Gittins policy in \G/G/k/setup and the optimal policy in \G/G/1 is bounded by
    \[
        \E{N}_{\G/G/k/setup}^\gittins - \inf_\pi \E{N}_{\G/G/1}^\pi
        \leq \loss{(a)} + \loss{(b)} + \loss{(c)}.
    \]
\end{theorem}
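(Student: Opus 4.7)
The plan is to follow the WINE + work-decomposition strategy of \cref{sec:intro:obstacles}, splitting the gap so that each of features (a), (b), (c) contributes its own loss term. First I would apply the WINE identity (\cref{sec:background}) to rewrite both $\E{N}^\gittins_{\G/G/k/setup}$ and, for each policy $\pi$, $\E{N}^\pi_{\G/G/1}$ as integrals over a rank threshold $r$ of ``relevant work'' quantities --- work carried by jobs of rank at most $r$. WINE is purely an identity and holds in any stationary queue, so it applies equally to both systems. Moving the infimum inside the integral on the \G/G/1 side (using that the per-rank integrand is minimized by any policy that prioritizes the lowest ranks) converts the gap into a pointwise-in-$r$ work comparison.

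Next I would invoke the new work decomposition law of \cref{sec:work-decomp} to compare, at each $r$, the relevant work in \G/G/k/setup under Gittins to the relevant work in the \G/G/1 driven by the same arrivals and the same rank cutoff. The decomposition writes their difference as a sum of (i) a multiserver residual accounting for jobs of rank at most $r$ that Gittins fails to serve because all $k$ servers happen to be busy with strictly-lower-rank work, (ii) a non-Poisson residual arising from the failure of PASTA and controlled by $\remamax - \remamin$ under \cref{assump:remabd}, and (iii) a setup residual accounting for work accumulated while servers are setting up.

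The multiserver residual is integrated over $r$ using the argument of \citet{scully_new_2022}: at most $k-1$ jobs of rank at most $r$ can be queued behind strictly lower-rank jobs already in service, and the standard logarithmic integration yields $\loss{(a)}$. The non-Poisson residual equals $\loss{(b)}$ directly from the statement of the decomposition law. The setup residual requires the novel ingredient: I would bound the expected work-equivalent accumulation of arrivals during a single setup time via a renewal-type computation combining $\E{\upti_\excess}$ with the residual interarrival bound $\remamax$, and absorb boundary effects across the $k$ servers into an additive $2(k-1)$ term, yielding $\loss{(c)}$.

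The hardest step is the setup analysis: in the \M/G/k one counts setup-time arrivals via PASTA, and whenever $N\geq k$ every server is productive, but neither fact survives in the \G/G/k/setup. I expect to handle it by charging each setup period at most once, using $\remamax$ to bound the residual interarrival time seen near a setup and $\E{\upti_\excess}$ for the residual setup work at a typical moment, with the $2(k-1)$ constant absorbing edge cases across the $k$ servers and across the setup-to-idle transition in cycles where no job arrives during the setup.
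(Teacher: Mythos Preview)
Your overall WINE + work-decomposition strategy matches the paper, but the decomposition you describe is missing a term and misattributes another. The paper's \cref{lem:perf-diff-decomp} expresses the gap as a sum of \emph{four} cost terms: a residual interarrival cost~$\mResidual$, a recycling cost~$\mRcy$, an idleness cost~$\mIdle$, and a setup cost~$\mSetup$. Your ``multiserver residual'' seems to be only the idleness piece, and your description of it (``jobs of rank at most~$r$ that Gittins fails to serve because all $k$ servers are busy with strictly-lower-rank work'') does not match either piece: the idleness cost concerns \r-work present while some servers are \emph{not} serving \r-relevant jobs, and the recycling cost concerns \r-irrelevant jobs in service that become \r-relevant again. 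Both are needed: $\mRcy^\gittins$ gives the $(k-1)\log\frac{1}{1-\load}$ term and $\mIdle^\gittins$ gives the remaining $(C-1)(k-1)\log\frac{1}{1-\load}$, and only their sum is $\loss{(a)}$. The ``at most $k-1$ jobs'' fact you cite is actually the key to bounding the \emph{recycling} cost (\cref{lem:error-recycle-all}), not the idleness cost.

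Two smaller points. First, ``moving the infimum inside the integral'' is not how the comparison works and would require a per-rank optimality claim that is not true: in the \G/G/1 under an arbitrary~$\pi$, the \r-work is \emph{not} minimized rank-by-rank. The paper instead subtracts the exact WINE+work-decomposition expressions for the two systems, cancels the common policy-invariant part, and then simply drops the nonnegative terms $\mRcy^\pi$ and $\mIdle^\pi$ on the \G/G/1 side. Second, the $2(k-1)$ in $\loss{(c)}$ is not a single boundary constant from the setup analysis: one $(k-1)$ comes from an extra term in the idleness bound (\cref{lem:error-idle-all}) that appears only when $\P{U>0}>0$, and the other $(k-1)$ comes from the setup cost bound (\cref{lem:error-setup}) via the observation that at most $k-1$ jobs are already present when a setup begins.
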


Note that although \cref{thm:main_multiserver} is not directly about the suboptimality gap of Gittins policy in $\G/G/k/setup$, it still provides an upper bound on the suboptimality gap, because the optimal performance of \G/G/1 is a lower bound to \G/G/k/setup. This is because servers in \G/G/k/setup have speed~$1/k$ (\cref{sec:model:ggk}), so the \G/G/1 can mimic any policy in the \G/G/k/setup through processor sharing and idling.

With that said, in the special case of the non-idling \G/G/1/setup, we can prove a stronger result that drops the $\loss{(c)}$ term by comparing to a \G/G/1/setup instead of a \G/G/1.

\begin{theorem}\label{thm:main_single-server}
    In the \G/G/1/setup, the performance gap between the Gittins policy and the optimal non-idling policy is bounded by
    \[
        \E{N}_{\G/G/1/setup}^\gittins - \inf_{\pi} \E{N}_{\G/G/1/setup}^\pi
        \leq \loss{(b)}.
    \]
    In particular, in the \M/G/1/setup, the Gittins policy minimizes $\E{N}$ among non-idling policies.
\end{theorem}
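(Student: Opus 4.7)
The plan is to specialize the strategy behind Theorem~\ref{thm:main_multiserver} to the single-server case, where both $\loss{(a)}$ and $\loss{(c)}$ vanish, leaving only the non-Poisson correction $\loss{(b)}$. The three main ingredients are the new G/G work decomposition law (Section~\ref{sec:work-decomp}), the WINE identity (Section~\ref{sec:background}), and the known optimality of the Gittins policy in the \M/G/1/setup.

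First, fix an optimal non-idling policy $\pi^*$ in \G/G/1/setup. For any non-idling policy $\sigma$ and rank threshold $r \geq 0$, let $W^\sigma(r)$ denote the rank-$\leq r$ work under $\sigma$, i.e., the total remaining work contributed by jobs of Gittins rank at most~$r$. WINE expresses $\E{N}^\sigma_{\G/G/1/setup}$ as an integral of $\E{W^\sigma(r)}$ against a measure derived from Gittins's rank function. The problem therefore reduces to controlling $\E{W^{\gittins}(r)} - \E{W^{\pi^*}(r)}$ for each $r$.

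Next, apply the new G/G work decomposition law to each of $\sigma \in \{\gittins, \pi^*\}$, writing
\[
    \E{W^\sigma(r)}_{\G/G/1/setup} = \E{W^\sigma(r)}_{\M/G/1/setup} + \Delta^\sigma(r),
\]
where the correction $\Delta^\sigma(r)$ is controlled by $\arate(\remamax - \remamin)$. In the \M/G/1/setup, the busy/idle/setup schedule is determined entirely by the arrival process and job sizes under any non-idling policy, so Gittins's pointwise rank-truncation property extends directly: Gittins minimizes $\E{W^\sigma(r)}_{\M/G/1/setup}$ for every $r$ among non-idling $\sigma$. Combining,
\[
    \E{W^\gittins(r)}_{\G/G/1/setup} - \E{W^{\pi^*}(r)}_{\G/G/1/setup}
    \leq \Delta^\gittins(r) - \Delta^{\pi^*}(r),
\]
and integrating via WINE yields the $\loss{(b)}$ bound. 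The \M/G/1/setup special case then follows because $\remamin = \remamax = 1/\arate$ kills the correction identically.

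The main obstacle is making the previous step precise in two places. The first is establishing that Gittins pointwise minimizes $\E{W^\sigma(r)}_{\M/G/1/setup}$ rather than merely $\E{N}$; this should follow because the classical Gittins optimality proof in the \M/G/1 is inherently rank-truncated, and under non-idling policies the setup epochs form a policy-independent random schedule driven only by the arrival process and work contents. The second is bounding $|\Delta^\gittins(r) - \Delta^{\pi^*}(r)|$ after WINE-integration by exactly $\loss{(b)}$; this is plausible because, with a single server, the total work process $V(t)$ is identical across non-idling policies, so the only policy-dependence in the work decomposition enters through rank-truncation, and the uniform bound on $\E{\rema \given \agea}$ from Assumption~\ref{assump:remabd} is precisely what is needed to close the estimate.
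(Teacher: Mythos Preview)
Your overall plan—apply WINE and the work decomposition law, then show the only surviving term is $\loss{(b)}$—matches the paper's high-level strategy. But your execution takes an unnecessary detour through the \M/G/1/setup, and the key step of that detour is not what the work decomposition law actually provides.

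Specifically, you claim the law yields a decomposition
\[
\E{W^\sigma(r)}_{\G/G/1/setup} = \E{W^\sigma(r)}_{\M/G/1/setup} + \Delta^\sigma(r)
\]
with $\Delta^\sigma(r)$ controlled by $\lambda(\remamax-\remamin)$. Theorem~\ref{lem:work-decomp} does \emph{not} compare a G/G system to an M/G system; it decomposes $\E{W_r}$ within a single system into a policy-invariant piece (depending on $A$ and $S$) plus the policy-dependent piece $\E_\racc{W_r}-\rho_r\E_\racc{\rema}$. The policy-dependent piece involves the dynamics of that particular system, so even under the same policy $\sigma$ it differs nontrivially between the \G/G/1/setup and the \M/G/1/setup, and there is no reason for the cross-system difference to be bounded by $\lambda(\remamax-\remamin)$. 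Your second obstacle—rank-truncated optimality of Gittins in the \M/G/1/setup—is therefore not the bottleneck; the decomposition you would feed it into is the problem.

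The paper avoids this entirely by applying the work decomposition law to both Gittins and the comparison policy $\pi$ \emph{within the same \G/G/1/setup}. The policy-invariant term then cancels exactly, and after WINE integration one is left with the four cost terms of Lemma~\ref{lem:perf-diff-decomp}. With $k=1$, the recycling and idleness costs $\mRcy^\gittins$ and $\mIdle^\gittins$ vanish (Propositions~\ref{lem:error-recycle-all} and~\ref{lem:error-idle-all}), and Proposition~\ref{lem:error-setup-single} shows the setup cost $\mSetup$ is the same for every non-idling policy in the \G/G/1/setup, so $\mSetup^\gittins-\mSetup^\pi=0$. Only $\mResidual^\gittins-\mResidual^\pi$ survives, and Proposition~\ref{lem:remabd} bounds it by $\loss{(b)}$. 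No appeal to Gittins optimality in the \M/G/1/setup is needed; that statement is a \emph{corollary} (take $\remamax=\remamin=1/\lambda$), not an ingredient of the proof.
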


The suboptimality gap in \cref{thm:main_multiserver} is constant when $k = 1$ and $O\gp[\big]{\log \frac{1}{1 - \load}}$ when $k \geq 2$. In both cases, the gap grows more slowly in the $\load \to 1$ limit than $\E{N}_{\G/G/1}^\pi$, implying heavy-traffic optimality.

\begin{theorem}
    \label{thm:heavy_traffic}
    In the \G/G/k/setup, if either $k = 1$ or $\E{S^2 (\log S)^+} < \infty$, and if either $S$ or~$A$ is not deterministic, the Gittins policy is heavy-traffic optimal. Specifically, $\lim_{\rho\to 1} {\E{N}_{\G/G/k/setup}^\gittins}/{\inf_\pi \E{N}_{\G/G/1}^\pi} = 1$. 
\end{theorem}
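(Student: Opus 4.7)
The plan is to sandwich the ratio $\E{N}_{\G/G/k/setup}^\gittins / \inf_\pi \E{N}_{\G/G/1}^\pi$ between $1$ and $1 + o(1)$ as $\rho \to 1$. The upper bound follows by dividing \cref{thm:main_multiserver} through by $\inf_\pi \E{N}_{\G/G/1}^\pi$, giving
\[
    \frac{\E{N}_{\G/G/k/setup}^\gittins}{\inf_\pi \E{N}_{\G/G/1}^\pi} \leq 1 + \frac{\loss{(a)} + \loss{(b)} + \loss{(c)}}{\inf_\pi \E{N}_{\G/G/1}^\pi}.
\]
The matching lower bound of $1$ holds because, as noted immediately after \cref{thm:main_multiserver}, the G/G/1 can emulate any policy in the G/G/k/setup via processor sharing and idling, so $\inf_\pi \E{N}_{\G/G/1}^\pi \leq \E{N}_{\G/G/k/setup}^\gittins$. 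It thus suffices to show that the loss-to-infimum ratio on the right vanishes as $\rho \to 1$.

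The numerator is straightforward to estimate. In the heavy-traffic scaling of \cref{sec:model:ggk}, $\lambda = \rho/\E{S}$ stays bounded, so $\loss{(b)} = \lambda(\remamax - \remamin)$ and $\loss{(c)}$ are both $O(1)$. The only term that grows with $\rho$ is $\loss{(a)} = C(k-1)\log\frac{1}{1-\rho}$, which vanishes when $k = 1$ and is $\Theta\bigl(\log\frac{1}{1-\rho}\bigr)$ when $k \geq 2$. Thus the numerator is $O(1)$ for $k = 1$ and $O\bigl(\log\frac{1}{1-\rho}\bigr)$ for $k \geq 2$.

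The denominator is the crux, and is where the main technical obstacle lies. By Schrage's theorem, SRPT minimizes $\E{N}$ in any G/G/1 with known sizes, so $\inf_\pi \E{N}_{\G/G/1}^\pi = \E{N}_{\G/G/1}^{\mathrm{SRPT}}$, and what I need is a heavy-traffic lower bound on $\E{N}_{\G/G/1}^{\mathrm{SRPT}}$. This is itself a novel result, promised in the overview of \cref{sec:heavy-traffic}. My plan here is to specialize the new work decomposition law of \cref{sec:work-decomp} to SRPT, giving a sharp asymptotic expression for the mean workload of SRPT in the G/G/1, and then convert that workload estimate into a lower bound on $\E{N}_{\G/G/1}^{\mathrm{SRPT}}$ via the WINE identity from \cref{sec:background}. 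The two hypotheses of \cref{thm:heavy_traffic} then handle the two cases cleanly: non-determinism of $S$ or $A$ rules out the trivial D/D/1 (whose queue never builds up) and thereby guarantees $\E{N}_{\G/G/1}^{\mathrm{SRPT}} \to \infty$, which is enough for $k = 1$ since the numerator is $O(1)$ there; and the moment condition $\E{S^2(\log S)^+} < \infty$ should supply the regularity needed in the work-decomposition computation to conclude that $\E{N}_{\G/G/1}^{\mathrm{SRPT}}$ grows strictly faster than $\log\frac{1}{1-\rho}$, letting $\loss{(a)}$ be absorbed as a lower-order correction when $k \geq 2$. The hard part is the heavy-traffic SRPT analysis itself; once that estimate is in hand, \cref{thm:heavy_traffic} follows from the routine sandwich above.
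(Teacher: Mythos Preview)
Your proposal is correct and follows essentially the same route as the paper: sandwich the ratio, bound the numerator via \cref{thm:main_multiserver}, and lower-bound the denominator by $\E{N}_{\G/G/1}^{\text{SRPT}}$ (note that in general this is only a \emph{lower} bound on $\inf_\pi$, not an equality, since SRPT may use more size information than the job model allows---but the inequality is the direction you need). The paper's execution of the SRPT step goes through \cref{lem:srpt-gg1-vs-mg1}, which uses WINE and the work decomposition law exactly as you outline to show that $\E{N}_{\G/G/1}^{\text{SRPT}}$ is within a constant factor of $\E{N}_{\M/G/1}^{\text{SRPT}}$ in heavy traffic, and then invokes known \M/G/1 lower bounds to obtain the required $\omega(1)$ growth for $k=1$ and $\omega\bigl(\log\frac{1}{1-\rho}\bigr)$ growth under the moment condition for $k\geq 2$.
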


We prove this result in \cref{sec:heavy-traffic}. The main obstacle is showing a lower bound on $\E{N}_{\G/G/1^\pi}$. We use SRPT as a lower bound, so the first step of the proof is to analyze SRPT in the heavy-traffic \G/G/1 (\cref{lem:srpt-gg1-vs-mg1}). We find its performance is within a constant factor of SRPT in the heavy-traffic \M/G/1.

\subsection{Remarks on Main Results}
\label{sec:results:remarks}

\subsubsection*{Alternative Setup Time Models}

Because \cref{thm:main_multiserver, thm:heavy_traffic} compare Gittins in the \G/G/k/setup to the optimal policy in a \G/G/1, it also effectively compares Gittins under our setup time model to Gittins in the \G/G/k with essentially any other setup time model. This is because the \G/G/1 serves as a lower bound for alternative setup time models, not just our specific \G/G/k/setup. The takeaway is that changing the setup time model does not significantly impact performance in heavy traffic, which makes intuitive sense: servers seldom set up if they are usually busy.

\subsubsection*{Idling Policies}

We say in \cref{sec:model:ggk} that we only consider non-idling policies, but \cref{thm:main_multiserver} still compares Gittins to idling policies. This is because the optimal policy in a \G/G/1 is clearly non-idling, and, by the discussion above, it gives a lower bound on any policy, idling or non-idling, in the \G/G/k/setup.

Why, then, does \cref{thm:main_single-server} only compare to the optimal non-idling policy? This is because in the \G/G/1/setup, idling the server can change when setup times occur. By idling with one job in the queue, one can effectively control when setup times occur by choosing when to start the job, without waiting for an arrival. Hypothetically, this could improve performance in the \G/G/1/setup. That is, idling effectively allows a policy to use an alternative setup time model, so we rule it out.

\subsubsection*{Opportunities for a Tighter Bound}

The bound shown in \cref{thm:main_multiserver} represents a trade-off between proving a tight bound and stating the result simply. We prioritized making the statement as simple as possible while ensuring the $\loss{(a)}$ term matches the bound for Gittins in the \M/G/k from prior work \citep{scully_new_2022}. But there are at least two clear avenues for tightening our bounds.

First, there are other bounds on Gittins in the \M/G/k \citep{scully_gittins_2020, grosof_optimal_2022}, which can be better than $\loss{(a)}$ in some cases. We believe that one may take $\loss{(a)}$ to be the minimum of these bounds, but doing so would complicate the result and proof without substantially changing the main takeaway.

Second, our bound is loose in light traffic. We should have $\E{N} \to 0$ as $\load \to 0$, but our $\loss{(b)}$ and $\loss{(c)}$ terms remain nonzero at all loads. One can sharpen our analysis for the special case of the \M/G/k/setup to obtain a suboptimality gap that becomes zero in the $\load \to 0$ limit. But we doubt even this improved bound is very tight at low loads, so we omit the extra casework.

\section{Proof Overview}\label{sec:proof-overview}

In this section, we give an overview of the proofs of our main results: bounds on Gittins's suboptimality gap (\cref{thm:main_multiserver,thm:main_single-server}) and Gittins's heavy-traffic optimality (\cref{thm:heavy_traffic}). At a high level, our proofs work by combining two queueing identities: \emph{WINE}, which is from prior work; and a novel \emph{work decomposition law}, which is built on similar decomposition results from prior work (\cref{sec:prior:work_decomposition}).

The first tool, WINE (\cref{lem:wine}), expresses the mean number-in-system in terms of \emph{mean \r-work}~$\E{W_r}$ \citep{scully_new_2022, scully_gittins_2020, scully_gittins_2021}:
\[
    \E{N} = \int_0^\infty \frac{\E{W_r}}{r^2} \d{r}.
\]
A system's \r-work $W_r$ is the total service required to serve all jobs in the system until they all either complete or reach rank greater than~$r$, as determined by $\rank_\gittins$ (\cref{sec:model:scheduling}). For example, $\infty$-work is the total remaining work of all jobs, which we call \emph{total work} or simply \emph{work}. See \cref{sec:background} for details.

The second tool, the work decomposition law (\cref{lem:work-decomp}), implies bounds on $\E{W_r}$ under any policy, including Gittins. Combining this with WINE yields bounds on~$\E{N}$. Our proof thus boils down to three steps:
\* Proving the work decomposition law (\cref{sec:proof-overview:work-decomp}).
\* Using the work decomposition law to bound Gittins's suboptimality gap (\cref{sec:proof-overview:suboptimality-gap}).
\* Using the suboptimality gap bounds to show Gittins is heavy-traffic optimal (\cref{sec:proof-overview:heavy-traffic}).
\*/

\subsection{New Tool: Work Decomposition Law for G/G Arrivals}
\label{sec:proof-overview:work-decomp}

Our work decomposition law, \cref{lem:work-decomp}, characterizes mean \r-work $\E{W_r}$ in the \G/G/k/setup. For simplicity of presentation, below we focus on the special case of mean total work~$\E{W}$.

\Cref{lem:work-decomp} implies that in the \G/G/k/setup under any policy~$\pi$,
\[
    \E{W}^\pi - \E{W}_{\G/G/1} \leq \frac{\E{J_\idle W}^\pi}{1-\load} + \frac{\E{J_\setup W}^\pi}{1-\load} + \load (\remamax - \remamin). \label{eq:work-decomp-for-total-work}
\]
Above, $\E{W}_{\G/G/1}$ is the mean work in a non-idling G/G/1, which is policy-invariant; and $J_\idle$ and $J_\setup$ are the fraction of idle and setting-up servers, respectively. Flipping the sign on the $\load (\remamax - \remamin)$ term yields a lower bound instead of an upper bound.

The work decomposition law decomposes work $\E{W}^\pi$ into the policy-invariant term $\E{W}_{\G/G/1}$, plus error terms that can depend on the policy~$\pi$. Each error term characterizes the consequence of a complicating factor that \G/G/k/setup has on the top of the \G/G/1 system:
\*[(a)] The first term is due to having multiple servers. It vanishes when $k=1$, as then $J_\idle=0$ if $W>0$.\footnote{%
    When generalizing \cref{eq:work-decomp-for-total-work} from total work to \r-work, there are actually two terms due to having multiple servers. But both vanish when $k=1$.}
\* The second term is due to the setup time. It vanishes if servers do not need setup, as then $J_\setup = 0$.
\* The third term is due to non-Poisson arrivals. It vanishes for Poisson arrivals, as then $\remamax = \remamin$.
\*/

\subsubsection*{How We Prove the Work Decomposition Law}

The proof of work decomposition laws in prior work involves viewing $W$ as a process in the steady state and analyzing its continuous changes and jumps. 
This strategy works well in M/G systems, because all times have an equal chance of seeing $W$ jump up due to an arrival.
But in G/G systems, the chance of having an arrival in the next moment depends on $\agea$, the amount of time since the previous arrival. The jumps of $W$ are thus more complicated to analyze.

The key idea in our proof is to smooth out the non-constant jumping rate of $W$.
Specifically, we consider the process $W - \rho \rema$, which only differs from $W$ by one interarrival time.
This process decreases at a constant rate of $1-\rho$. When an arrival happens, the process jumps, but the expected change is $\E{S} - \rho \E{A} = 0$. Therefore, arrivals only have a ``second-order'' effect on~$W$, which makes them easier to analyze. This idea builds upon similar smoothing approaches in recent queueing literature \citep{braverman_heavy_2017, miyazawa_diffusion_2015}.

\subsection{From Work Decomposition to Suboptimality Gap Bounds}
\label{sec:proof-overview:suboptimality-gap}

We focus here on proving \cref{thm:main_multiserver}, commenting only briefly on the similar proof of \cref{thm:main_single-server}.

Combining our work decomposition law with WINE gives a formula for Gittins's suboptimality gap that has the same types of error terms as \eqref{eq:work-decomp-for-total-work}. Each error term in the work decomposition law will result in one term in the suboptimality gap $\ell_{(a)} + \ell_{(b)} + \ell_{(c)}$ in \cref{thm:main_multiserver}, after doing the integration and applying some additional treatments that are specific to each term.

Among the three error terms, $\ell_{(a)}$ can be derived similarly to prior work on the \M/G/k \citep{scully_gittins_2020, scully_new_2022, grosof_optimal_2022}, and $\ell_{(b)}$ follows from \cref{assump:remabd}. But the term corresponding to setup, $\ell_{(c)}$, requires a new analysis.
We demonstrate the intuition by bounding $\E{J_\setup W}$ in \cref{eq:work-decomp-for-total-work}. First, we write $\E{J_\setup W} = \frac{1}{k} \sum_{i = 1}^k \E{J_{\setup,i} W}$, where $J_{\setup, i} = \1(\text{server $i$ is setting up})$. Observe that 
\[
    \E{J_{\setup,i} W} = \P{J_{\setup,i}=1} \, \E{W \given J_{\setup,i}=1}.
\]
Intuitively, $\P{J_{\setup,i}=1}$ should be diminishing as the load gets heavy because the queue length will get longer the server~$i$ will be turned off less frequently. The second factor, $\E{W \given J_{\setup,i}=1}$, should be bounded because given that the server $i$ is setting up, the work in the system should be no more than the work that arrives during the setup, plus the work that already exists when the setup happens.

For the proof of \cref{thm:main_single-server}, which gives a tighter bound for the single-server case, we apply WINE and work decomposition law in the same way as above. We will get an expression for $\E{N}^\gittins_{\G/G/1/setup}$ in terms of one $\G/G/1$ term, and two error terms corresponding to non-Poisson arrivals and setup times. Instead of analyzing the setup term as in the proof of \cref{thm:main_multiserver}, we make the simple observation that the setup term is the same for all non-idling policies, so it does not contribute to the suboptimality gap.

\subsection{From Suboptimality Gap Bounds to Heavy-Traffic Optimality}
\label{sec:proof-overview:heavy-traffic}

\Cref{thm:main_multiserver} provides an upper bound on the suboptimality gap of Gittins policy in \G/G/k/setup. To show that the suboptimality gap is small compared with $\inf_\pi \E{N}_{\G/G/k/setup}^\pi$ and establish heavy-traffic optimality of the Gittins policy, we need a lower bound on $\inf_\pi \E{N}_{\G/G/k/setup}^\pi$. This lower bound can be obtained by analyzing $\E{N}_{\G/G/1}^{\text{SRPT}}$ because SRPT gives the optimal number-in-system in \G/G/1 with known job sizes \citep{schrage_proof_1968}, which is no more than the optimal number-in-system in \G/G/k/setup achievable by a policy that does not know the job size. 

We use WINE and work-decomposition law, in a similar way as in the proofs in the suboptimality gaps, to connect SRPT's performance in the \G/G/1 to its performance in the \M/G/1. Our end result (\cref{lem:srpt-gg1-vs-mg1}) shows that $\E{N}_{\G/G/1}^{\text{SRPT}}$ is a constant factor away from $\E{N}_{\M/G/1}^{\text{SRPT}}$ as $\rho\to 1$. This lets us to use the known heavy-traffic asymptotics of SRPT in the \M/G/1 \citep{lin_heavy-traffic_2011} to lower bound $\E{N}_{\G/G/1}^{\text{SRPT}}$ and thus show Gittins's heavy-traffic optimality in the \G/G/k/setup.

\section{Background on WINE and \r-Work}\label{sec:background}

A queueing system's \emph{work}~$\wk$ is the total remaining work of all jobs in the system: $\wk = \sum_{i = 1}^N S(X_i)$, where $S(X_i)$ is the remaining work of job~$i$ (\cref{sec:model:scheduling}). We define \r-work similarly: we first define the remaining \r-work of a job, then define the system's \r-work to be the sum of all jobs' remaining \r-work.

\begin{definition}
    \label{def:r-work}
    Let $r \geq 0$. The \emph{remaining \r-work} of a job in state~$x$, denoted $S_r(x)$, is the amount of service it needs until it either finishes or reaches a state whose rank is at least~$r$:
    \[
        S_r(x)
        &= \text{amount of service a job starting at~$x$ needs to finish or reach rank at least~$r$} \\
        &= \gp[\big]{\inf\curlgp{t \geq 0 \given X(t) = \top \mathrel{\text{or}} \rank_\gittins(X(t)) \geq r} \given X(0) = x}. 
    \]
    A system's \emph{\r-work}, denoted~$\wk_r$, is the sum of the remaining \r-work of all jobs in the system: $\wk_r = \sum_{i = 1}^N S_r(X_i)$. 
\end{definition}

We now present the WINE identity. It holds for any scheduling policy that has access to only the current and past system states (\cref{sec:model:scheduling}). For concreteness, we state WINE for our specific queueing model, but it holds in essentially any system which uses the Markov-process job model.

\begin{lemma}[WINE {\citep[Theorem~6.3]{scully_gittins_2020}}]\label{lem:wine} In the \G/G/k/setup under any scheduling policy,
    \[
         N &= \int_0^\infty \frac{\E{\wk_r \given X_1, \dots, X_N}}{r^2} \d{r},
         & 
         \E{N} &= \int_0^\infty \frac{\E{\wk_r}}{r^2} \d{r}.
    \]
\end{lemma}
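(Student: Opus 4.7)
\emph{Proof plan.} The plan is to reduce the two displayed identities to a per-job statement, rewrite that statement as an integral along the job's service trajectory, and then invoke a defining property of the Gittins rank function (the hard part, which is proved in prior work).

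First I would expand $\wk_r = \sum_{i=1}^N S_r(X_i)$ using \cref{def:r-work}. Since each $S_r(X_i)$ depends only on the Markov process of job~$i$ started from $X_i$, linearity of conditional expectation and Tonelli (all integrands are nonnegative) give
\[
    \int_0^\infty \frac{\E{\wk_r \given X_1, \dots, X_N}}{r^2} \d{r}
    = \sum_{i=1}^N \int_0^\infty \frac{\E{S_r(X_i) \given X_i}}{r^2} \d{r}.
\]
Both displays of the lemma then reduce to proving the per-job identity $\int_0^\infty \E{S_r(x)}/r^2 \, \d{r} = 1$ for every $x \in \mathbb{X}$; the unconditional version follows from the conditional one by a further expectation over $(X_1,\dots,X_N)$.

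Next I would rewrite $S_r(x)$ as a Lebesgue integral along the service axis. Let $\tau = \inf\curlgp{t \geq 0 \given X(t) = \top}$ be the completion time of the job started from $X(0)=x$, and let $P(t) = \sup_{0 \leq s \leq t} \rank_\gittins(X(s))$ be the running peak rank along the trajectory. The definition of $S_r(x)$ in \cref{def:r-work} translates to $S_r(x) = \int_0^\tau \indic{P(t) < r} \d{t}$, since the job has not yet been stopped at time $t$ iff it has not completed and no earlier state has already had rank at least~$r$. Substituting and swapping the order of integration via Tonelli,
\[
    \int_0^\infty \frac{S_r(x)}{r^2} \d{r}
    = \int_0^\tau \int_{P(t)}^\infty \frac{1}{r^2} \d{r} \d{t}
    = \int_0^\tau \frac{1}{P(t)} \d{t}.
\]
The per-job identity thus reduces to showing $\E{\int_0^\tau (1/P(t)) \d{t} \given X(0) = x} = 1$.

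The main obstacle is this last identity, which is where the argument stops being generic and starts using the particular shape of $\rank_\gittins$. Heuristically, $1/\rank_\gittins(X(t))$ is the instantaneous rate at which ``expected completion reward'' accrues during service, and $\rank_\gittins$ is constructed precisely so that these rates integrate in expectation to one completion over the whole trajectory. Since our WINE statement is a restatement of \citet[Theorem~6.3]{scully_gittins_2020}, I would invoke that result (which derives the identity from the formal Gittins-rank definition given in \cref{sec:model_extra:gittins_rank}) rather than redo their calculation here.
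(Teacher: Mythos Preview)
Your proposal is correct and matches the paper's approach: the paper does not prove \cref{lem:wine} at all but simply cites it from prior work, noting only that it follows from single-job WINE (\cref{lem:single-job-wine}), which is itself cited. Your reduction from system-level WINE to the per-job identity via linearity and Tonelli is exactly the ``follows from'' step the paper leaves implicit, and your further unpacking of single-job WINE into the $\int_0^\tau 1/P(t)\,\d{t}$ form is a correct sketch of the cited argument before you, like the paper, defer the final Gittins-specific step to \citet{scully_gittins_2020}.
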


WINE, which integrates the entire system's \r-work to get the number of jobs, follows from a more basic identity, sometimes called ``single-job WINE'' \citep{scully_new_2022}, which integrates a single job's remaining \r-work.

\begin{lemma}[Single-Job WINE {\citep[Lemma~6.2]{scully_gittins_2020}}]\label{lem:single-job-wine}
    For any job state $x \in \mathbb{X}$, we have $\int_0^\infty \frac{\E{S_r(x)}}{r^2} \d{r} = 1$. 
\end{lemma}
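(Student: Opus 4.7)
The plan has three parts: a pathwise Tonelli reduction, a one-step recursion set up using the defining optimality property of the Gittins rank, and a monotone-convergence argument to close the recursion. For the first, let $R(t) = \sup_{0 \leq s \leq t} \rank_\gittins(X(s))$ be the running maximum rank, with the convention $R(t) = +\infty$ once $X(t) = \top$. Then \cref{def:r-work} gives $S_r(x) = \inf\curlgp{t \geq 0 \given R(t) \geq r}$, so $\curlgp{S_r(x) > t} = \curlgp{R(t) < r}$. Swapping the order of integration and using $\int_{R(t)}^\infty r^{-2} \d{r} = 1/R(t)$ (with $1/\infty = 0$) gives, pathwise, $\int_0^\infty S_r(x) r^{-2} \d{r} = \int_0^{S(x)} \d{t}/R(t)$, so it suffices to show
\[
    V(x) := \E{\int_0^{S(x)} \frac{\d{t}}{R(t)} \given X(0) = x} = 1.
\]

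For the recursion, let $r_0 = \rank_\gittins(x)$ and $\sigma = \inf\curlgp{t \geq 0 \given X(t) = \top \text{ or } \rank_\gittins(X(t)) > r_0}$. The defining optimality property of the Gittins rank (cf.\ the general definition deferred to \cref{sec:model_extra:gittins_rank}) is that this $\sigma$ attains the infimum in $\rank_\gittins(x) = \inf_\tau \E{\tau \given X(0) = x}/\P{X(\tau) = \top \given X(0) = x}$, yielding the identity $\E{\sigma \given X(0) = x} = r_0 \cdot \P{X(\sigma) = \top \given X(0) = x}$. Since $R(t) = r_0$ on $[0, \sigma)$, and the running maximum after $\sigma$ coincides with that of a fresh process started at $X(\sigma)$ (because $\rank_\gittins(X(\sigma)) > r_0$ whenever $X(\sigma) \neq \top$), the strong Markov property combined with this Gittins identity gives
\[
    V(x) = \frac{\E{\sigma \given X(0) = x}}{r_0} + \E{V(X(\sigma)) \indic{X(\sigma) \neq \top} \given X(0) = x} = \P{X(\sigma) = \top \given X(0) = x} + \E{V(X(\sigma)) \indic{X(\sigma) \neq \top} \given X(0) = x}.
\]

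Setting $U(x) := V(x) - 1$ collapses this to $U(x) = \E{U(X(\sigma)) \indic{X(\sigma) \neq \top} \given X(0) = x}$. Iterating along the strictly increasing sequence of rank-exceedance times $\sigma_0 = 0 < \sigma_1 = \sigma < \sigma_2 < \cdots$ yields $U(x) = \E{U(X(\sigma_N)) \indic{X(\sigma_N) \neq \top} \given X(0) = x}$ for every $N$. Since $\sigma_N \uparrow S(x)$ almost surely (absorption into $\top$ in finitely many stages), monotone convergence gives $\E{\int_{\sigma_N}^{S(x)} \d{t}/R(t) \given X(0) = x} \downarrow 0$, which by the strong Markov property equals $\E{V(X(\sigma_N)) \indic{X(\sigma_N) \neq \top} \given X(0) = x}$; combined with $\P{X(\sigma_N) \neq \top \given X(0) = x} \to 0$, the right-hand side of the iterated identity vanishes, so $U(x) = 0$ and hence $V(x) = 1$.

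The main obstacle is the Gittins identity $\E{\sigma \given X(0) = x} = r_0 \cdot \P{X(\sigma) = \top \given X(0) = x}$ used in the recursion step, i.e.\ identifying the first-rank-exceedance stopping time as optimal for the completion-cost ratio in the general Markov-process job model. This is the nontrivial content of Gittins index theory (see, e.g., \citep{gittins_multi-armed_2011, scully_gittins_2020}); once it is in hand, the remainder of the argument is a clean combination of Tonelli, the strong Markov property, and monotone convergence.
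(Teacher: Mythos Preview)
The paper does not prove this lemma, citing \citep[Lemma~6.2]{scully_gittins_2020} instead, so there is no in-paper argument to compare against. Your Tonelli reduction to $V(x) = \E{\int_0^{S(x)} \d{t}/R(t)}$ is correct, and the recursion built on the Gittins optimal-stopping identity is the right idea; you also rightly flag that identity as the nontrivial ingredient.

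The gap is in the iteration. You need $\rank_\gittins(X(\sigma)) > r_0$ on $\{X(\sigma) \neq \top\}$ so that the running maximum restarts at~$\sigma$, but this fails whenever the rank increases \emph{continuously} from its initial value. Concretely, in the unknown-size model (\cref{ex:unknown_model}) with $S$ having decreasing hazard rate, $\rank_\gittins$ is strictly increasing in age, so $\sigma = \inf\{t \geq 0 : \rank_\gittins(X(t)) > r_0\} = 0$ while $\rank_\gittins(X(0)) = r_0 \not> r_0$. The recursion then reads $V(x) = 0 + V(x)$, every $\sigma_N$ stays at~$0$, and your claim that the $\sigma_N$ are strictly increasing with $\sigma_N \uparrow S(x)$ fails outright; the same accumulation issue can arise even when $\sigma_1 > 0$ if successive rank-exceedance times pile up before absorption. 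The natural repair is to replace the discrete iteration by a level-by-level argument: writing $p(r)$ for the probability of completion before the running-max rank reaches~$r$, the Gittins identity applied at each rank level gives the Stieltjes relation $\d\E{S_r(x)} = r\,\d{p(r)}$, whence $\int_0^\infty r^{-2}\E{S_r(x)}\,\d{r} = \int_0^\infty r^{-2}\int_0^r s\,\d{p(s)}\,\d{r} = \int_0^\infty \d{p(s)} = 1$ by Tonelli. This is your recursion in integral form and sidesteps the accumulation problem, though making the differential Gittins identity rigorous in the general Markov-process job model still carries the same burden you acknowledged at the end.
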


One subtlety about WINE is that while it applies to any scheduling policy, the definition of \r-work uses Gittins's rank function. As a general rule, this makes analyzing Gittins's performance using WINE easier than analyzing other policies' performance using WINE, particularly when proving upper bounds, though there are some exceptions \citep{scully_uniform_2022, scully_new_2022}. Our work is no exception: we prove our main results by upper bounding Gittins's \r-work and lower-bounding the optimal policies' \r-work.

\subsection{Additional Definitions for Reasoning About \r-Work}
\label{sec:background:r-stuff}

WINE reduces the problem of analyzing the steady-state mean number of jobs $\E{N}$ to the problem of analyzing steady-state mean \r-work $\E{W_r}$. In order to analyze \r-work, we need to understand the means by which the amount of \r-work in the system changes over time. This section introduces the standard concepts and vocabulary used to discuss \r-work \citep{scully_new_2022, scully_gittins_2020, scully_soap_2018}.

The definitions in this section are parameterized by a rank $r \geq 0$, as denoted by a prefix ``\r-''. We often drop this prefix when the rank~$r$ is clear from context or not important to the discussion.

\subsubsection*{Relevant, Irrelevant, Fresh, and Recycled Jobs}

We call a job \emph{\r-relevant} whenever its rank is less than~$r$. Otherwise, the job is \emph{\r-irrelevant}.

Whether a job is \r-relevant or \r-irrelevant varies over time. Consider one job's journey through the system. When the job arrives, it may be either, depending on its initial state~$X(0)$. As the job is served, its rank can go up and down, so it may alternate between \r-relevant and \r-irrelevant, possibly multiple times, before eventually finishing and exiting the system.

From the above discussion, it is evident that there are two ways for the amount of \r-work in a system to increase. Both are important, so we introduce terminology for discussing both.

\begin{definition}\label{def:r-fresh}
    We call an \r-relevant job \emph{\r-fresh} if it has been \r-relevant ever since its arrival. That is, new arrivals that are initially \r-relevant are \r-fresh until they either finish or become \r-irrelevant.
    \* We write $S_\rfresh = S_r(X(0))$ for the random amount of service during which a newly arrived job is \r-fresh. Arriving jobs may be \r-irrelevant, so it may be that $S_\rfresh = 0$ with nonzero probability.
    \* We call $\load_\rfresh = \lambda \E{S_\rfresh}$ the \emph{\r-fresh load}. It is the average rate \r-work is added by new arrivals.
    \*/
\end{definition}

\begin{definition}
    \label{def:r-recycled}
    We call an \r-relevant job \emph{\r-recycled} if it was \r-irrelevant at some point in the past. We refer to the moment a job switches from \r-irrelevant to \r-recycled as an \emph{\r-recycling}.
    \* We write $\lambda_\rrcy$ for the average rate of \r-recyclings. 
    \* We write $S_\rrcy$ for the random amount of \r-work added by a single \r-recycling.
    \* We call $\load_\rrcy = \lambda_\rrcy \E{S_\rrcy}$ the \emph{\r-recycled load}. It is the average rate \r-work is added by \r-recyclings.
    \*/
\end{definition}

Note that \r-recyclings can only occur when an \r-irrelevant job is in service. This is because a job's rank depends only on its state, which only changes during service (\cref{sec:model:scheduling, sec:model:gittins}).

We assume for ease of presentation that $\lambda_\rrcy < \infty$ for all $r \geq 0$. This is actually not always the case, but the assumption can be straightforwardly relaxed by appealing to more advanced Palm calculus techniques. See \citet[Appendix~E]{scully_gittins_2020} for details.

\subsubsection*{Server States}

Finally, we need notations for discussing how many servers are doing \r-work or setting up.

\begin{definition}\label{def:J}
    \leavevmode
    \*[beginpenalty=10000] We write $J_r$ for the fraction of servers that are serving \r-relevant jobs.\footnote{%
        If processor sharing is occurring, we interpret $J_r$ as the total service rate assigned to \r-relevant jobs (\cref{sec:model_extra:scheduler_actions}).}
    Note that $\E{J_r} = \load_\rfresh + \load_\rrcy$.
    \* We write $J_\setup$ for the fraction of servers that are setting up.
    \*/
\end{definition}

\section{Work Decomposition Law for G/G Arrivals}\label{sec:work-decomp}

In this section, we introduce a new formula for the expected \r-work in systems with G/G arrivals, which we call the \emph{work decomposition law}. The formula decomposes mean \r-work into multiple terms, where each term is either fixed or easy to analyze.
To present the formula and its proof, we need to first review a concept called \emph{Palm expectation}. We give an intuitive review that suffices for our purposes below, referring the reader to \citet{miyazawa_rate_1994} and \citet{baccelli_elements_2003} for more formal treatments.

\subsection{Palm Expectation and Notation}
\label{sec:work-decomp:palm}

Consider the amount of work in the system as seen by arriving jobs. Intuitively speaking, to define the long-run average as seen by arrivals, we would like to start with the steady-state average, then condition on the event ``an arrival is occurring''. However, the probability that an arrival occurs at any given point in time is~$0$, so we cannot naively apply the usual conditional expectation formula. Palm expectations solve this problem, giving a way of formally defining a notion that corresponds to conditioning on events like ``an arrival is occurring''.
We use two main Palm expectations in our analysis of \r-work:
\* The \emph{Palm expectation at arrivals}, denoted~$\E_\arv{\cdot}$.
\* The \emph{Palm expectation at \r-recyclings}, denoted~$\E_\rrcy{\cdot}$.
\*/
That is, $\E_\arv{V}$ is the expectation of random quantity~$V$, which may depend on the system state, sampled at the time ``just before'' an arrival occurs; and $\E_\rrcy{V}$ is the same, but for \r-recyclings instead of arrivals.

Although we define $\E_\arv{\cdot}$ and $\E_\rrcy{\cdot}$ as referring to the moment ``just before'' an event, it is also helpful to be able to refer to the quantity ``just after'' the event. To facilitate this, we use the following notation:
\* Within $\E_\arv{\cdot}$, we denote the remaining \r-work of the arriving job by~$S_\rfresh$.
\** By our independence assumptions (\cref{sec:model:ggk}), $S_\rfresh$ is independent of the system state.
\* Within $\E_\rrcy{\cdot}$, we denote the remaining \r-work of the job being recycled by~$S_\rrcy$.
\** In general, $S_\rrcy$ is \emph{not} independent of the system state, because recyclings are caused by events happening within the system.
\*/

\subsubsection*{A Notation Shortcut}
In addition to the Palm expectation, we also define the notation $\E_\racc{\cdot}$ as
\[
    \label{eq:r-acc}
    \E_\racc{V} = \frac{\E{(1-J_r) V} + \rate_\rrcy \E_\rrcy{\serv_\rrcy V}}{1-\load_\rfresh},
\]
where $J_r$ is the fraction of servers that are busy with \r-relevant jobs (\cref{sec:background:r-stuff}). One can interpret $\E_\racc{\cdot}$ as a type of Palm expectation. 
For instance, it behaves like an expectation in the sense that $\E_\racc{v} = v$ for deterministic values~$v$. But for our purposes, it suffices to understand $\E_\racc{\cdot}$ as simply a notation shortcut.

\subsubsection*{Excess Distributions}

We introduce a piece of notation that occurs frequently in queueing and renewal theory \citep{asmussen_applied_2003, harchol-balter_performance_2013}, including in the statement of our work decomposition law below.

\begin{definition}
    \label{def:excess}
    Given a nonnegative distribution~$V$, we define its \emph{excess}, denoted~$V_\excess$, to be the distribution with tail\footnote{%
    As a corner case, if $V = 0$ with probability~$1$, we let $V_\excess = 0$ with probability~$1$.}
    \[
        \P{V_\excess > t} &= \int_t^\infty \frac{\P{V > u}}{\E{V}} \d{u},
        &
        &\text{which has mean}
        &
        \E{V_\excess} &= \frac{\E{V^2}}{2\E{V}}.
    \]
\end{definition}

\subsection{Statement and Proof of Work Decomposition Law}

Now we are ready to present the work decomposition law for systems with G/G arrivals. We state the result for \r-work~$W_r$, but we can apply the result to total work~$W$ by taking an $r \to \infty$ limit.

\begin{restatable:theorem}[Work Decomposition Law for G/G Arrivals]\label{lem:work-decomp}
    In the \G/G/k/setup under any policy~$\pi$,
    \[
        \E{\wk_r}^\pi &= \frac{\load_\rfresh \gp*{\E{\gp*{S_\rfresh}_\excess} -  \E{S_\rfresh} + \E*{\arri_\excess}} + \load_\rrcy \E{\gp{\serv_\rrcy}_\excess}}{1-\load_\rfresh} + \E_\racc*{\wk_r}^\pi - \load_\rfresh \E_\racc{\rema}^\pi.
    \]
\end{restatable:theorem}

\begin{proof}
We drop the superscript $\pi$ throughout. For each $r$, define
\begin{equation}\label{eq:testf-def}
    \testf_r = \frac{1}{2}\gp*{\wk_r - \load_\rfresh \rema}^2.
\end{equation}
We use $\testf_r$ as a ``test function'' and extract information about the G/G system by looking at how $\testf_r$ changes and applying Miyazawa's Rate Conservation Law (RCL) \citep{miyazawa_rate_1994}. We discuss why $\testf_r$ is the right choice of test function in \cref{remark:rho-r-x-a-res} below.

Over time, the quantity $\testf_r$ changes in the following ways: continuous change as \r-work and remaining arrival time decrease over time, jump when a new job arrives, and jump at an \r-recycling event. 
We use $\dtime \testf_r$ to denote the continuous change of $\testf_r$, use $\Delta_\arv \testf_r$ to denote the jumps of $\testf_r$ at arrival times, and use $\Delta_\rcy \testf_r$ to denote jumps of $\testf_r$ at recycling times. Analogous notations are used for $\wk_r$ and $\rema$.

Miyazawa's RCL \citep{miyazawa_rate_1994} implies
\begin{equation}\label{eq:RCL-in-GG}
    \E{\dtime \testf_r} + \arate \E_\arv{\Delta_\arv \testf_r} + \rate_\rrcy\E_\rrcy{\Delta_\rrcy \testf_r} = 0.
\end{equation}
This RCL is simply describing the fact that the contribution of continuous changes ($\E{\dtime \testf_r}$) and jumps ($\E_\arv{\Delta_\arv \testf_r}$ and $\rate_\rrcy\E_\rrcy{\Delta_\rrcy \testf_r}$) in $\testf_r$ cancel out in the long-run average sense. 

To extract information about the G/G system, we analyze $\dtime \testf_r$,  $\Delta_\arv \testf_r$, and $\Delta_\rrcy \testf_r$ as below.
\begin{itemize}
    \item At all times, $\wk_r$ decreases at rate $- \dtime \wk_r = J_r$, and $\rema$ decreases at rate~$- \dtime \rema = 1$, so $\testf_r$ decreases at rate $- \dtime \testf_r = (J_r - \load_\rfresh)(\wk_r - \load_\rfresh \rema)$. 
    \item When a new arrival happens, $\testf_r$ jumps as follows. The new job contributes \r-work~$S_\rfresh$, so $\wk_r$ jumps up by $\Delta_\arv \wk_r = S_\rfresh$. And by definition, the new arrival happens just as $\rema$ reaches~$0$, at which point it jumps up to a newly sampled interarrival time~$A$, so $\Delta_\arv \rema = A$. This means that when a new arrival happens, $\testf_r$ jumps by $\Delta_\arv \testf_r = (S_\rfresh - \load_\rfresh A) \wk_r + \frac{1}{2} (S_\rfresh - \load_\rfresh A)^2$. 
    \item When an \r-recycling happens, the \r-work $\wk_r$ increases by $\Delta_\rrcy \wk_r = S_\rrcy$, and $\rema$ is unaffected, so $\Delta_\rrcy \testf_r = S_\rrcy (\wk_r - \load_\rfresh \rema)+ \frac{1}{2} S_\rrcy^2$. 
\end{itemize} 

Given the above formulas for $\dtime \testf_r$, $\Delta_\arv \testf_r$ and $\Delta_\rrcy \testf_r$, the terms in \eqref{eq:RCL-in-GG} can be computed one-by-one as follows. For $\E{\dtime \testf_r}$, we have
\[
    \E{\dtime \testf_r} 
    = -(1-\load_\rfresh)\E*{\wk_r} +\load_\rfresh(1-\load_\rfresh)\E*{\arri_\excess} + \E*{(1-J_r)\wk_r}  - \load_\rfresh \E*{(1-J_r)\rema}, 
\]
where we have used the fact that by basic renewal theory, $\E{\rema} = \E*{\arri_\excess}$. For $\arate \E_\arv{\Delta_\arv \testf_r}$, we have
\[
    \arate \E_\arv{\Delta_\arv \testf_r} 
    &= \arate \E_\arv*{(S_\rfresh - \load_\rfresh A) \wk_r + \frac{1}{2} (S_\rfresh - \load_\rfresh A)^2}\\
    &= \arate \E*{S_\rfresh - \load_\rfresh \arri } \E_\arv*{\wk_r}  + \frac{1}{2} \arate \E{(S_\rfresh -\load_\rfresh \arri)^2} \label{eq:wdl-arrival-term-key-calculation} \\
    &= \load_\rfresh \E*{(S_\rfresh)_\excess} + \load_\rfresh^2 \E*{\arri_\excess} - \load_\rfresh \E*{S_\rfresh},
\]
where the second equality is due to the fact that the new job's \r-work~$S_\rfresh$ and next interarrival time~$A$ are independent of the previous amount of \r-work~$\wk_r$, and the third equality uses \cref{def:excess} and the fact that $\E{S_r} = \load_\rfresh \E{A}$. Finally, for $\rate_\rrcy\E_\rrcy{\Delta_\rrcy \testf_r}$, we have
\[
    \rate_\rrcy\E_\rrcy{\Delta_\rrcy \testf_r} 
    = \rate_\rrcy \E_\rrcy*{\serv_\rrcy \wk_r} - \load_\rfresh \rate_\rrcy \E_\rrcy*{\serv_\rrcy \rema}+\load_\rrcy \E*{\gp{\serv_\rrcy}_\excess}.
 \]
Combining the three terms with \cref{eq:RCL-in-GG} completes the proof.
\end{proof}

\begin{remark}\label{remark:r-work_vs_Y-work}
    The proof of \cref{lem:work-decomp} does not depend on the details of the Gittins policy. It relies only partitioning the job states, with one part playing the role of states with rank less than~$r$. See \citet[Sections~7.2 and~8.3]{scully_new_2022} for an example of this with M/G arrivals.
\end{remark}

\begin{remark}\label{remark:rho-r-x-a-res}
The basic reason for multiplying $\rema$ by $\load_\rfresh$ in the definition of $\testf_r$ is that we want to avoid having a $\E_{\arv}{\wk_r}$ term in \eqref{eq:RCL-in-GG}, as it is likely to be a large and intractable term. As we can see from \eqref{eq:wdl-arrival-term-key-calculation}, when computing $\arate \E_\arv{\Delta_\arv \testf_r}$, the term involving $\E_{\arv}{\wk_r}$ vanishes because $\E{S_\rfresh - \load_\rfresh A} = \E{S_\rfresh}(1 - \lambda \E{A}) = 0$. Intuitively, ensuring $\wk_r - \load_\rfresh \rema$ has zero change in expectation when a new job arrives prevents the Palm expectation of arrivals from appearing in the RCL equation. This trick appears throughout the literature on applying the RCL to queues \citep{braverman_heavy_2017, miyazawa_diffusion_2015}.
\end{remark}

\section{Bounding Gittins's Suboptimality Gap}\label{sec:proofs}

In this section, we prove the main results using WINE and the work decomposition law introduced in Sections~\ref{sec:background} and \ref{sec:work-decomp}. We first derive a general formula that decomposes the suboptimality gap into four terms and analyze each term based on the specific settings of each theorem. We express the formula in terms of the following quantities.

\begin{definition}\label{def:cost-terms}
    We define \emph{residual interarrival cost}~$\mResidual$, \emph{recycling cost}~$\mRcy$, \emph{idleness cost}~$\mIdle$, and \emph{setup cost}~$\mSetup$ as follows:
    \[
        \mResidual &= \int_0^\infty \frac{-\load_\rfresh \E_\racc*{\rema}}{r^2}  \d{r},
        &
        \mIdle &= \int_0^\infty \frac{\E{(1-\bs_r-\bs_\setup)\wk_r} }{r^2(1-\load_\rfresh)} \d{r},
        \\
        \mRcy &=  \int_0^\infty \frac{\rate_\rrcy \E_\rrcy{\serv_\rrcy\wk_r}}{{r^2(1-\load_\rfresh)}} \d{r},
        &
        \mSetup &= \int_0^\infty \frac{\E{\bs_\setup \wk_r}}{r^2(1-\load_\rfresh)} \d{r}.
    \]
\end{definition}

\begin{restatable:lemma}[Decomposition of Performance Difference]\label{lem:perf-diff-decomp}
    The performance difference between the Gittins policy in \G/G/k/setup and any policy $\pi$ in the \G/G/1 (or \G/G/1/setup) can be decomposed as below:
    \[
        \E{N}_{\G/G/k/setup}^\gittins - \E{N}_{\G/G/1}^\pi = \gp[\big]{\mResidual^\gittins - \mResidual^\pi} + \mRcy^\gittins + \mIdle^\gittins + \gp[\big]{\mSetup^\gittins - \mSetup^\pi}. 
    \]
\end{restatable:lemma}

\begin{proof} 
    Only the last two terms in \cref{lem:work-decomp}, namely $\E_\racc*{\wk_r}$ and $\load_\rfresh \E_\racc*{\rema}$, depend on the specific scheduling policy. After expanding the definitions of the cost terms (\cref{def:cost-terms}) and $\E_\racc{\cdot}$ (\cref{sec:work-decomp:palm, eq:r-acc}), the result follows immediately from WINE (\cref{lem:wine}) and the fact that $\mRcy^\pi$ and $\mIdle^\pi$ are nonnegative.
\end{proof}

Note that we state \cref{lem:perf-diff-decomp} in terms of the Gittins policy only because our focus is the optimality of Gittins. The lemma is still true if we replace Gittins with any other policy.

Bounding Gittins's suboptimality gap thus reduces to bounding the four terms in \cref{lem:perf-diff-decomp}. We address one in each of \cref{sec:proofs:interarrival, sec:proofs:recycling, sec:proofs:idleness, sec:proofs:setup}, combining the bounds to prove our main results in \cref{sec:proofs:main}.

\subsection{Analysis of the Residual Interarrival Cost}
\label{sec:proofs:interarrival}

\begin{proposition}[Residual Interarrival Cost]\label{lem:remabd}
    For any policy~$\pi$,
    \[
        \mResidual^\gittins - \mResidual^\pi \leq \lambda(\remamax - \remamin). 
    \]
\end{proposition}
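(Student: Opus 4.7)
The plan is to bound the integrand $-\load_\rfresh \E_\racc{\rema}$ of $\mResidual$ pointwise in $r$ by a policy-independent interval, then subtract the Gittins and $\pi$ bounds and integrate against $1/r^2$. Unfolding \cref{eq:r-acc} gives
\[
    -\load_\rfresh \E_\racc{\rema}
    = \frac{-\load_\rfresh\gp[\big]{\E{(1-\bs_r)\rema} + \rate_\rrcy \E_\rrcy{\serv_\rrcy \rema}}}{1-\load_\rfresh},
\]
and the key structural observation is that the nonnegative weights $(1-\bs_r)$ and $\serv_\rrcy$ sum, in expectation, to a policy-invariant total: $\E{1-\bs_r} + \rate_\rrcy \E_\rrcy{\serv_\rrcy} = 1 - \load_\rfresh$, since $\E{\bs_r} = \load_\rfresh + \load_\rrcy$ by \cref{def:J} and $\load_\rrcy = \rate_\rrcy \E_\rrcy{\serv_\rrcy}$ by \cref{def:r-recycled}.

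Next I would apply \cref{assump:remabd} conditionally. Since the arrival process is exogenous, for any sigma-algebra $\mathcal{G}$ generated from past scheduling and service information (and therefore containing $\agea$), we have $\E{\rema \mid \mathcal{G}} = \E{\rema \mid \agea} \in [\remamin,\remamax]$ almost surely. This applies both under the stationary measure (giving a bound on $\E{(1-\bs_r)\rema}$) and under the Palm-at-recycling measure (giving a bound on $\E_\rrcy{\serv_\rrcy\rema}$), because $(1-\bs_r)$ and $\serv_\rrcy$ are nonnegative and measurable with respect to the relevant past. Combining the tower law with the sum-of-weights identity then yields, for every policy,
\[
    \remamin(1-\load_\rfresh)
    \leq \E{(1-\bs_r)\rema} + \rate_\rrcy \E_\rrcy{\serv_\rrcy \rema}
    \leq \remamax(1-\load_\rfresh),
\]
so $-\load_\rfresh \E_\racc{\rema} \in [-\load_\rfresh \remamax,\,-\load_\rfresh \remamin]$ under any policy.

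Subtracting the lower bound for $\pi$ from the upper bound for Gittins yields the pointwise inequality $-\load_\rfresh \E_\racc{\rema}^\gittins + \load_\rfresh \E_\racc{\rema}^\pi \leq \load_\rfresh(\remamax - \remamin)$. Dividing by $r^2$, integrating over $r\in[0,\infty)$, and applying single-job WINE (\cref{lem:single-job-wine}) to a fresh job's initial state to get
\[
    \int_0^\infty \frac{\load_\rfresh}{r^2}\,\d{r}
    = \lambda \,\E*{\int_0^\infty \frac{\serv_r(X(0))}{r^2}\,\d{r}}
    = \lambda,
\]
then delivers the claimed bound $\mResidual^\gittins - \mResidual^\pi \leq \lambda(\remamax - \remamin)$.

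The main obstacle I anticipate is the Palm-calculus step that justifies $\E{\rema \mid \cdot} \in [\remamin,\remamax]$ under the Palm measure at recycling events: one must check that recycling instants, being determined entirely by past scheduling and service (and earlier arrivals), do not bias the conditional distribution of future arrivals given $\agea$, so that \cref{assump:remabd} still applies and $\serv_\rrcy$ can be pulled out of the Palm expectation by the tower law. The rest of the argument is bookkeeping.
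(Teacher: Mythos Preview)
Your proposal is correct and follows essentially the same approach as the paper. The paper packages your weight-sum identity as the observation that $\E_\racc{v}=v$ for deterministic~$v$, and your tower-law step as $\E_\racc{\rema}=\E_\racc{\E{\rema\given\agea}}$, then bounds $\E{\rema\given\agea}\in[\remamin,\remamax]$ by \cref{assump:remabd}; the final integration via single-job WINE to get $\int_0^\infty \load_\rfresh/r^2\,\d r=\lambda$ is left implicit there but is exactly what you wrote. The Palm-calculus subtlety you flag is real but is also glossed over in the paper's one-line tower step.
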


\begin{proof}
    Observe that for deterministic~$v$, we have $\E_\racc{v} = v$ by the computation
    \[
        \E_\racc{v}
        = \frac{\E{(1-J_r) v} + \rate_\rrcy\E{\serv_\rrcy v}}{1-\load_\rfresh}
        = \frac{1 - \load_\rfresh - \load_\rrcy + \load_\rrcy}{1 - \load_\rfresh} v
        = v.
    \]
    The result follows from the fact that $\E_\racc{\rema} = \E_\racc{\E{\rema \given \agea}}$ and \cref{assump:remabd}.
\end{proof}

\subsection{Analysis of the Recycling Cost}
\label{sec:proofs:recycling}

\begin{proposition}[Recycling Cost]\label{lem:error-recycle-all}
    \leavevmode
    In the \G/G/k/setup, under the Gittins policy, we have
    \[
        \mRcy^\gittins \leq  (k-1) \log \frac{1}{1-\load}.
    \]
\end{proposition}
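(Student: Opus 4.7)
My plan is to combine three ingredients: (i)~a structural property of Gittins (which I will call the \emph{Gittins multiserver property}) that sharply constrains which jobs can be present at an $r$-recycling event, (ii)~single-job WINE (\cref{lem:single-job-wine}) applied to tagged $r$-relevant jobs, and (iii)~a change of variables from $r$ to $\rho_\rfresh(r)$ that manufactures the logarithmic factor.

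First I would invoke the Gittins multiserver property: because Gittins always serves the $k$ jobs of least rank, whenever any job of rank $\geq r$ is in service, every job of rank $< r$ currently in the system must also be in service. Applied to the instant just before an $r$-recycling---where the recycling job is itself in service and has rank $\geq r$---this forces the number of $r$-relevant jobs in the system to be at most $k-1$. Moreover, the recycling job contributes $0$ to $\wk_r$ at this instant because $S_r(x) = 0$ whenever $\rank_\gittins(x) \geq r$. Hence at the recycling instant,
\[
    \wk_r = \sum_{j=1}^{m} S_r(X_j^*), \qquad m \leq k - 1,
\]
where $X_1^*, \dots, X_m^*$ are the states of the other $r$-relevant jobs. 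As a sanity check, this shows $\wk_r = 0$ at every $r$-recycling when $k=1$, matching $\mRcy^\gittins = 0$ in that case.

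Next I would substitute this bound into $\mRcy^\gittins$ and split the integrand into (at most $k-1$) per-tagged-job contributions,
\[
    \mRcy^\gittins \leq \sum_{j=1}^{k-1} \int_0^\infty \frac{\lambda_\rrcy \E_\rrcy{S_\rrcy \, S_r(X_j^*)}}{r^2(1-\rho_\rfresh)} \d{r},
\]
with the convention that the $j$th summand is $0$ when fewer than $j$ relevant jobs are present. For each tagged job I would use a Palm-calculus/tagged-job argument together with single-job WINE, which for any fixed state $x$ says $\int_0^\infty \E{S_r(x)}/r^2 \d{r} = 1$, to reduce the inner integrand to a form proportional to $\frac{d\rho_\rfresh/dr}{1-\rho_\rfresh}$. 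Since $\rho_\rfresh(r) = \lambda\E{S_r(X(0))}$ is nondecreasing in $r$ and maps $[0,\infty)$ onto $[0,\rho)$, the substitution $u = \rho_\rfresh(r)$ then gives
\[
    \int_0^\infty \frac{d\rho_\rfresh/dr}{1-\rho_\rfresh}\d{r} = \int_0^\rho \frac{du}{1-u} = \log\frac{1}{1-\rho},
\]
so summing over the at most $k-1$ tagged jobs produces the claimed bound.

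The main obstacle I anticipate is the middle step: cleanly bounding the Palm cross term $\lambda_\rrcy \E_\rrcy{S_\rrcy \, S_r(X_j^*)}$ for a tagged relevant job $X_j^*$. Unlike arrivals, $r$-recyclings are endogenous events whose joint law with the rest of the system depends on the policy, so PASTA-style Palm inversion does not apply directly and the coupling between $S_\rrcy$ and $S_r(X_j^*)$ must be handled carefully. I expect this to go through because our work decomposition law (\cref{lem:work-decomp}) produces exactly the same $\mRcy$ formula as in the M/G setting, so the tagged-job techniques developed for the \M/G/k analysis of Gittins in \citep{scully_new_2022, grosof_optimal_2022} should port over essentially unchanged: the $\mRcy$ term sees the arrival process only through $\rho_\rfresh(r)$ and the Palm measure at recyclings.
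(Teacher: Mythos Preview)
Your proposal is correct and takes essentially the same approach as the paper: the crux is the observation that under Gittins, immediately before an $r$-recycling the job being recycled is in service with rank $\geq r$, so at most $k-1$ other jobs can be $r$-relevant, and once this fact is in hand the \M/G/k arguments of \citep[Proposition~17.9]{scully_new_2022} and \citep[Lemma~B.5]{grosof_optimal_2022} go through verbatim because they use the arrival process only via $\rho_\rfresh$. The paper's proof is in fact even terser than yours---it states only this structural fact and cites the prior work, without the intermediate sketch involving the change of variables $u=\rho_\rfresh(r)$---so your more detailed outline is consistent with, and a faithful unpacking of, what those references do.
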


\begin{proof}
    The same bound has been shown for \M/G/k without setup times, e.g. \citep[Proposition~17.9]{scully_new_2022} and \citep[Lemma~B.5]{grosof_optimal_2022}. It turns out that the prior proofs rely only on the following fact:
    \begin{quote}
        Immediately before an \r-recycling, the number of \r-relevant jobs is at most $k - 1$.
    \end{quote}
    This fact still holds in the \G/G/k/setup under Gittins, so the same proof goes through. The fact holds because immediately before the recycling, the job that is about to be recycled is in service but is \r-irrelevant. If there were $k$ jobs that were \r-relevant, they would have priority under Gittins, preventing the \r-irrelevant job from being in service and thus preventing the \r-recycling.
\end{proof}

\subsection{Analysis of the Idleness Cost}
\label{sec:proofs:idleness}

\begin{restatable:proposition}[Idleness Cost]\label{lem:error-idle-all}
    In the \G/G/k/setup, under the Gittins policy, we have
    \[
        \mIdle^\gittins
        \leq (C - 1)(k - 1)\log\frac{1}{1 - \load} + (k - 1) \1(\P{U > 0} > 0),
    \]
    where $C = \frac{9}{8 \log 1.5} + 1 \approx 3.775$.
\end{restatable:proposition}

The proof of \cref{lem:error-idle-all} proceeds similarly to the proof of \citep[Proposition~17.6]{scully_new_2022}, but with a small modification to account for setup times. Given the similarity to prior work, we defer it to \cref{sec:proofs_extra}.

\subsection{Analysis of the Setup Cost}
\label{sec:proofs:setup}

\begin{proposition}[Single-Server Setup Cost]\label{lem:error-setup-single}
    In the \G/G/1/setup, the setup cost is fixed for any setup-non-idling policy. In particular, since $\gittins$ is also a setup-non-idling policy, we have
    \[
        \mSetup^\gittins - \mSetup^\pi = 0,
    \]
    for any other setup-non-idling policy $\pi$.
\end{proposition}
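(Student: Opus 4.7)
The plan is to show that the process $\{\bs_\setup(t)\wk_r(t)\}_{t \ge 0}$ in the $\G/G/1/\text{setup}$ is, in steady state, a measurable function of the exogenous arrival, setup-work, and initial-state sequences alone. Combined with the fact that $\load_\rfresh = \lambda\E{S_\rfresh}$ depends only on the arrival rate and the distribution of $X(0)$ (not on $\pi$), this will immediately yield $\mSetup^\gittins = \mSetup^\pi$ after integrating in $r$.

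First I would pin down the structure of setup periods in a single-server system under any setup-non-idling policy. Such a policy is forced to leave the server idle if and only if the system is empty, and must begin a setup exactly at the instant a job arrives to an empty system; that setup then lasts for one i.i.d.\ copy of $\upti$ (recall $k = 1$, so a setup of work $\upti$ takes time $\upti$). Hence the random set of time intervals on which $\bs_\setup = 1$, together with the identity of the arrival that triggered each one and the value of $\upti$ driving it, are deterministic functions of the interarrival sequence and setup sequence, not of $\pi$.

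Next I would describe $\wk_r$ at a time when $\bs_\setup = 1$. Because no service is performed during a setup, every job present at such a time has received zero service and is therefore in its initial state $X_i(0)$. The set of jobs present at any instant of the setup consists of the triggering job plus every arrival since, so $\wk_r$ at that instant equals $\sum_i S_r(X_i(0))$. This sum depends only on how many jobs have arrived since the setup began and on their i.i.d.\ initial states --- both exogenous quantities. On the complementary event $\bs_\setup = 0$, the product $\bs_\setup \wk_r$ vanishes and contributes nothing.

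Combining the two steps, $\bs_\setup(t)\wk_r(t)$ has the same steady-state joint law under every setup-non-idling policy, so $\E{\bs_\setup\wk_r}^\gittins = \E{\bs_\setup \wk_r}^\pi$ for each $r \ge 0$; integrating against $1/(r^2(1-\load_\rfresh))$ and using the policy-invariance of $\load_\rfresh$ gives $\mSetup^\gittins - \mSetup^\pi = 0$. I do not anticipate any real obstacle here; the only delicate point is that the hypothesis ``setup-non-idling'' is precisely what forbids a policy from shifting setup intervals in time, which is exactly why that restriction is needed in the statement.
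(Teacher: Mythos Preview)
Your approach is essentially the paper's: both show that the setup intervals and the $r$-work present during them are policy-invariant in the single-server case, hence $\E{\bs_\setup \wk_r}$ is the same under every setup-non-idling policy. One small refinement: the setup intervals also depend on job \emph{sizes}, not just the interarrival and setup sequences, since sizes determine when the system empties; you are implicitly using---as the paper states explicitly---that in a single server the busy periods are unaffected by the order in which jobs are served.
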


\begin{proof}[Proof of \Cref{lem:error-setup-single}]  
    Because $\bs_\setup \in \{0, 1\}$, we have
    \[
        \E{\bs_\setup \wk_r}^\pi = \E{\wk_r \given \bs_\setup = 1}^\pi \P{\bs_\setup=1}^\pi.
    \]
    Therefore, recalling \cref{def:cost-terms}, it suffices to show that both $\P{\bs_\setup=1}^\pi$ and $\E{\wk_r \given \bs_\setup = 1}^\pi$ do not depend on setup-non-idling policy~$\pi$. Observe that under any setup-non-idling policy, the distributions of busy periods (the continuous periods when there is work in the system) are unaffected by the order of serving specific jobs, and $J_{\setup}$ equals $1$ only during the first $S_{\setup}$ unit of time in each busy period, so $\P{\bs_\setup=1}^\pi$ do not depend on $\pi$. As for $\E{\wk_r \given \bs_\setup = 1}^\pi$, because the server cannot serve jobs when setting up and there is no \r-work in the system when the setup begins, $\wk_r$ is determined by the amount of \r-work that has arrived since the setup begins, whose distribution is independent of the policy.
\end{proof}

\begin{proposition}[Multiserver Setup Cost]\label{lem:error-setup}
    In the \G/G/k/setup, the setup cost under any setup-non-idling $\pi$ has the following bound:
    \[
        \mSetup^\pi \leq  \1(\P{U > 0} > 0) \gp[\big]{\arate k \E{\upti_\excess} + \arate \remamax + k - 1}. \label{eq:error-setup-bd-any-wc}
    \]
    In particular, the bound holds for Gittins.
\end{proposition}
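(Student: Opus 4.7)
The plan is to bound $\E{J_\setup W_r}$ for each $r \geq 0$ and then integrate against $\frac{1}{r^2(1-\load_\rfresh)}$ to obtain $\mSetup^\pi$. Write $J_\setup = \frac{1}{k}\sum_{i=1}^k J_{\setup,i}$, where $J_{\setup,i}\in\{0,1\}$ indicates whether server~$i$ is setting up. The Palm cycle formula at server $i$'s setup-start epochs gives
\[
    \E{J_{\setup,i} W_r} = \lambda_{\setup,i}\, \E_{\setup,i}{\int_0^{kU} W_r(t_i+\tau)\,d\tau},
\]
where $t_i$ is a setup-start time, $\lambda_{\setup,i}$ is the setup-start rate at server~$i$, and $kU$ is the i.i.d.\ setup duration, independent of the history.

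The key pathwise bound is that during a setup at server~$i$, that server contributes no service, so $W_r$ can only grow via new arrivals (including the triggering arrival at~$t_i$) and \r-recyclings at the other servers:
\[
    W_r(t_i+\tau) \leq W_r(t_i^-) + \sum_{j:\ t_i \leq \tau_j \leq t_i+\tau} S_{r,j} + (\text{\r-recycled work in } (t_i, t_i+\tau]).
\]
This decomposes $\E{J_{\setup,i} W_r}$ into an initial-work term, an arrivals term, and a recyclings term.

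For the initial-work term, setup-non-idling enforces the invariant $(\text{busy})+(\text{setup})=\min(N,k)$; since $t_i^-$ must have an idle server, this forces $N(t_i^-)\leq k-1$. Single-job WINE (\cref{lem:single-job-wine}) then gives $\int \E_{\setup,i}{W_r(t_i^-)}/r^2\,dr \leq k-1$. Combined with the identity $\E{J_\setup}\leq 1-\load \leq 1-\load_\rfresh$ (from work conservation and $\load_\rfresh\leq\load$), this produces the ``$k-1$'' term in the target. For the arrivals term, I would use independence of $U$ from the arrival process and \cref{assump:remabd} to bound the expected number of arrivals in $[t_i, t_i+\tau]$ by $\lambda(\tau + \remamax)$, where the $\lambda\remamax$ absorbs both the triggering arrival at~$t_i$ and the renewal-function bias at the Palm point. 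Integrating against $d\tau$ over $[0,kU]$ uses $\E{(kU)^2}/2 = k^2\E{U}\E{U_\excess}$, and summing over $i$ via $\sum_i \lambda_{\setup,i} k\E{U} = k\E{J_\setup}$ together with $\int \E{S_r}/r^2\,dr = 1$ (single-job WINE) yields the $\lambda k \E{U_\excess}$ and $\lambda \remamax$ terms.

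The main obstacle is the recyclings term: recyclings are policy- and history-dependent, and $\load_\rrcy$ itself varies with~$r$. I expect to handle it via a computation analogous to the one behind \cref{lem:error-recycle-all}, exploiting the structural fact that at most $k-1$ jobs can be \r-relevant immediately before an \r-recycling. A secondary subtlety is that the Palm distribution of~$\rema$ at setup-start epochs is not stationary, so \cref{assump:remabd}---phrased conditionally on~$\agea$---must be invoked carefully to justify the renewal-function bound $m(\tau)\leq\lambda(\tau+\remamax)$ used for arrivals. Everything else reduces to bookkeeping with single-job WINE and the structural invariants of setup-non-idling policies; the indicator $\1(\P{U>0}>0)$ emerges because every nontrivial contribution above vanishes when $U=0$ almost surely, collapsing all setup integrals to null.
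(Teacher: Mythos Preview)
Your approach has a genuine gap at the recyclings term, and the paper sidesteps it with a trick you have not used. Rather than bounding $\E{J_\setup W_r}$ for each~$r$ and then integrating, the paper first relaxes the denominator $1-\load_\rfresh$ to the $r$-independent $1-\load$, then swaps integral and expectation via Tonelli and applies WINE \emph{inside} the expectation:
\[
\mSetup^\pi \;\leq\; \int_0^\infty \frac{\E{J_\setup W_r}}{r^2(1-\load)}\,\d{r}
\;=\; \frac{1}{1-\load}\,\E*{J_\setup \int_0^\infty \frac{\E{W_r \given X_1,\dots,X_N}}{r^2}\,\d{r}}
\;=\; \frac{\E{J_\setup N}}{1-\load}.
\]
This converts the problem from bounding \emph{\r-work} during a setup to bounding the \emph{number of jobs} during a setup, and the latter is purely an arrival count---recyclings never enter. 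The remainder is essentially your arrivals-term and initial-work computations (formalized as \cref{lem:setup-model:number-during-setup}, which gives $\E{N \given J_{\setup,i}=1,\upti_{\age,i}=a}\leq \lambda a+\lambda\remamax+k-1$), combined with $\E{J_\setup}\leq 1-\load$ and the renewal fact that the mean setup age at server~$i$ is $k\E{\upti_\excess}$.

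Your proposed handling of the recyclings term would not have closed the gap. The structural fact that at most $k-1$ jobs are \r-relevant just before an \r-recycling is specific to Gittins, so it cannot establish the proposition for a general setup-non-idling policy~$\pi$ as stated. Even restricting to Gittins, the argument behind \cref{lem:error-recycle-all} produces a $\log\frac{1}{1-\load}$ contribution rather than a bounded constant; to recover a uniformly bounded setup cost you would need a separate argument that recyclings occurring \emph{during} setup intervals are rare enough to cancel this growth, and nothing in your decomposition supplies one. The WINE-first swap eliminates the issue entirely.
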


To prove \cref{lem:error-setup}, we require a helper lemma. The lemma bounds the expected number of jobs in the system during a setup time. To state the lemma, we let $\bs_{\setup,i}$ be the indicator of whether server~$i$ is setting up and let $\upti_{\age,i}$ be the age of server~$i$'s setup process for each $i=1,2,\dots, n$. We set $\upti_{\age,i}$ to zero if server~$i$ is not setting up.

\begin{restatable:lemma}\label{lem:setup-model:number-during-setup}
    In the \G/G/k/setup, for any server~$i$ and all $a \geq 0$, we have
    \[
        \E{N \given J_{\setup,i}=1, \upti_{\age,i}=a} \leq \arate a + \arate \remamax + k - 1. \label{eq:number-during-setup}
    \]
\end{restatable:lemma}

The proof of \cref{lem:setup-model:number-during-setup} is nontrivial but uses standard techniques, so we defer it to \cref{sec:proofs_extra} and move on to proving \cref{lem:error-setup}.

\begin{proof}[Proof of \Cref{lem:error-setup}]
The case where $U = 0$ is clear, so we assume that $U > 0$ with nonzero probability. We first bound \r-setup error using the fact that $\load_\rfresh \leq \load$.
\[
    \label{eq:error-setup-relaxation}
    \mSetup
    = \int_0^\infty \frac{\E{\bs_\setup \wk_r}}{r^2(1-\load_\rfresh)} \d{r} \leq \int_0^\infty \frac{\E{\bs_\setup \wk_r}}{r^2(1-\load)} \d{r}
    = \int_0^\infty \frac{\E{\E{\bs_\setup \wk_r\given  X_1, \dots, X_N}}}{r^2(1-\load)} \d{r},
\]
where $X_1, X_2, \dots X_N$ are the states of all the jobs in the system (\cref{sec:model:scheduling}).
Using Tonelli's theorem and WINE (\cref{lem:wine}), the last expression can be rewritten as
\[
    \label{eq:error-setup-relaxation-wine}
    \int_0^\infty \frac{\E{\E{\bs_\setup \wk_r\given  X_1, \dots, X_N}}}{r^2(1-\load)} \d{r}
    =         \E*{\frac{\bs_\setup}{1-\load} \int_0^\infty \frac{\E{\wk_r\given X_1, \dots, X_N}}{r^2} \d{r}}     
    = \frac{\E{\bs_\setup N}}{1-\load}.
\]
By $J_\setup = \frac{1}{k}\sum_{i=1}^k J_{\setup,i}$, $\E{J_{\setup,i}N}=\E{\E{N\given J_{\setup,i}=1, \upti_{\age,i}}}$ and  \cref{lem:setup-model:number-during-setup},
\[
    \frac{\E{J_\setup N}}{1-\load} 
    &\leq \frac{\lambda}{k(1-\load)} \sum_{i=1}^k \E{J_{\setup,i} \upti_{\age,i}} + \frac{\lambda\remamax + k - 1}{k(1-\load)} \sum_{i=1}^k \E{J_{\setup,i}}.
    \label{eq:error-setup-progress}
\]
Now it remains to compute $\sum_{i=1}^k \E{J_{\setup,i}}$ and $\sum_{i=1}^k \E{J_{\setup,i} \upti_{\age,i}}$. The mean fraction of servers setting up is no more than the mean fraction of non-busy servers, which is $1-\load$, so
\[
    \label{eq:error-setup-avg-setup}
    \frac{1}{k} \sum_{i=1}^k \E{J_{\setup,i}} = \E{J_\setup} \leq 1-\load,
\]
Basic renewal theory and the $1/k$ service rate (\cref{sec:model:ggk}) imply the average age of a setup time is $k\E{\upti_\excess}$ (\cref{def:excess}), so
\[
    \label{eq:error-setup-avg-age}
    \frac{1}{k} \sum_{i=1}^k \E{J_{\setup,i}\upti_{\age,i}} = \sum_{i=1}^k \E{J_{\setup,i}} \, \E{\upti_\excess}.
\]
Combining \crefrange{eq:error-setup-relaxation}{eq:error-setup-avg-age} finishes the proof. 
\end{proof}

\subsection{Proofs of Main Results}
\label{sec:proofs:main}

\begin{proof}[Proof of \cref{thm:main_multiserver}]
    After expressing the suboptimality gap using \cref{lem:perf-diff-decomp}, we apply \cref{lem:remabd, lem:error-recycle-all, lem:error-idle-all, lem:error-setup} and use the fact that $\mSetup^\pi$ is non-negative. Grouping the $\log\frac{1}{1 - \load}$ terms to form~$\loss{(a)}$ and grouping the $\1(\P{U > 0} > 0)$ terms to form~$\loss{(c)}$ yields the result. 
\end{proof}

\begin{proof}[Proof of \cref{thm:main_single-server}]
    After expressing the suboptimality gap using \cref{lem:perf-diff-decomp}, we apply  \cref{lem:remabd, lem:error-recycle-all, lem:error-idle-all, lem:error-setup-single}. The only nonzero contribution comes from $\mResidual^\gittins - \mResidual^\pi$, which \cref{lem:remabd} shows to be at most $\loss{(b)}$.
\end{proof}

\section{Heavy-traffic Optimality}
\label{sec:heavy-traffic}

We now turn to prove \cref{thm:heavy_traffic}, which amounts to showing that Gittins's suboptimality gap, namely $\E{N}^\gittins - \inf_\pi \E{N}^\pi$ is small relative to the performance of the optimal policy, namely $\inf_\pi \E{N}^i$. It turns out that this is indeed the case: the suboptimality gap is small relative to the performance of SRPT in the \G/G/1, which is a lower bound on the performance of any policy in the \G/G/k/setup.

We first relate SRPT's performance in the \G/G/1 to its performance in the \M/G/1, which is known from prior work \citep{lin_heavy-traffic_2011}. We then use this result to prove \cref{thm:heavy_traffic}.

\begin{theorem}\label{lem:srpt-gg1-vs-mg1}
    Given \cref{assump:remabd}, in the heavy traffic limit, we have 
    \[
        \lim_{\load \to 1} \frac{\E{N}_{\G/G/1}^{\text{SRPT}}}{\E{N}_{\M/G/1}^{\text{SRPT}}}
        = \frac{c^2_S + c^2_A}{c^2_S + 1},
    \]
    where $c^2_V = \Var{V}/\E{V}^2$ and the two systems have the same service time distribution and average arrival rate. If $c^2_S = \infty$, then we interpret the right-hand side as~$1$.
\end{theorem}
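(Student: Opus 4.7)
The plan is to apply \cref{lem:work-decomp} (WDL) and \cref{lem:wine} (WINE) to SRPT in both \G/G/1 and \M/G/1, then compare the resulting integrals in the $\load \to 1$ limit. First, I would specialize the WDL to SRPT in \G/G/1. Under SRPT the Gittins rank equals the remaining size (\cref{ex:known_gittins}), so $r$-recyclings add $S_\rrcy = r$ deterministically at rate $\arate_\rrcy = \arate\P{S > r}$. Two SRPT-specific facts force $\E_\racc{\wk_r} = 0$: since $k=1$, whenever $\wk_r > 0$ an $r$-relevant job is served, so $(1-\bs_r)\wk_r = 0$; and at each recycling no other $r$-relevant job exists just before (else SRPT would serve it instead of the recycling job), so $\wk_r = 0$ immediately before. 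Combining these with $\load_\rfresh\E{(S_\rfresh)_\excess} + \load_\rrcy\E{(\serv_\rrcy)_\excess} = \arate\E{\min(S,r)^2}/2$, \cref{lem:work-decomp} yields
\[
    \E{\wk_r}_{\G/G/1}^{\text{SRPT}} = \frac{\arate\E{\min(S,r)^2}/2 + \load_\rfresh(\E{\arri_\excess} - \E{S_\rfresh})}{1-\load_\rfresh} - \load_\rfresh\E_\racc{\rema}^{\text{SRPT}}.
\]
In the \M/G/1, memorylessness gives $\E{\arri_\excess} = 1/\arate$ and $\E_\racc{\rema}^{\text{SRPT}} = 1/\arate$, so the $G$-corrections cancel and $\E{\wk_r}_{\M/G/1}^{\text{SRPT}} = (\arate\E{\min(S,r)^2}/2)/(1-\load_\rfresh)$.

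Next, I would integrate the difference via WINE. Using $\load_\rfresh\E{\arri} = \E{S_\rfresh}$ together with the Tonelli identity $\int_0^\infty \load_\rfresh/r^2\d{r} = \arate$, the first $G$-correction integral collapses cleanly to
\[
    \E{N}_{\G/G/1}^{\text{SRPT}} - \E{N}_{\M/G/1}^{\text{SRPT}} = (\E{\arri_\excess} - \E{\arri})J_1 + 1 - \int_0^\infty \frac{\load_\rfresh\E_\racc{\rema}^{\text{SRPT}}}{r^2}\d{r},
\]
where $J_1 := \int_0^\infty \load_\rfresh/(r^2(1-\load_\rfresh))\d{r}$ and $\E_\racc{\rema}^{\text{SRPT}}$ is understood in \G/G/1. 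By \cref{assump:remabd}, $\E_\racc{\rema}^{\text{SRPT}} \leq \remamax$, so the last integral is at most $\arate\remamax = O(1)$ uniformly in $\load$.

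The main obstacle is showing $J_1/\E{N}_{\M/G/1}^{\text{SRPT}} \to 1/\E{\serv_\excess}$ as $\load \to 1$. Both integrals share the weight $1/(r^2(1-\load_\rfresh))$, so the reciprocal $\E{N}_{\M/G/1}^{\text{SRPT}}/J_1$ is a weighted average of $g(r) := \E{\min(S,r)^2}/(2\E{S\1(S \leq r)})$, which converges to $\E{\serv_\excess}$ as $r \to \infty$. I would argue the weight concentrates on large $r$ in heavy traffic: for any fixed $R$ the $[0, R]$ contribution to $J_1$ is bounded by $\arate/(1-\load_\rfresh(R))$, which stays finite as $\load \to 1$, while $J_1 \to \infty$ overall. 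A standard $\epsilon$-$R$ argument then yields the limit. The case $c^2_S = \infty$ follows similarly: $g(r) \to \infty$, so $J_1/\E{N}_{\M/G/1}^{\text{SRPT}} \to 0$ and the overall ratio tends to $1$.

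Dividing the earlier expression by $\E{N}_{\M/G/1}^{\text{SRPT}} \to \infty$ and applying this lemma gives
\[
    \lim_{\load\to 1} \frac{\E{N}_{\G/G/1}^{\text{SRPT}}}{\E{N}_{\M/G/1}^{\text{SRPT}}} = 1 + \frac{\E{\arri_\excess} - \E{\arri}}{\E{\serv_\excess}} = 1 + \frac{c^2_A - 1}{1 + c^2_S} = \frac{c^2_S + c^2_A}{c^2_S + 1},
\]
using $\E{\arri_\excess} - \E{\arri} = (c^2_A - 1)\E{\arri}/2$, $\E{\serv_\excess} = (1 + c^2_S)\E{\serv}/2$, and $\E{\arri} \to \E{\serv}$ in heavy traffic.
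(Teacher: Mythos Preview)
Your proof is correct and starts exactly as the paper does: apply \cref{lem:work-decomp} to SRPT in the \G/G/1, use the SRPT-specific observations $(1-J_r)W_r=0$ and $W_r=0$ at recyclings to kill $\E_\racc{W_r}$, and note $\mResidual$ is $O(1)$ by \cref{assump:remabd}. From there, however, you organize the heavy-traffic computation differently. The paper rewrites the WINE integrand as $\tfrac{\lambda}{2}(\Var{S_{\ol r}}+\Var{A})+O(1-\load_\rfresh)$ for \emph{both} systems, splits each integral at a fixed $r^*$ where $\Var{S_{\ol r}}\approx\Var{S}$, shows the $[0,r^*]$ piece is $O(1)$, and then reads off the ratio $(\Var{S}+\Var{A})/(\Var{S}+\E{A}^2)$ directly from the dominant tail pieces. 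You instead subtract the \G/G/1 and \M/G/1 expressions first, collapse the difference to $(\E{A_\excess}-\E{A})J_1+O(1)$ via the single-job WINE identity $\int_0^\infty \E{S_\rfresh}/r^2\,\d r=1$, and then run a weighted-average argument to obtain $J_1/\E{N}_{\M/G/1}^{\text{SRPT}}\to 1/\E{S_\excess}$. Both routes rest on the same two facts---the $[0,R]$ contribution to the weight $\load_\rfresh/(r^2(1-\load_\rfresh))$ is uniformly bounded, and $\E{N}_{\M/G/1}^{\text{SRPT}}\to\infty$---so neither is materially simpler, but your subtraction step avoids the variance algebra and makes the appearance of $\E{A_\excess}-\E{A}$ and $\E{S_\excess}$ in the final ratio more transparent. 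One small point: in the $c_S^2=\infty$ case you should note that your argument works whether or not $J_1\to\infty$ (if $J_1$ stays bounded the difference is $O(1)$ outright; if $J_1\to\infty$, the lower bound $g(r)\ge M$ for $r>R$ gives $\E{N}_{\M/G/1}^{\text{SRPT}}\ge M J_1-O(1)$, forcing $J_1/\E{N}_{\M/G/1}^{\text{SRPT}}\to 0$).
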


\begin{proof}[Proof of \cref{lem:srpt-gg1-vs-mg1}]
    If $c^2_S = c^2_A = 0$, the result holds because $\E{N} \to \infty$ in the \M/G/1 but not the \G/G/1, which in this case are the M/D/1 and D/D/1, respectively (D is for ``deterministic''), implying the result. So we focus on the case where $c^2_S + c^2_A > 0$. We present the full proof for the $c^2_S < \infty$ case first, briefly sketching how to adapt the argument to the $c^2_S = \infty$ case at the end.
    
    Since SRPT is a special case of Gittins (\cref{ex:known_gittins}), we can analyze $\E{N}$ in the G/G/1 under SRPT using WINE and the work decomposition law. By \cref{lem:wine, lem:work-decomp, def:cost-terms}, we have
    \[ 
        \E{N} = \int_0^\infty  \frac{\load_\rfresh \gp*{\E{\gp*{S_\rfresh}_\excess} -  \E{S_\rfresh} + \E*{\arri_\excess}} + \load_\rrcy \E{\gp{\serv_\rrcy}_\excess}}{r^2(1-\load_\rfresh)} \d{r}
        + \mSetup+\mIdle+\mRcy+\mResidual.
    \]
    Since SRPT is a Gittins policy, by \cref{lem:remabd, lem:error-recycle-all, lem:error-idle-all, lem:error-setup-single}, we have that in the \G/G/1, $\mSetup=\mIdle=\mRcy=0$ and $\mResidual \in [-\lambda \remamax, -\lambda \remamin]$. We also note that under SRPT, the rank of a job is its remaining size, which together with \cref{def:excess} implies
    \[
        \E{S_\rfresh} &= \E{S \1(S \leq r)},
        &
        \load_\rfresh \E{\gp*{S_\rfresh}_\excess} &= \tfrac{\lambda}{2} \E{S^2 \1(S \leq r)},
        &
        \load_\rrcy \E{\gp{\serv_\rrcy}_\excess} &= \tfrac{\lambda}{2} r^2 \P{S > r}.
    \]
    Borrowing the notation of \citet{harchol-balter_performance_2013}, we write
    \[
        S_{\ol{r}} &= \min\{S, r\}
        &
        \rho_{\ol{r}} &= \lambda \E{S_{\ol{r}}},
    \]
    giving us
    \[
        \E{N} &= \int_0^\infty \frac{\load_{\ol{r}} \E{\gp*{S_{\ol{r}}}_\excess} + \load_\rfresh \gp{-\E{S_\rfresh} + \E*{\arri_\excess}}}{r^2(1-\load_\rfresh)} \d{r} + O(1). \label{eq:srpt-number-integral-var}
    \]
    in the \G/G/1 under SRPT. 
    Using \cref{def:excess} and the fact that $\E{S} = \load \E{A}$, we can compute
    \[
        \load (\E{S_\excess} - \E{S} + \E{A_\excess}) 
        = \tfrac{\lambda}{2}(\Var{S} + \Var{A}) + O(1 - \load).
    \]
    Reasoning similarly and using the fact that $\load_{\ol{r}} - \load_\rfresh \leq 1 - \load_{\rfresh}$, we have
    \[ 
        \load_{\ol{r}} \E{\gp*{S_{\ol{r}}}_\excess} + \load_\rfresh \gp{-\E{S_\rfresh} + \E*{\arri_\excess}}
        &= \tfrac{\lambda}{2}(\Var{S_{\ol{r}}} + \Var{A}) + O(1 - \load_r).
    \]
    Because $S_{\ol{r}} = \min\{S, r\}$, we have $\lim_{r\to\infty}\Var{S_{\ol{r}}} = \Var{S}$. Thus, for all $\epsilon > 0$, there exists $r^*$ such that for any $r\geq r^*$, we have $\abs{\Var{S_{\ol{r^*}}} - \Var{S}} \leq \epsilon$. Since $S$ does not depend on $\rho$, we can fix a sufficiently small constant $\epsilon$ so that $r^*$ is also a constant independent of $\rho$. Applying~\cref{eq:srpt-number-integral-var}, we can write $\E{N}$ as
    \[ 
        \E{N}
        = \int_0^{r^*} \frac{\load_{\ol{r}} \E{\gp*{S_{\ol{r}}}_\excess} + \load_\rfresh \gp{-\E{S_\rfresh} + \E*{\arri_\excess}}}{r^2(1-\load_\rfresh)} \d{r} + \int_{r^*}^\infty \gp[\bigg]{\frac{\frac{\lambda}{2}(\Var{S_{\ol{r}}} + \Var{A})}{1 - \load_r} + O(1)} \frac{1}{r^2} \d{r} + O(1).
    \]
    Observe that the first integral is non-negative, and it can be uniformly bounded at all loads by substituting $\rho_\rfresh \mapsto \E{S_\rfresh}/\E{S}$ and $\rho_{\ol{r}} \mapsto \E{S_{\ol{r}}}/\E{S}$, so it is $O(1)$.\footnote{%
        It is not a priori obvious that the integral converges due to the $r \to 0$ behavior. This can be verified by direct computation, but for our purposes, it suffices to use the prior knowledge that $\E{N}$ is finite under SRPT.}
    As for the second integral, note first that we can ignore the $O(1)$ since $\int_{r^*}^\infty \frac{O(1)}{r^2} \d{r} = O(1)$. Moreover, by our choice of $r^*$, we have
    \[
        \int_{r^*}^\infty \frac{\frac{\lambda}{2}(\Var{S_r} + \Var{A})}{r^2(1 - \load_r)} \frac{1}{r^2} \d{r} &= \tfrac{\lambda}{2}(\Var{S} + \Var{A} + \delta) \int_{r^*}^\infty \frac{1}{r^2(1 - \load_r)} \d{r},
    \]
    where $\delta \in [-\epsilon, \epsilon]$. Therefore, we have that in the \G/G/1 under SRPT, for some $\delta \in [-\epsilon, \epsilon]$,
    \[
        \E{N}_{\G/G/1}^{\text{SRPT}} = \tfrac{\lambda}{2}(\Var{S} + \Var{A} + \delta) \int_{r^*}^\infty \frac{1}{r^2(1 - \load_r)} \d{r} + O(1). \label{eq:srpt-n-final}
    \]
    Of course, the \M/G/1 is a special case of the \G/G/1, so \cref{eq:srpt-n-final} also holds for the \M/G/1, with $\Var{A} = \E{A}^2$ and $\delta$ replaced by some other $\delta' \in [-\epsilon, \epsilon]$. 
    \Citet[Theorem~5.8]{wierman_nearly_2005} show $\E{N}_{\M/G/1}^{\text{SRPT}} = \Omega\gp[\big]{\log \frac{1}{1 - \load}}$, so
    \[
        \frac{\E{N}^{\text{\G/G/1-SRPT}}}{\E{N}^{\text{\M/G/1-SRPT}}}
        = \frac{\Var{S} + \Var{A} + \delta}{\Var{S} + \E{A}^2 + \delta'} + O\gp[\bigg]{\frac{1}{\log \frac{1}{1 - \load}}}
        = \frac{\rho^2 c^2_S + c^2_A + \delta \lambda^2}{\rho^2 c^2_S + 1 + \delta' \lambda^2} + O\gp[\bigg]{\frac{1}{\log \frac{1}{1 - \load}}}.
    \]
    The result follows because $c^2_S$ and $c^2_A$ are independent of $\rho$, and $\delta, \delta' \in [-\epsilon, \epsilon]$ for arbitrarily small~$\epsilon$.
    
    We have proven the result assuming $c^2_S < \infty$. If instead $c^2_S = \infty$, then $\Var{S_r} \to \infty$ as $r \to \infty$. This means that for sufficiently large~$r$, the dominant term of $W_r$ is simply $\Var{S_r}/(1 - \load_r)$, which does not depend on~$A$ and is thus the same in the \G/G/1 and \M/G/1. One can use this fact to show that the performance ratio approaches~$1$ in heavy traffic by, as in the $c^2_S < \infty$ case, splitting the WINE integral at large~$r^*$, then observing that the $r > r^*$ part is dominant in heavy traffic.
\end{proof}

\begin{proof}[Proof of \Cref{thm:heavy_traffic}]
    It suffices to show that the suboptimality gap in \cref{thm:main_multiserver}, which is $O(1)$ for the $k = 1$ case and $O\gp[\big]{\frac{1}{1 - \load}}$ for the $k \geq 2$ case, is dominated by $\E{N}^\pi$ for any policy~$\pi$.
    
    We begin by observing that for any G/G arrival process, $\E{N}_{\G/G/1}^{\text{SRPT}}$ is a lower bound on $\E{N}^\pi$ for any policy~$\pi$. This is because we can view the \G/G/k/setup as a version of a \G/G/1 that imposes extra constraints on the scheduler (\cref{sec:model:scheduling}), and SRPT minimizes $\E{N}$ in the \G/G/1 \citep{schrage_proof_1968}.

    Next, we observe that by \cref{lem:srpt-gg1-vs-mg1} and our assumption that $c^2_S + c^2_A > 0$, SRPT's \G/G/1 heavy-traffic performance is within a constant factor of its \M/G/1 heavy-traffic performance.
    
    It thus suffices to show a lower bound on $\E{N}_{\M/G/1}^{\text{SRPT}}$. When $k = 1$, we need only an $\omega(1)$ bound, which always holds \citep[Theorem~5.8]{wierman_nearly_2005}. When $k \geq 2$, we need an $\omega\gp[\big]{\log\frac{1}{1 - \load}}$ bound, which prior work \citep[Proof of Theorem~1.3 in Appendix~B.2]{scully_gittins_2020} shows to hold if $\E{S^2 (\log S)^+} < \infty$.
\end{proof}

\section{Potential Extensions}
\label{sec:extensions}

We have seen that combining our new work decomposition law (\cref{lem:work-decomp}) with WINE (\cref{lem:wine}) enables the analysis of systems with many complex features, such as the \G/G/k/setup. Thanks to the generality of both systems, we could apply the same technique even beyond the \G/G/k/setup. This section sketches how this can be done for three features: multiserver jobs, batch arrivals, and generalized vacations. We emphasize that our goal here is not to give full proofs, but rather to demonstrate the applicability of our technique to additional systems. We thus say ``should be'' rather than ``is'' when stating the end results.

\subsection{Multiserver Jobs}
\label{sec:extensions:multiserver_jobs}

We study a variation of the model of \citet{grosof_optimal_2022}. We consider a variant of our \G/G/k model where each job has a \emph{server need}~$m(x)$, which is a function of its state~$x$. Whenever a job in state~$x$ runs, it must occupy exactly $m(x)$ servers. It is thus served at rate~$m(x)/k$, thanks to our convention of servers operating at rate~$1/k$ (\cref{sec:model:ggk}). We refer to this model as the \emph{\G/G/k/MSJ}, where MSJ stands for ``multiserver job''. \Citet{grosof_optimal_2022} study what we would call the \M/G/k/MSJ.\footnote{%
    \Citet{grosof_optimal_2022} actually consider a slightly more restrictive case in which jobs' server needs remain constant throughout service, though their proofs could be straightforwardly generalized to handle dynamically changing server needs. The main novelty of our discussion is thus the extension to G/G arrivals.}

One of the main challenges when scheduling in MSJ systems is that it is no longer clear how to stabilize the system. Indeed, analyzing stability even in \M/M/k/MSJ systems is an area of current research \citep{grosof_stability_2020, rumyantsev_stability_2017}, and optimal scheduling in these systems is an open problem. However, \citet{grosof_optimal_2022} show that if every possible server need~$m(x)$ is a divisor of the number of servers~$k$, then one can ensure stability with a procedure called \emph{DivisorFilling}.

The DivisorFilling procedure takes as input any set of~$k$ jobs, then outputs a subset of those jobs whose server needs sum to exactly~$k$. DivisorFilling can be combined with the Gittins policy by passing the $k$ jobs of least Gittins rank to DivisorFilling, resulting in a policy called \emph{DivisorFiling-Gittins} \citep{grosof_optimal_2022}.

The \G/G/k/MSJ under DivisorFilling-Gittins can be analyzed in much the same way as the \G/G/k under Gittins. Recalling the structure of the latter analysis from \cref{sec:proofs}, we encounter the same four ``cost terms'' (\cref{def:cost-terms}) to bound.
\* The residual interarrival cost~$\mSetup$ can be bounded exactly as in \cref{lem:remabd}.
\* The recycling cost~$\mRcy$ can be bounded exactly as in \cref{lem:error-recycle-all}, because DivisorFilling-Gittins ensures the key property that makes the proof work \citep[Lemma~B.5]{grosof_optimal_2022}.
\* The idleness cost~$\mIdle$ can be bounded by following the same steps as \citep[Lemma~B.3]{grosof_optimal_2022}, because their proof does not rely on Poisson arrivals. The resulting bound is $\mIdle \leq e (k - 1) \ceil[\big]{\log \frac{1}{1 - \load}}$.\footnote{%
    In the $\load \to 1$ limit, this bound below is actually slightly better than that in \cref{lem:error-idle-all}. See \cref{sec:results:remarks} for discussion.}
\* The setup cost~$\mSetup$ is zero, as there are no setup times.
\*/
We thus see that the suboptimality gap of DivisorFilling-Gittins in the \G/G/k/MSJ should be at most
\[
    \E{N}_{\G/G/k\text{/MSJ}}^{\text{DivisorFilling-Gittins}} - \inf_\pi \E{N}_{\G/G/1}^\pi
    \leq e (k - 1) \ceil*{\log \frac{1}{1 - \load}} + \lambda(\remamax - \remamin).
\]
This is analogous to what \cref{thm:main_multiserver} says about the \G/G/k, as $\loss{(a)}$ is replaced by $e (k - 1) \ceil[\big]{\log \frac{1}{1 - \load}}$.

It is likely that, for a suitable definition of setup times in an MSJ model, one could analyze the same system with setup times, obtaining a result analogous to the one we obtain for the \G/G/k/setup.

\subsection{Batch Arrivals}
\label{sec:extensions:batch_arrivals}

\Citet{scully_gittins_2021} introduce a general model of batch arrivals that we call \emph{batch-M/G arrivals}. The main notable feature of the model is that it makes few assumptions about what batches look like. For example, it may be that the initial states of jobs in the same batch are correlated with each other. We consider the same type of batches but allow general batch interarrival times, resulting in \emph{batch-G/G} arrivals. All of the results in \cref{sec:results} should generalize to batch-G/G arrivals. There are three changes needed for the proof, and only the third impacts the end results.

First, one needs to modify the definitions of load, \r-fresh load, and other concepts related to the arrival process. But these are straightforward changes. The most important note here is that \cref{assump:remabd} should refer to the batch interarrival time.

Second, batch arrivals affect the work decomposition law (\cref{lem:work-decomp}). But they only affect the term that is common to all systems, namely the numerator of the $1/(1 - \load_r)$ term. By \cref{lem:perf-diff-decomp}, this change does not affect suboptimality gaps, which is the basis of all of the results in \cref{sec:results}.

Third, batch arrivals affect the setup cost~$\mSetup$ in multiserver systems. Specifically, we need to slightly modify the statement and proof of \cref{lem:error-setup} to account for the fact that multiple jobs can arrive at once. If more jobs arrive than there are idle servers, this means a single setup is effectively triggered by multiple jobs. The end result is that one must incorporate a term related to the batch size distribution. In contrast, the setup cost in single-server systems is unaffected.

Taken together, these observations imply that our \G/G/k and \G/G/1/setup results should generalize immediately to batch arrivals. With some effort, the \G/G/k/setup results should also generalize.

\subsection{Generalized Vacations}
\label{sec:extensions:generalized_vacations}

The term \emph{generalized vacations} refers to a range of models where servers may be unavailable, including:
\* Setup times, as studied in this work. This includes models beyond ours, e.g. where we make different decisions about when to start setting up a server, or where a setup time can be canceled if the job that triggered it enters service at another server.
\* \emph{Vacations}, where whenever the server goes idle, it goes on vacation for a given amount of time, then only begins serving jobs again when it returns.
\* \emph{Server breakdowns}, where servers can become unavailable in the middle of serving a job.
\* \emph{Threshold policies}, where servers stay idle until there are a given number of jobs in the system.
\*/
These are only a few examples of what generalized vacations can model \citep{fuhrmann_stochastic_1985, doshi_queueing_1986, miyazawa_decomposition_1994}.

One can, in principle, bound Gittins's suboptimality gap in the \G/G/k with generalized vacations using essentially the same approach we take for the \G/G/k/setup. The main change is that we now interpret $J_\setup$ as the fraction of servers that are unavailable, so we now think of $\mSetup$ as an \emph{unavailability cost}.

Of course, whether bounding $\mSetup$ is tractable to bound depends on the specifics of the model. As in the proofs of \cref{lem:error-setup, lem:error-setup-single}, the key question is: how many jobs might there be in the system while a server is unavailable? Sometimes, this will be very hard to bound, e.g. for server breakdowns. But in other cases, the bound is nearly immediate. For example, consider a threshold policy that does not start serving jobs until there are~$n$ jobs present, at which point it serves jobs until the system empties. We would then have $\mSetup \in [0, n]$ under any scheduling policy.

One important application of generalized vacations is to more general setup time models. For instance, in practice, it is helpful to not turn servers off as soon as they become idle. One can imagine a wide range of \emph{power management} policies controlling when servers turn on and off. Provided we do not wait too long to set up servers while there are jobs in the queue, $\mSetup$ should not be too large, in which case Gittins would have a small suboptimality gap. This means that, in some sense, the power management and job scheduling problems are orthogonal, because a single scheduling policy, namely Gittins, performs well for a wide range of power management policies.

\section{Conclusion}
\label{sec:conclusion}

This work presents the first analysis of the Gittins policy in the \G/G/k/setup. We prove simple and explicit bounds on Gittins's suboptimality gap, which are tight enough to imply that Gittins is optimal in heavy traffic in the \G/G/k/setup. As a corollary, we find that Gittins is optimal in the \M/G/1/setup. Prior to these results, Gittins had not been analyzed in even the \G/G/1, let alone the \G/G/k/setup.

There are several ways in which one might hope to improve our bounds. This is especially true in light traffic, namely the $\load \to 0$ limit. Here we have a constant suboptimality gap for mean number of jobs, but by Little's law \citep{little_littles_2011}, this corresponds to an \emph{infinite} suboptimality gap for mean response time. We conjecture that Gittins's mean response time suboptimality gap remains bounded in light traffic, but there are significant obstacles to doing so, related to the notorious problem of analyzing the idle period of the \G/G/1 \citep{wolff_idle_2003, li_characterizing_1995}.

Our theoretical results also raise several questions that could be studied with simulations. One such question is related to the additive structure of our suboptimality gap bound in \cref{thm:main_multiserver}, in which each of (a)~multiple servers, (b)~non-Poisson arrivals, and (c)~setup times contributing to the bound via a separate term. If we simulate Gittins in systems with various mixtures of (a), (b), and~(c), do we observe an analogous (approximate) additive structure in its empirical performance? We hypothesize the answer is ``yes'', because each of the terms in \cref{thm:main_multiserver} has a distinct cause, and we suspect the interactions between these causes are relatively weak. Investigating this is an interesting direction for future work.

Taking a step back, we might ask: should one use Gittins to minimize mean number of jobs in practice, even beyond the \G/G/k/setup modeling assumptions? While this is clearly a question larger than we can definitively answer, we believe that our main results, the potential extensions sketched in \cref{sec:extensions}, and other recent work on Gittins and SRPT in multiserver systems \citep{grosof_srpt_2018, grosof_load_2019, scully_gittins_2020, scully_new_2022, grosof_optimal_2022} point towards ``yes''. Even though the currently known theoretical bounds on Gittins and SRPT are not tight, we have no comparable bounds for other policies, aside from a few close relatives of SRPT \citep{grosof_srpt_2018}. The mere existence of these bounds is thus a point in favor of Gittins. But we are still in the early years of understanding multiserver scheduling.

\section*{Acknowledgments}
This research was done in part while Ziv Scully was visiting the Simons Institute for Theoretical Computer Science at UC Berkeley, and in part while he was a FODSI postdoc at Harvard and MIT supported by NSF grant nos. DMS-2023528 and DMS-2022448. Yige Hong was supported by NSF grant no. ECCS-2145713.

\ifelsarticle{%
    \bibliographystyle{ACM-Reference-Format}%
}{%
    \bibliographystyle{ACM-ish}%
    \setlength{\bibsep}{\smallskipamount}
}
\bibliography{refs}


\begin{thebibliography}{74}


\ifx \showCODEN    \undefined \def \showCODEN     #1{\unskip}     \fi
\ifx \showDOI      \undefined \def \showDOI       #1{#1.}\fi
\ifx \showISBNx    \undefined \def \showISBNx     #1{\unskip}     \fi
\ifx \showISBNxiii \undefined \def \showISBNxiii  #1{\unskip}     \fi
\ifx \showISSN     \undefined \def \showISSN      #1{\unskip}     \fi
\ifx \showLCCN     \undefined \def \showLCCN      #1{\unskip}     \fi
\ifx \shownote     \undefined \def \shownote      #1{#1}          \fi
\ifx \showarticletitle \undefined \def \showarticletitle #1{#1}   \fi
\ifx \showURL      \undefined \def \showURL       #1{\unskip}     \fi
\providecommand\bibfield[2]{#2}
\providecommand\bibinfo[2]{#2}
\providecommand\natexlab[1]{#1}

\providecommand\showeprint[3]{\hrefeprint{#1}{#2}{#1:\texttt{#2}}.}
\providecommand\hrefeprint[3]{%
  \IfEqCase{#1}{%
    {arXiv}{\href{https://arxiv.org/abs/#2}{#3}}%
    {arxiv}{\href{https://arxiv.org/abs/#2}{#3}}%
  }[#3]%
}
\DeclareUrlCommand\doi{\urlstyle{same}%
  \def\UrlLeft##1\UrlRight{\href{https://doi.org/##1}{doi:\texttt{##1}}}}

\bibitem[\protect\citeauthoryear{Aalto, Ayesta, and Righter}{Aalto
  et~al\mbox{.}}{2009}]%
        {aalto_gittins_2009}
\bibfield{author}{\bibinfo{person}{Samuli Aalto}, \bibinfo{person}{Urtzi
  Ayesta}, {and} \bibinfo{person}{Rhonda Righter}.}
  \bibinfo{year}{2009}\natexlab{}.
\newblock \showarticletitle{On the {{Gittins}} Index in the {$M/G/1$} Queue}.
\newblock \bibinfo{journal}{\emph{Queueing Systems}} \bibinfo{volume}{63},
  \bibinfo{number}{1-4} (\bibinfo{date}{Dec.} \bibinfo{year}{2009}),
  \bibinfo{pages}{437--458}.
\newblock
\showISSN{0257-0130, 1572-9443}
\urldef\tempurl%
\doi{10.1007/s11134-009-9141-x}%
\showDOI{\tempurl}


\bibitem[\protect\citeauthoryear{Aalto, Ayesta, and Righter}{Aalto
  et~al\mbox{.}}{2011}]%
        {aalto_properties_2011}
\bibfield{author}{\bibinfo{person}{Samuli Aalto}, \bibinfo{person}{Urtzi
  Ayesta}, {and} \bibinfo{person}{Rhonda Righter}.}
  \bibinfo{year}{2011}\natexlab{}.
\newblock \showarticletitle{Properties of the {{Gittins}} Index with
  Application to Optimal Scheduling}.
\newblock \bibinfo{journal}{\emph{Probability in the Engineering and
  Informational Sciences}} \bibinfo{volume}{25}, \bibinfo{number}{3}
  (\bibinfo{date}{July} \bibinfo{year}{2011}), \bibinfo{pages}{269--288}.
\newblock
\showISSN{0269-9648, 1469-8951}
\urldef\tempurl%
\doi{10.1017/S0269964811000015}%
\showDOI{\tempurl}


\bibitem[\protect\citeauthoryear{Artalejo, Economou, and
  {Lopez-Herrero}}{Artalejo et~al\mbox{.}}{2005}]%
        {artalejo_analysis_2005}
\bibfield{author}{\bibinfo{person}{Jesus~R. Artalejo}, \bibinfo{person}{Antonis
  Economou}, {and} \bibinfo{person}{Maria~J. {Lopez-Herrero}}.}
  \bibinfo{year}{2005}\natexlab{}.
\newblock \showarticletitle{Analysis of a Multiserver Queue with Setup Times}.
\newblock \bibinfo{journal}{\emph{Queueing Systems}} \bibinfo{volume}{51},
  \bibinfo{number}{1-2} (\bibinfo{date}{Oct.} \bibinfo{year}{2005}),
  \bibinfo{pages}{53--76}.
\newblock
\showISSN{0257-0130, 1572-9443}
\urldef\tempurl%
\doi{10.1007/s11134-005-1740-6}%
\showDOI{\tempurl}


\bibitem[\protect\citeauthoryear{Asmussen}{Asmussen}{2003}]%
        {asmussen_applied_2003}
\bibfield{author}{\bibinfo{person}{S{\o}ren Asmussen}.}
  \bibinfo{year}{2003}\natexlab{}.
\newblock \bibinfo{booktitle}{\emph{Applied Probability and Queues}
  (\bibinfo{edition}{second} ed.)}.
\newblock Number~51 in \bibinfo{series}{Stochastic {{Modelling}} and {{Applied
  Probability}}}. \bibinfo{publisher}{{Springer}}, \bibinfo{address}{{New York,
  NY}}.
\newblock
\showISBNx{978-0-387-00211-8}
\showLCCN{QA274 .A825 2003}
\urldef\tempurl%
\doi{10.1007/b97236}%
\showDOI{\tempurl}


\bibitem[\protect\citeauthoryear{Baccelli and Br{\'e}maud}{Baccelli and
  Br{\'e}maud}{2003}]%
        {baccelli_elements_2003}
\bibfield{author}{\bibinfo{person}{Fran{\c c}ois Baccelli} {and}
  \bibinfo{person}{Pierre Br{\'e}maud}.} \bibinfo{year}{2003}\natexlab{}.
\newblock \bibinfo{booktitle}{\emph{Elements of Queueing Theory:
  {{Palm-martingale}} Calculus and Stochastic Recurrences}
  (\bibinfo{edition}{second} ed.)}.
\newblock Number~26 in \bibinfo{series}{Applications of {{Mathematics}}}.
  \bibinfo{publisher}{{Springer}}, \bibinfo{address}{{Berlin, Germany}}.
\newblock
\showISBNx{978-3-540-66088-0}
\showLCCN{QA274.8 .B33 2003}


\bibitem[\protect\citeauthoryear{Banerjee, Budhiraja, and Puha}{Banerjee
  et~al\mbox{.}}{2022}]%
        {banerjee_heavy_2022}
\bibfield{author}{\bibinfo{person}{Sayan Banerjee}, \bibinfo{person}{Amarjit
  Budhiraja}, {and} \bibinfo{person}{Amber~L. Puha}.}
  \bibinfo{year}{2022}\natexlab{}.
\newblock \showarticletitle{Heavy Traffic Scaling Limits for Shortest Remaining
  Processing Time Queues with Heavy Tailed Processing Time Distributions}.
\newblock \bibinfo{journal}{\emph{The Annals of Applied Probability}}
  \bibinfo{volume}{32}, \bibinfo{number}{4} (\bibinfo{date}{Aug.}
  \bibinfo{year}{2022}), \bibinfo{pages}{2587--2651}.
\newblock
\showISSN{1050-5164}
\urldef\tempurl%
\doi{10.1214/21-AAP1741}%
\showDOI{\tempurl}


\bibitem[\protect\citeauthoryear{Bischof}{Bischof}{2001}]%
        {bischof_analysis_2001}
\bibfield{author}{\bibinfo{person}{Wolfgang Bischof}.}
  \bibinfo{year}{2001}\natexlab{}.
\newblock \showarticletitle{Analysis of {$M/G/1$}-Queues with Setup Times and
  Vacations under Six Different Service Disciplines}.
\newblock \bibinfo{journal}{\emph{Queueing Systems}} \bibinfo{volume}{39},
  \bibinfo{number}{4} (\bibinfo{year}{2001}), \bibinfo{pages}{265--301}.
\newblock
\showISSN{02570130}
\urldef\tempurl%
\doi{10.1023/A:1013992708103}%
\showDOI{\tempurl}


\bibitem[\protect\citeauthoryear{Boxma and Groenendijk}{Boxma and
  Groenendijk}{1987}]%
        {boxma_pseudo-conservation_1987}
\bibfield{author}{\bibinfo{person}{Onno~J. Boxma} {and} \bibinfo{person}{Wim~P.
  Groenendijk}.} \bibinfo{year}{1987}\natexlab{}.
\newblock \showarticletitle{Pseudo-Conservation Laws in Cyclic-Service
  Systems}.
\newblock \bibinfo{journal}{\emph{Journal of Applied Probability}}
  \bibinfo{volume}{24}, \bibinfo{number}{4} (\bibinfo{year}{1987}),
  \bibinfo{pages}{949--964}.
\newblock
\showISSN{00219002}
\urldef\tempurl%
\doi{10.2307/3214218}%
\showDOI{\tempurl}


\bibitem[\protect\citeauthoryear{Braverman, Dai, and Miyazawa}{Braverman
  et~al\mbox{.}}{2017}]%
        {braverman_heavy_2017}
\bibfield{author}{\bibinfo{person}{Anton Braverman}, \bibinfo{person}{J.~G.
  Dai}, {and} \bibinfo{person}{Masakiyo Miyazawa}.}
  \bibinfo{year}{2017}\natexlab{}.
\newblock \showarticletitle{Heavy Traffic Approximation for the Stationary
  Distribution of a Generalized {{Jackson}} Network: The {{BAR}} Approach}.
\newblock \bibinfo{journal}{\emph{Stochastic Systems}} \bibinfo{volume}{7},
  \bibinfo{number}{1} (\bibinfo{date}{June} \bibinfo{year}{2017}),
  \bibinfo{pages}{143--196}.
\newblock
\showISSN{1946-5238, 1946-5238}
\urldef\tempurl%
\doi{10.1287/15-SSY199}%
\showDOI{\tempurl}


\bibitem[\protect\citeauthoryear{Brill and Green}{Brill and Green}{1984}]%
        {brill_queues_1984}
\bibfield{author}{\bibinfo{person}{Percy~H. Brill} {and} \bibinfo{person}{Linda
  Green}.} \bibinfo{year}{1984}\natexlab{}.
\newblock \showarticletitle{Queues in Which Customers Receive Simultaneous
  Service from a Random Number of Servers: A System Point Approach}.
\newblock \bibinfo{journal}{\emph{Management Science}} \bibinfo{volume}{30},
  \bibinfo{number}{1} (\bibinfo{date}{Jan.} \bibinfo{year}{1984}),
  \bibinfo{pages}{51--68}.
\newblock
\showISSN{0025-1909, 1526-5501}
\urldef\tempurl%
\doi{10.1287/mnsc.30.1.51}%
\showDOI{\tempurl}


\bibitem[\protect\citeauthoryear{Chaudhry and Templeton}{Chaudhry and
  Templeton}{1983}]%
        {chaudhry_first_1983}
\bibfield{author}{\bibinfo{person}{Mohan~L. Chaudhry} {and}
  \bibinfo{person}{James G.~C. Templeton}.} \bibinfo{year}{1983}\natexlab{}.
\newblock \bibinfo{booktitle}{\emph{A First Course in Bulk Queues}}.
\newblock \bibinfo{publisher}{{Wiley}}, \bibinfo{address}{{New York, NY}}.
\newblock
\showISBNx{978-0-471-86260-4}
\showLCCN{T57.9 .C46 1983}


\bibitem[\protect\citeauthoryear{Choudhury}{Choudhury}{1998}]%
        {choudhury_batch_1998}
\bibfield{author}{\bibinfo{person}{Gautam Choudhury}.}
  \bibinfo{year}{1998}\natexlab{}.
\newblock \showarticletitle{On a Batch Arrival {{Poisson}} Queue with a Random
  Setup Time and Vacation Period}.
\newblock \bibinfo{journal}{\emph{Computers \& Operations Research}}
  \bibinfo{volume}{25}, \bibinfo{number}{12} (\bibinfo{date}{Dec.}
  \bibinfo{year}{1998}), \bibinfo{pages}{1013--1026}.
\newblock
\showISSN{03050548}
\urldef\tempurl%
\doi{10.1016/S0305-0548(98)00038-0}%
\showDOI{\tempurl}


\bibitem[\protect\citeauthoryear{Crovella and Bestavros}{Crovella and
  Bestavros}{1997}]%
        {crovella_self-similarity_1997}
\bibfield{author}{\bibinfo{person}{Mark~E. Crovella} {and}
  \bibinfo{person}{Azer Bestavros}.} \bibinfo{year}{Dec./1997}\natexlab{}.
\newblock \showarticletitle{Self-Similarity in {{World Wide Web}} Traffic:
  Evidence and Possible Causes}.
\newblock \bibinfo{journal}{\emph{IEEE/ACM Transactions on Networking}}
  \bibinfo{volume}{5}, \bibinfo{number}{6} (\bibinfo{year}{Dec./1997}),
  \bibinfo{pages}{835--846}.
\newblock
\showISSN{10636692}
\urldef\tempurl%
\doi{10.1109/90.650143}%
\showDOI{\tempurl}


\bibitem[\protect\citeauthoryear{Doshi}{Doshi}{1985}]%
        {doshi_note_1985}
\bibfield{author}{\bibinfo{person}{Bharat~T. Doshi}.}
  \bibinfo{year}{1985}\natexlab{}.
\newblock \showarticletitle{A Note on Stochastic Decomposition in a {$GI/G/1$}
  Queue with Vacations or Set-up Times}.
\newblock \bibinfo{journal}{\emph{Journal of Applied Probability}}
  \bibinfo{volume}{22}, \bibinfo{number}{2} (\bibinfo{date}{June}
  \bibinfo{year}{1985}), \bibinfo{pages}{419--428}.
\newblock
\showISSN{0021-9002, 1475-6072}
\urldef\tempurl%
\doi{10.2307/3213784}%
\showDOI{\tempurl}


\bibitem[\protect\citeauthoryear{Doshi}{Doshi}{1986}]%
        {doshi_queueing_1986}
\bibfield{author}{\bibinfo{person}{Bharat~T. Doshi}.}
  \bibinfo{year}{1986}\natexlab{}.
\newblock \showarticletitle{Queueing Systems with Vacations \textemdash{} a
  Survey}.
\newblock \bibinfo{journal}{\emph{Queueing Systems}} \bibinfo{volume}{1},
  \bibinfo{number}{1} (\bibinfo{date}{June} \bibinfo{year}{1986}),
  \bibinfo{pages}{29--66}.
\newblock
\showISSN{1572-9443}
\urldef\tempurl%
\doi{10.1007/BF01149327}%
\showDOI{\tempurl}


\bibitem[\protect\citeauthoryear{Doshi}{Doshi}{1990}]%
        {doshi_generalizations_1990}
\bibfield{author}{\bibinfo{person}{Bharat~T. Doshi}.}
  \bibinfo{year}{1990}\natexlab{}.
\newblock \showarticletitle{Generalizations of the Stochastic Decomposition
  Results for Single Server Queues with Vacations}.
\newblock \bibinfo{journal}{\emph{Communications in Statistics. Stochastic
  Models}} \bibinfo{volume}{6}, \bibinfo{number}{2} (\bibinfo{date}{Jan.}
  \bibinfo{year}{1990}), \bibinfo{pages}{307--333}.
\newblock
\showISSN{0882-0287}
\urldef\tempurl%
\doi{10.1080/15326349908807149}%
\showDOI{\tempurl}


\bibitem[\protect\citeauthoryear{Fuhrmann}{Fuhrmann}{1984}]%
        {fuhrmann_note_1984}
\bibfield{author}{\bibinfo{person}{S.~W. Fuhrmann}.}
  \bibinfo{year}{1984}\natexlab{}.
\newblock \showarticletitle{A Note on the {$M/G/1$} Queue with Server
  Vacations}.
\newblock \bibinfo{journal}{\emph{Operations Research}} \bibinfo{volume}{32},
  \bibinfo{number}{6} (\bibinfo{year}{1984}), \bibinfo{pages}{1368--1373}.
\newblock
\showISSN{0030364X, 15265463}
\urldef\tempurl%
\url{http://www.jstor.org/stable/170954}%
\showURL{%
\tempurl}


\bibitem[\protect\citeauthoryear{Fuhrmann and Cooper}{Fuhrmann and
  Cooper}{1985}]%
        {fuhrmann_stochastic_1985}
\bibfield{author}{\bibinfo{person}{S.~W. Fuhrmann} {and}
  \bibinfo{person}{Robert~B. Cooper}.} \bibinfo{year}{1985}\natexlab{}.
\newblock \showarticletitle{Stochastic Decompositions in the {$M/G/1$} Queue
  with Generalized Vacations}.
\newblock \bibinfo{journal}{\emph{Operations Research}} \bibinfo{volume}{33},
  \bibinfo{number}{5} (\bibinfo{date}{Oct.} \bibinfo{year}{1985}),
  \bibinfo{pages}{1117--1129}.
\newblock
\showISSN{0030-364X, 1526-5463}
\urldef\tempurl%
\doi{10.1287/opre.33.5.1117}%
\showDOI{\tempurl}


\bibitem[\protect\citeauthoryear{Gandhi, Doroudi, {Harchol-Balter}, and
  {Scheller-Wolf}}{Gandhi et~al\mbox{.}}{2014}]%
        {gandhi_exact_2014}
\bibfield{author}{\bibinfo{person}{Anshul Gandhi}, \bibinfo{person}{Sherwin
  Doroudi}, \bibinfo{person}{Mor {Harchol-Balter}}, {and} \bibinfo{person}{Alan
  {Scheller-Wolf}}.} \bibinfo{year}{2014}\natexlab{}.
\newblock \showarticletitle{Exact Analysis of the {{M}}/{{M}}/k/Setup Class of
  {{Markov}} Chains via Recursive Renewal Reward}.
\newblock \bibinfo{journal}{\emph{Queueing Systems}} \bibinfo{volume}{77},
  \bibinfo{number}{2} (\bibinfo{date}{June} \bibinfo{year}{2014}),
  \bibinfo{pages}{177--209}.
\newblock
\showISSN{0257-0130, 1572-9443}
\urldef\tempurl%
\doi{10.1007/s11134-014-9409-7}%
\showDOI{\tempurl}


\bibitem[\protect\citeauthoryear{Gandhi and {Harchol-Balter}}{Gandhi and
  {Harchol-Balter}}{2009}]%
        {gandhi_mgk_2009}
\bibfield{author}{\bibinfo{person}{Anshul Gandhi} {and} \bibinfo{person}{Mor
  {Harchol-Balter}}.} \bibinfo{year}{2009}\natexlab{}.
\newblock \bibinfo{booktitle}{\emph{M/{{G}}/k with Exponential Setup}}.
\newblock \bibinfo{type}{Technical {{Report}}} CMU-CS-09-166.
  \bibinfo{institution}{{Carnegie Mellon University}},
  \bibinfo{address}{{Pittsburgh, PA}}. \bibinfo{pages}{44} pages.
\newblock
\urldef\tempurl%
\doi{10.1184/R1/6607073}%
\showDOI{\tempurl}


\bibitem[\protect\citeauthoryear{Gandhi and {Harchol-Balter}}{Gandhi and
  {Harchol-Balter}}{2013}]%
        {gandhi_mgk_2013}
\bibfield{author}{\bibinfo{person}{Anshul Gandhi} {and} \bibinfo{person}{Mor
  {Harchol-Balter}}.} \bibinfo{year}{2013}\natexlab{}.
\newblock \showarticletitle{{$M/G/k$} with Staggered Setup}.
\newblock \bibinfo{journal}{\emph{Operations Research Letters}}
  \bibinfo{volume}{41}, \bibinfo{number}{4} (\bibinfo{date}{July}
  \bibinfo{year}{2013}), \bibinfo{pages}{317--320}.
\newblock
\showISSN{01676377}
\urldef\tempurl%
\doi{10.1016/j.orl.2013.03.006}%
\showDOI{\tempurl}


\bibitem[\protect\citeauthoryear{Gandhi, {Harchol-Balter}, and Adan}{Gandhi
  et~al\mbox{.}}{2010}]%
        {gandhi_server_2010}
\bibfield{author}{\bibinfo{person}{Anshul Gandhi}, \bibinfo{person}{Mor
  {Harchol-Balter}}, {and} \bibinfo{person}{Ivo Adan}.}
  \bibinfo{year}{2010}\natexlab{}.
\newblock \showarticletitle{Server Farms with Setup Costs}.
\newblock \bibinfo{journal}{\emph{Performance Evaluation}}
  \bibinfo{volume}{67}, \bibinfo{number}{11} (\bibinfo{date}{Nov.}
  \bibinfo{year}{2010}), \bibinfo{pages}{1123--1138}.
\newblock
\showISSN{01665316}
\urldef\tempurl%
\doi{10.1016/j.peva.2010.07.004}%
\showDOI{\tempurl}


\bibitem[\protect\citeauthoryear{Gittins, Glazebrook, and Weber}{Gittins
  et~al\mbox{.}}{2011}]%
        {gittins_multi-armed_2011}
\bibfield{author}{\bibinfo{person}{John~C. Gittins}, \bibinfo{person}{Kevin~D.
  Glazebrook}, {and} \bibinfo{person}{Richard~R. Weber}.}
  \bibinfo{year}{2011}\natexlab{}.
\newblock \bibinfo{booktitle}{\emph{Multi-Armed Bandit Allocation Indices}
  (\bibinfo{edition}{second} ed.)}.
\newblock \bibinfo{publisher}{{Wiley}}, \bibinfo{address}{{Chichester, UK}}.
\newblock
\showISBNx{978-0-470-67002-6}
\showLCCN{QA279 .G55 2011}


\bibitem[\protect\citeauthoryear{Gittins and Jones}{Gittins and Jones}{1974}]%
        {gittins_dynamic_1974}
\bibfield{author}{\bibinfo{person}{John~C. Gittins} {and}
  \bibinfo{person}{David~M. Jones}.} \bibinfo{year}{1974}\natexlab{}.
\newblock \showarticletitle{A Dynamic Allocation Index for the Sequential
  Design of Experiments}.
\newblock In \bibinfo{booktitle}{\emph{Progress in Statistics}},
  \bibfield{editor}{\bibinfo{person}{Joseph~M. Gani},
  \bibinfo{person}{K{\'a}roly Sarkadi}, {and} \bibinfo{person}{Istv{\'a}n
  Vincze}} (Eds.). Number~9 in \bibinfo{series}{Colloquia {{Mathematica
  Societatis J\'anos Bolyai}}}. \bibinfo{publisher}{{North-Holland}},
  \bibinfo{address}{{Amsterdam, The Netherlands}}, \bibinfo{pages}{241--266}.
\newblock
\showISBNx{978-0-7204-2808-7}
\urldef\tempurl%
\url{https://ci.nii.ac.jp/ncid/BA03522866}%
\showURL{%
\tempurl}


\bibitem[\protect\citeauthoryear{Glazebrook}{Glazebrook}{2003}]%
        {glazebrook_analysis_2003}
\bibfield{author}{\bibinfo{person}{Kevin~D. Glazebrook}.}
  \bibinfo{year}{2003}\natexlab{}.
\newblock \showarticletitle{An Analysis of {{Klimov}}'s Problem with Parallel
  Servers}.
\newblock \bibinfo{journal}{\emph{Mathematical Methods of Operations Research}}
  \bibinfo{volume}{58}, \bibinfo{number}{1} (\bibinfo{date}{Sept.}
  \bibinfo{year}{2003}), \bibinfo{pages}{1--28}.
\newblock
\showISSN{1432-2994, 1432-5217}
\urldef\tempurl%
\doi{10.1007/s001860300278}%
\showDOI{\tempurl}


\bibitem[\protect\citeauthoryear{Glazebrook and {Ni{\~n}o-Mora}}{Glazebrook and
  {Ni{\~n}o-Mora}}{2001}]%
        {glazebrook_parallel_2001}
\bibfield{author}{\bibinfo{person}{Kevin~D. Glazebrook} {and}
  \bibinfo{person}{Jos{\'e} {Ni{\~n}o-Mora}}.} \bibinfo{year}{2001}\natexlab{}.
\newblock \showarticletitle{Parallel Scheduling of Multiclass {$M/M/m$} Queues:
  Approximate and Heavy-Traffic Optimization of Achievable Performance}.
\newblock \bibinfo{journal}{\emph{Operations Research}} \bibinfo{volume}{49},
  \bibinfo{number}{4} (\bibinfo{date}{Aug.} \bibinfo{year}{2001}),
  \bibinfo{pages}{609--623}.
\newblock
\showISSN{0030-364X, 1526-5463}
\urldef\tempurl%
\doi{10.1287/opre.49.4.609.11225}%
\showDOI{\tempurl}


\bibitem[\protect\citeauthoryear{Grosof, {Harchol-Balter}, and
  {Scheller-Wolf}}{Grosof et~al\mbox{.}}{2020}]%
        {grosof_stability_2020}
\bibfield{author}{\bibinfo{person}{Isaac Grosof}, \bibinfo{person}{Mor
  {Harchol-Balter}}, {and} \bibinfo{person}{Alan {Scheller-Wolf}}.}
  \bibinfo{year}{2020}\natexlab{}.
\newblock \bibinfo{title}{Stability for Two-Class Multiserver-Job Systems}.
\newblock
\newblock
\showeprint{arxiv}{2010.00631}{cs}
\urldef\tempurl%
\url{http://arxiv.org/abs/2010.00631}%
\showURL{%
\tempurl}


\bibitem[\protect\citeauthoryear{Grosof, {Harchol-Balter}, and
  {Scheller-Wolf}}{Grosof et~al\mbox{.}}{2022a}]%
        {grosof_wcfs_2022}
\bibfield{author}{\bibinfo{person}{Isaac Grosof}, \bibinfo{person}{Mor
  {Harchol-Balter}}, {and} \bibinfo{person}{Alan {Scheller-Wolf}}.}
  \bibinfo{year}{2022}\natexlab{a}.
\newblock \showarticletitle{{{WCFS}}: A New Framework for Analyzing Multiserver
  Systems}.
\newblock \bibinfo{journal}{\emph{Queueing Systems}} \bibinfo{volume}{102},
  \bibinfo{number}{1-2} (\bibinfo{date}{Oct.} \bibinfo{year}{2022}),
  \bibinfo{pages}{143--174}.
\newblock
\showISSN{0257-0130, 1572-9443}
\urldef\tempurl%
\doi{10.1007/s11134-022-09848-6}%
\showDOI{\tempurl}


\bibitem[\protect\citeauthoryear{Grosof, Scully, and {Harchol-Balter}}{Grosof
  et~al\mbox{.}}{2018}]%
        {grosof_srpt_2018}
\bibfield{author}{\bibinfo{person}{Isaac Grosof}, \bibinfo{person}{Ziv Scully},
  {and} \bibinfo{person}{Mor {Harchol-Balter}}.}
  \bibinfo{year}{2018}\natexlab{}.
\newblock \showarticletitle{{{SRPT}} for Multiserver Systems}.
\newblock \bibinfo{journal}{\emph{Performance Evaluation}}
  \bibinfo{volume}{127--128} (\bibinfo{date}{Nov.} \bibinfo{year}{2018}),
  \bibinfo{pages}{154--175}.
\newblock
\showISSN{01665316}
\urldef\tempurl%
\doi{10.1016/j.peva.2018.10.001}%
\showDOI{\tempurl}


\bibitem[\protect\citeauthoryear{Grosof, Scully, and {Harchol-Balter}}{Grosof
  et~al\mbox{.}}{2019}]%
        {grosof_load_2019}
\bibfield{author}{\bibinfo{person}{Isaac Grosof}, \bibinfo{person}{Ziv Scully},
  {and} \bibinfo{person}{Mor {Harchol-Balter}}.}
  \bibinfo{year}{2019}\natexlab{}.
\newblock \showarticletitle{Load Balancing Guardrails: Keeping Your Heavy
  Traffic on the Road to Low Response Times}.
\newblock \bibinfo{journal}{\emph{Proceedings of the ACM on Measurement and
  Analysis of Computing Systems}} \bibinfo{volume}{3}, \bibinfo{number}{2},
  Article \bibinfo{articleno}{42} (\bibinfo{date}{June} \bibinfo{year}{2019}),
  \bibinfo{numpages}{31}~pages.
\newblock
\showISSN{2476-1249, 2476-1249}
\urldef\tempurl%
\doi{10.1145/3341617.3326157}%
\showDOI{\tempurl}


\bibitem[\protect\citeauthoryear{Grosof, Scully, {Harchol-Balter}, and
  {Scheller-Wolf}}{Grosof et~al\mbox{.}}{2022b}]%
        {grosof_optimal_2022}
\bibfield{author}{\bibinfo{person}{Isaac Grosof}, \bibinfo{person}{Ziv Scully},
  \bibinfo{person}{Mor {Harchol-Balter}}, {and} \bibinfo{person}{Alan
  {Scheller-Wolf}}.} \bibinfo{year}{2022}\natexlab{b}.
\newblock \showarticletitle{Optimal Scheduling in the Multiserver-Job Model
  under Heavy Traffic}.
\newblock \bibinfo{journal}{\emph{Proceedings of the ACM on Measurement and
  Analysis of Computing Systems}} \bibinfo{volume}{6}, \bibinfo{number}{3}
  (\bibinfo{date}{Dec.} \bibinfo{year}{2022}), \bibinfo{pages}{1--32}.
\newblock
\showISSN{2476-1249}
\urldef\tempurl%
\doi{10.1145/3570612}%
\showDOI{\tempurl}


\bibitem[\protect\citeauthoryear{{Harchol-Balter}}{{Harchol-Balter}}{1996}]%
        {harchol-balter_network_1996}
\bibfield{author}{\bibinfo{person}{Mor {Harchol-Balter}}.}
  \bibinfo{year}{1996}\natexlab{}.
\newblock \emph{\bibinfo{title}{Network Analysis without Exponentiality
  Assumptions}}.
\newblock \bibinfo{thesistype}{Ph.\,D. Dissertation}.
  \bibinfo{school}{University of California, Berkeley},
  \bibinfo{address}{{Berkeley, CA}}.
\newblock
\urldef\tempurl%
\url{https://www.cs.cmu.edu/~harchol/Thesis/thesis.html}%
\showURL{%
\tempurl}


\bibitem[\protect\citeauthoryear{{Harchol-Balter}}{{Harchol-Balter}}{2013}]%
        {harchol-balter_performance_2013}
\bibfield{author}{\bibinfo{person}{Mor {Harchol-Balter}}.}
  \bibinfo{year}{2013}\natexlab{}.
\newblock \bibinfo{booktitle}{\emph{Performance Modeling and Design of Computer
  Systems: Queueing Theory in Action}}.
\newblock \bibinfo{publisher}{{Cambridge University Press}},
  \bibinfo{address}{{Cambridge, UK}}.
\newblock
\showISBNx{978-1-107-02750-3}
\showLCCN{QA76.545 .H37 2013}


\bibitem[\protect\citeauthoryear{{Harchol-Balter}}{{Harchol-Balter}}{2021}]%
        {harchol-balter_open_2021}
\bibfield{author}{\bibinfo{person}{Mor {Harchol-Balter}}.}
  \bibinfo{year}{2021}\natexlab{}.
\newblock \showarticletitle{Open Problems in Queueing Theory Inspired by
  Datacenter Computing}.
\newblock \bibinfo{journal}{\emph{Queueing Systems}} \bibinfo{volume}{97},
  \bibinfo{number}{1} (\bibinfo{date}{Feb.} \bibinfo{year}{2021}),
  \bibinfo{pages}{3--37}.
\newblock
\showISSN{1572-9443}
\urldef\tempurl%
\doi{10.1007/s11134-020-09684-6}%
\showDOI{\tempurl}


\bibitem[\protect\citeauthoryear{{Harchol-Balter} and Downey}{{Harchol-Balter}
  and Downey}{1997}]%
        {harchol-balter_exploiting_1997}
\bibfield{author}{\bibinfo{person}{Mor {Harchol-Balter}} {and}
  \bibinfo{person}{Allen~B. Downey}.} \bibinfo{year}{1997}\natexlab{}.
\newblock \showarticletitle{Exploiting Process Lifetime Distributions for
  Dynamic Load Balancing}.
\newblock \bibinfo{journal}{\emph{ACM Transactions on Computer Systems}}
  \bibinfo{volume}{15}, \bibinfo{number}{3} (\bibinfo{date}{Aug.}
  \bibinfo{year}{1997}), \bibinfo{pages}{253--285}.
\newblock
\showISSN{0734-2071, 1557-7333}
\urldef\tempurl%
\doi{10.1145/263326.263344}%
\showDOI{\tempurl}


\bibitem[\protect\citeauthoryear{He and Jewkes}{He and Jewkes}{1995}]%
        {he_flow_1995}
\bibfield{author}{\bibinfo{person}{Qi-Ming He} {and} \bibinfo{person}{Elizabeth
  Jewkes}.} \bibinfo{year}{1995}\natexlab{}.
\newblock \showarticletitle{Flow Time in the {$MAP/G/1$} Queue with Customer
  Batching and Setup Times}.
\newblock \bibinfo{journal}{\emph{Communications in Statistics. Stochastic
  Models}} \bibinfo{volume}{11}, \bibinfo{number}{4} (\bibinfo{date}{Jan.}
  \bibinfo{year}{1995}), \bibinfo{pages}{691--711}.
\newblock
\showISSN{0882-0287}
\urldef\tempurl%
\doi{10.1080/15326349508807367}%
\showDOI{\tempurl}


\bibitem[\protect\citeauthoryear{Hong and Wang}{Hong and Wang}{2022}]%
        {hong_sharp_2022}
\bibfield{author}{\bibinfo{person}{Yige Hong} {and} \bibinfo{person}{Weina
  Wang}.} \bibinfo{year}{2022}\natexlab{}.
\newblock \bibinfo{title}{Sharp Waiting-Time Bounds for Multiserver Jobs}.
\newblock
\newblock
\showeprint{arxiv}{2109.05343}{cs, math}
\urldef\tempurl%
\url{http://arxiv.org/abs/2109.05343}%
\showURL{%
\tempurl}


\bibitem[\protect\citeauthoryear{Kiefer and Wolfowitz}{Kiefer and
  Wolfowitz}{1955}]%
        {kiefer_theory_1955}
\bibfield{author}{\bibinfo{person}{Jack Kiefer} {and} \bibinfo{person}{Jacob
  Wolfowitz}.} \bibinfo{year}{1955}\natexlab{}.
\newblock \showarticletitle{On the Theory of Queues with Many Servers}.
\newblock \bibinfo{journal}{\emph{Trans. Amer. Math. Soc.}}
  \bibinfo{volume}{78}, \bibinfo{number}{1} (\bibinfo{date}{Jan.}
  \bibinfo{year}{1955}), \bibinfo{pages}{1--18}.
\newblock
\urldef\tempurl%
\url{https://www.jstor.org/stable/1992945}%
\showURL{%
\tempurl}


\bibitem[\protect\citeauthoryear{Li and Zhu}{Li and Zhu}{1994}]%
        {li_new_1994}
\bibfield{author}{\bibinfo{person}{Huan Li} {and} \bibinfo{person}{Yixin Zhu}.}
  \bibinfo{year}{1994}\natexlab{}.
\newblock \showarticletitle{A New Approach to the {$G/G/1$} Queue with
  Generalized Setup Time and Exhaustive Service}.
\newblock \bibinfo{journal}{\emph{Journal of Applied Probability}}
  \bibinfo{volume}{31}, \bibinfo{number}{4} (\bibinfo{date}{Dec.}
  \bibinfo{year}{1994}), \bibinfo{pages}{1083--1097}.
\newblock
\showISSN{0021-9002, 1475-6072}
\urldef\tempurl%
\doi{10.2307/3215331}%
\showDOI{\tempurl}


\bibitem[\protect\citeauthoryear{Li and Ou}{Li and Ou}{1995}]%
        {li_characterizing_1995}
\bibfield{author}{\bibinfo{person}{Jingwen Li} {and} \bibinfo{person}{Jihong
  Ou}.} \bibinfo{year}{1995}\natexlab{}.
\newblock \showarticletitle{Characterizing the Idle-Period Distribution of
  {$GI/G/1$} Queues}.
\newblock \bibinfo{journal}{\emph{Journal of Applied Probability}}
  \bibinfo{volume}{32}, \bibinfo{number}{1} (\bibinfo{year}{1995}),
  \bibinfo{pages}{247--255}.
\newblock
\showISSN{00219002}
\urldef\tempurl%
\doi{10.2307/3214933}%
\showDOI{\tempurl}


\bibitem[\protect\citeauthoryear{Li and Goldberg}{Li and Goldberg}{2017}]%
        {li_simple_2017}
\bibfield{author}{\bibinfo{person}{Yuan Li} {and} \bibinfo{person}{David~A.
  Goldberg}.} \bibinfo{year}{2017}\natexlab{}.
\newblock \bibinfo{title}{Simple and Explicit Bounds for Multi-Server Queues
  with Universal {$1/(1 - \rho)$} Scaling}.
\newblock
\newblock
\showeprint{arxiv}{1706.04628}{math}
\urldef\tempurl%
\url{http://arxiv.org/abs/1706.04628}%
\showURL{%
\tempurl}


\bibitem[\protect\citeauthoryear{Lin, Wierman, and Zwart}{Lin
  et~al\mbox{.}}{2011}]%
        {lin_heavy-traffic_2011}
\bibfield{author}{\bibinfo{person}{Minghong Lin}, \bibinfo{person}{Adam
  Wierman}, {and} \bibinfo{person}{Bert Zwart}.}
  \bibinfo{year}{2011}\natexlab{}.
\newblock \showarticletitle{Heavy-Traffic Analysis of Mean Response Time under
  Shortest Remaining Processing Time}.
\newblock \bibinfo{journal}{\emph{Performance Evaluation}}
  \bibinfo{volume}{68}, \bibinfo{number}{10} (\bibinfo{date}{Oct.}
  \bibinfo{year}{2011}), \bibinfo{pages}{955--966}.
\newblock
\showISSN{01665316}
\urldef\tempurl%
\doi{10.1016/j.peva.2011.06.001}%
\showDOI{\tempurl}


\bibitem[\protect\citeauthoryear{Little}{Little}{2011}]%
        {little_littles_2011}
\bibfield{author}{\bibinfo{person}{John D.~C. Little}.}
  \bibinfo{year}{2011}\natexlab{}.
\newblock \showarticletitle{Little's Law as Viewed on Its 50th Anniversary}.
\newblock \bibinfo{journal}{\emph{Operations Research}} \bibinfo{volume}{59},
  \bibinfo{number}{3} (\bibinfo{date}{June} \bibinfo{year}{2011}),
  \bibinfo{pages}{536--549}.
\newblock
\showISSN{0030-364X, 1526-5463}
\urldef\tempurl%
\doi{10.1287/opre.1110.0940}%
\showDOI{\tempurl}


\bibitem[\protect\citeauthoryear{Loynes}{Loynes}{1962}]%
        {loynes_stability_1962}
\bibfield{author}{\bibinfo{person}{Robert~M. Loynes}.}
  \bibinfo{year}{1962}\natexlab{}.
\newblock \showarticletitle{The Stability of a Queue with Non-Independent
  Inter-Arrival and Service Times}.
\newblock \bibinfo{journal}{\emph{Mathematical Proceedings of the Cambridge
  Philosophical Society}} \bibinfo{volume}{58}, \bibinfo{number}{3}
  (\bibinfo{date}{July} \bibinfo{year}{1962}), \bibinfo{pages}{497--520}.
\newblock
\showISSN{0305-0041, 1469-8064}
\urldef\tempurl%
\doi{10.1017/S0305004100036781}%
\showDOI{\tempurl}


\bibitem[\protect\citeauthoryear{Meyn and Tweedie}{Meyn and Tweedie}{2009}]%
        {meyn_markov_2009}
\bibfield{author}{\bibinfo{person}{Sean~P. Meyn} {and}
  \bibinfo{person}{Richard~L. Tweedie}.} \bibinfo{year}{2009}\natexlab{}.
\newblock \bibinfo{booktitle}{\emph{Markov Chains and Stochastic Stability}
  (\bibinfo{edition}{second} ed.)}.
\newblock \bibinfo{publisher}{{Cambridge University Press}},
  \bibinfo{address}{{Cambridge, UK}}.
\newblock
\showISBNx{978-0-521-73182-9}
\urldef\tempurl%
\doi{10.1017/CBO9780511626630}%
\showDOI{\tempurl}


\bibitem[\protect\citeauthoryear{Miyazawa}{Miyazawa}{1994a}]%
        {miyazawa_decomposition_1994}
\bibfield{author}{\bibinfo{person}{Masakiyo Miyazawa}.}
  \bibinfo{year}{1994}\natexlab{a}.
\newblock \showarticletitle{Decomposition Formulas for Single Server Queues
  with Vacations : A Unified Approach by the Rate Conservation Law}.
\newblock \bibinfo{journal}{\emph{Communications in Statistics. Stochastic
  Models}} \bibinfo{volume}{10}, \bibinfo{number}{2} (\bibinfo{date}{Jan.}
  \bibinfo{year}{1994}), \bibinfo{pages}{389--413}.
\newblock
\showISSN{0882-0287}
\urldef\tempurl%
\doi{10.1080/15326349408807301}%
\showDOI{\tempurl}


\bibitem[\protect\citeauthoryear{Miyazawa}{Miyazawa}{1994b}]%
        {miyazawa_rate_1994}
\bibfield{author}{\bibinfo{person}{Masakiyo Miyazawa}.}
  \bibinfo{year}{1994}\natexlab{b}.
\newblock \showarticletitle{Rate Conservation Laws: A Survey}.
\newblock \bibinfo{journal}{\emph{Queueing Systems}} \bibinfo{volume}{15},
  \bibinfo{number}{1} (\bibinfo{date}{March} \bibinfo{year}{1994}),
  \bibinfo{pages}{1--58}.
\newblock
\showISSN{1572-9443}
\urldef\tempurl%
\doi{10.1007/BF01189231}%
\showDOI{\tempurl}


\bibitem[\protect\citeauthoryear{Miyazawa}{Miyazawa}{2015}]%
        {miyazawa_diffusion_2015}
\bibfield{author}{\bibinfo{person}{Masakiyo Miyazawa}.}
  \bibinfo{year}{2015}\natexlab{}.
\newblock \showarticletitle{Diffusion Approximation for Stationary Analysis of
  Queues and Their Networks: A Review}.
\newblock \bibinfo{journal}{\emph{Journal of the Operations Research Society of
  Japan}} \bibinfo{volume}{58}, \bibinfo{number}{1} (\bibinfo{year}{2015}),
  \bibinfo{pages}{104--148}.
\newblock
\showISSN{0453-4514, 2188-8299}
\urldef\tempurl%
\doi{10.15807/jorsj.58.104}%
\showDOI{\tempurl}


\bibitem[\protect\citeauthoryear{Morozov and Steyaert}{Morozov and
  Steyaert}{2021}]%
        {morozov_stability_2021}
\bibfield{author}{\bibinfo{person}{Evsey Morozov} {and} \bibinfo{person}{Bart
  Steyaert}.} \bibinfo{year}{2021}\natexlab{}.
\newblock \bibinfo{booktitle}{\emph{Stability Analysis of Regenerative Queueing
  Models: Mathematical Methods and Applications}}.
\newblock \bibinfo{publisher}{{Springer}}, \bibinfo{address}{{Cham,
  Switzerland}}.
\newblock
\showISBNx{978-3-030-82437-2 978-3-030-82438-9}
\urldef\tempurl%
\doi{10.1007/978-3-030-82438-9}%
\showDOI{\tempurl}


\bibitem[\protect\citeauthoryear{Park and Willinger}{Park and
  Willinger}{2000}]%
        {park_self-similar_2000}
\bibfield{editor}{\bibinfo{person}{Kihong Park} {and} \bibinfo{person}{Walter
  Willinger}} (Eds.). \bibinfo{year}{2000}\natexlab{}.
\newblock \bibinfo{booktitle}{\emph{Self-Similar Network Traffic and
  Performance Evaluation}}.
\newblock \bibinfo{publisher}{{Wiley}}, \bibinfo{address}{{New York, NY}}.
\newblock
\showISBNx{978-0-471-31974-0}
\showLCCN{TK5105.5 .S422 2000}
\urldef\tempurl%
\doi{10.1002/047120644X}%
\showDOI{\tempurl}


\bibitem[\protect\citeauthoryear{Pender and {Phung-Duc}}{Pender and
  {Phung-Duc}}{2016}]%
        {pender_law_2016}
\bibfield{author}{\bibinfo{person}{Jamol Pender} {and} \bibinfo{person}{Tuan
  {Phung-Duc}}.} \bibinfo{year}{2016}\natexlab{}.
\newblock \showarticletitle{A Law of Large Numbers for
  {{M}}/{{M}}/c/{{Delayoff-Setup}} Queues with Nonstationary Arrivals}.
\newblock In \bibinfo{booktitle}{\emph{Analytical and {{Stochastic Modelling
  Techniques}} and {{Applications}}}},
  \bibfield{editor}{\bibinfo{person}{Sabine Wittevrongel} {and}
  \bibinfo{person}{Tuan {Phung-Duc}}} (Eds.). Vol.~\bibinfo{volume}{9845}.
  \bibinfo{publisher}{{Springer}}, \bibinfo{address}{{Cham, Switzerland}},
  \bibinfo{pages}{253--268}.
\newblock
\showISBNx{978-3-319-43903-7 978-3-319-43904-4}
\urldef\tempurl%
\doi{10.1007/978-3-319-43904-4_18}%
\showDOI{\tempurl}


\bibitem[\protect\citeauthoryear{Peterson}{Peterson}{1996}]%
        {peterson_data_1996}
\bibfield{author}{\bibinfo{person}{David~L. Peterson}.}
  \bibinfo{year}{1996}\natexlab{}.
\newblock \showarticletitle{Data Center {{I}}/{{O}} Patterns and Power Laws}.
  In \bibinfo{booktitle}{\emph{22nd {{International Computer Measurement Group
  Conference}}}}. \bibinfo{publisher}{{Computer Measurement Group}},
  \bibinfo{address}{{San Diego, CA}}, \bibinfo{pages}{1034--1045}.
\newblock
\urldef\tempurl%
\url{http://www.cmg.org/?s2member_file_download=/proceedings/1996/96INT631.pdf}%
\showURL{%
\tempurl}


\bibitem[\protect\citeauthoryear{Righter and Shanthikumar}{Righter and
  Shanthikumar}{1989}]%
        {righter_scheduling_1989}
\bibfield{author}{\bibinfo{person}{Rhonda Righter} {and}
  \bibinfo{person}{J.~George Shanthikumar}.} \bibinfo{year}{1989}\natexlab{}.
\newblock \showarticletitle{Scheduling Multiclass Single Server Queueing
  Systems to Stochastically Maximize the Number of Successful Departures}.
\newblock \bibinfo{journal}{\emph{Probability in the Engineering and
  Informational Sciences}} \bibinfo{volume}{3}, \bibinfo{number}{3}
  (\bibinfo{date}{July} \bibinfo{year}{1989}), \bibinfo{pages}{323--333}.
\newblock
\showISSN{0269-9648, 1469-8951}
\urldef\tempurl%
\doi{10.1017/S0269964800001194}%
\showDOI{\tempurl}


\bibitem[\protect\citeauthoryear{Righter, Shanthikumar, and Yamazaki}{Righter
  et~al\mbox{.}}{1990}]%
        {righter_extremal_1990}
\bibfield{author}{\bibinfo{person}{Rhonda Righter}, \bibinfo{person}{J.~George
  Shanthikumar}, {and} \bibinfo{person}{Genji Yamazaki}.}
  \bibinfo{year}{1990}\natexlab{}.
\newblock \showarticletitle{On Extremal Service Disciplines in Single-Stage
  Queueing Systems}.
\newblock \bibinfo{journal}{\emph{Journal of Applied Probability}}
  \bibinfo{volume}{27}, \bibinfo{number}{2} (\bibinfo{date}{June}
  \bibinfo{year}{1990}), \bibinfo{pages}{409--416}.
\newblock
\showISSN{0021-9002, 1475-6072}
\urldef\tempurl%
\doi{10.2307/3214660}%
\showDOI{\tempurl}


\bibitem[\protect\citeauthoryear{Rumyantsev, Basmadjian, Astafiev, and
  Golovin}{Rumyantsev et~al\mbox{.}}{2022}]%
        {rumyantsev_three-level_2022}
\bibfield{author}{\bibinfo{person}{Alexander Rumyantsev},
  \bibinfo{person}{Robert Basmadjian}, \bibinfo{person}{Sergey Astafiev}, {and}
  \bibinfo{person}{Alexander Golovin}.} \bibinfo{year}{2022}\natexlab{}.
\newblock \showarticletitle{Three-Level Modeling of a Speed-Scaling
  Supercomputer}.
\newblock \bibinfo{journal}{\emph{Annals of Operations Research}}
  (\bibinfo{date}{June} \bibinfo{year}{2022}).
\newblock
\showISSN{0254-5330, 1572-9338}
\urldef\tempurl%
\doi{10.1007/s10479-022-04830-0}%
\showDOI{\tempurl}


\bibitem[\protect\citeauthoryear{Rumyantsev and Morozov}{Rumyantsev and
  Morozov}{2017}]%
        {rumyantsev_stability_2017}
\bibfield{author}{\bibinfo{person}{Alexander Rumyantsev} {and}
  \bibinfo{person}{Evsey Morozov}.} \bibinfo{year}{2017}\natexlab{}.
\newblock \showarticletitle{Stability Criterion of a Multiserver Model with
  Simultaneous Service}.
\newblock \bibinfo{journal}{\emph{Annals of Operations Research}}
  \bibinfo{volume}{252}, \bibinfo{number}{1} (\bibinfo{date}{May}
  \bibinfo{year}{2017}), \bibinfo{pages}{29--39}.
\newblock
\showISSN{1572-9338}
\urldef\tempurl%
\doi{10.1007/s10479-015-1917-2}%
\showDOI{\tempurl}


\bibitem[\protect\citeauthoryear{Schrage}{Schrage}{1968}]%
        {schrage_proof_1968}
\bibfield{author}{\bibinfo{person}{Linus~E. Schrage}.}
  \bibinfo{year}{1968}\natexlab{}.
\newblock \showarticletitle{A Proof of the Optimality of the Shortest Remaining
  Processing Time Discipline}.
\newblock \bibinfo{journal}{\emph{Operations Research}} \bibinfo{volume}{16},
  \bibinfo{number}{3} (\bibinfo{date}{June} \bibinfo{year}{1968}),
  \bibinfo{pages}{687--690}.
\newblock
\showISSN{0030-364X, 1526-5463}
\urldef\tempurl%
\doi{10.1287/opre.16.3.687}%
\showDOI{\tempurl}


\bibitem[\protect\citeauthoryear{Scully}{Scully}{2022}]%
        {scully_new_2022}
\bibfield{author}{\bibinfo{person}{Ziv Scully}.}
  \bibinfo{year}{2022}\natexlab{}.
\newblock \emph{\bibinfo{title}{A New Toolbox for Scheduling Theory}}.
\newblock \bibinfo{thesistype}{Ph.\,D. Dissertation}. \bibinfo{school}{Carnegie
  Mellon University}, \bibinfo{address}{{Pittsburgh, PA}}.
\newblock
\urldef\tempurl%
\url{https://ziv.codes/pdf/scully-thesis.pdf}%
\showURL{%
\tempurl}


\bibitem[\protect\citeauthoryear{Scully, Grosof, and {Harchol-Balter}}{Scully
  et~al\mbox{.}}{2020a}]%
        {scully_gittins_2020}
\bibfield{author}{\bibinfo{person}{Ziv Scully}, \bibinfo{person}{Isaac Grosof},
  {and} \bibinfo{person}{Mor {Harchol-Balter}}.}
  \bibinfo{year}{2020}\natexlab{a}.
\newblock \showarticletitle{The {{Gittins}} Policy Is Nearly Optimal in the
  {{M}}/{{G}}/{$k$} under Extremely General Conditions}.
\newblock \bibinfo{journal}{\emph{Proceedings of the ACM on Measurement and
  Analysis of Computing Systems}} \bibinfo{volume}{4}, \bibinfo{number}{3},
  Article \bibinfo{articleno}{43} (\bibinfo{date}{Nov.} \bibinfo{year}{2020}),
  \bibinfo{numpages}{29}~pages.
\newblock
\showISSN{2476-1249, 2476-1249}
\urldef\tempurl%
\doi{10.1145/3428328}%
\showDOI{\tempurl}


\bibitem[\protect\citeauthoryear{Scully, Grosof, and {Harchol-Balter}}{Scully
  et~al\mbox{.}}{2021}]%
        {scully_optimal_2021}
\bibfield{author}{\bibinfo{person}{Ziv Scully}, \bibinfo{person}{Isaac Grosof},
  {and} \bibinfo{person}{Mor {Harchol-Balter}}.}
  \bibinfo{year}{2021}\natexlab{}.
\newblock \showarticletitle{Optimal Multiserver Scheduling with Unknown Job
  Sizes in Heavy Traffic}.
\newblock \bibinfo{journal}{\emph{Performance Evaluation}}
  \bibinfo{volume}{145}, Article \bibinfo{articleno}{102150}
  (\bibinfo{date}{Jan.} \bibinfo{year}{2021}), \bibinfo{numpages}{31}~pages.
\newblock
\showISSN{01665316}
\urldef\tempurl%
\doi{10.1016/j.peva.2020.102150}%
\showDOI{\tempurl}


\bibitem[\protect\citeauthoryear{Scully, Grosof, and Mitzenmacher}{Scully
  et~al\mbox{.}}{2022}]%
        {scully_uniform_2022}
\bibfield{author}{\bibinfo{person}{Ziv Scully}, \bibinfo{person}{Isaac Grosof},
  {and} \bibinfo{person}{Michael Mitzenmacher}.}
  \bibinfo{year}{2022}\natexlab{}.
\newblock \showarticletitle{Uniform Bounds for Scheduling with Job Size
  Estimates}. In \bibinfo{booktitle}{\emph{13th {{Innovations}} in
  {{Theoretical Computer Science Conference}} ({{ITCS}} 2022)}}
  \emph{(\bibinfo{series}{Leibniz {{International Proceedings}} in
  {{Informatics}} ({{LIPIcs}})})}. \bibinfo{publisher}{{Schloss
  Dagstuhl\textendash Leibniz-Zentrum fuer Informatik}},
  \bibinfo{address}{{Berkeley, CA}}, Article \bibinfo{articleno}{41},
  \bibinfo{numpages}{30}~pages.
\newblock
\urldef\tempurl%
\doi{10.4230/LIPIcs.ITCS.2022.114}%
\showDOI{\tempurl}


\bibitem[\protect\citeauthoryear{Scully and {Harchol-Balter}}{Scully and
  {Harchol-Balter}}{2021}]%
        {scully_gittins_2021}
\bibfield{author}{\bibinfo{person}{Ziv Scully} {and} \bibinfo{person}{Mor
  {Harchol-Balter}}.} \bibinfo{year}{2021}\natexlab{}.
\newblock \showarticletitle{The {{Gittins}} Policy in the {{M}}/{{G}}/1 Queue}.
  In \bibinfo{booktitle}{\emph{19th {{International Symposium}} on {{Modeling}}
  and {{Optimization}} in {{Mobile}}, {{Ad Hoc}}, and {{Wireless Networks}}
  ({{WiOpt}} 2021)}}. \bibinfo{publisher}{{IFIP}},
  \bibinfo{address}{{Philadelphia, PA}}, \bibinfo{pages}{248--255}.
\newblock
\urldef\tempurl%
\doi{10.23919/WiOpt52861.2021.9589051}%
\showDOI{\tempurl}


\bibitem[\protect\citeauthoryear{Scully, {Harchol-Balter}, and
  {Scheller-Wolf}}{Scully et~al\mbox{.}}{2018}]%
        {scully_soap_2018}
\bibfield{author}{\bibinfo{person}{Ziv Scully}, \bibinfo{person}{Mor
  {Harchol-Balter}}, {and} \bibinfo{person}{Alan {Scheller-Wolf}}.}
  \bibinfo{year}{2018}\natexlab{}.
\newblock \showarticletitle{{{SOAP}}: One Clean Analysis of All Age-Based
  Scheduling Policies}.
\newblock \bibinfo{journal}{\emph{Proceedings of the ACM on Measurement and
  Analysis of Computing Systems}} \bibinfo{volume}{2}, \bibinfo{number}{1},
  Article \bibinfo{articleno}{16} (\bibinfo{date}{April} \bibinfo{year}{2018}),
  \bibinfo{numpages}{30}~pages.
\newblock
\showISSN{2476-1249, 2476-1249}
\urldef\tempurl%
\doi{10.1145/3179419}%
\showDOI{\tempurl}


\bibitem[\protect\citeauthoryear{Scully, {Harchol-Balter}, and
  {Scheller-Wolf}}{Scully et~al\mbox{.}}{2020b}]%
        {scully_simple_2020}
\bibfield{author}{\bibinfo{person}{Ziv Scully}, \bibinfo{person}{Mor
  {Harchol-Balter}}, {and} \bibinfo{person}{Alan {Scheller-Wolf}}.}
  \bibinfo{year}{2020}\natexlab{b}.
\newblock \showarticletitle{Simple Near-Optimal Scheduling for the
  {{M}}/{{G}}/1}.
\newblock \bibinfo{journal}{\emph{Proceedings of the ACM on Measurement and
  Analysis of Computing Systems}} \bibinfo{volume}{4}, \bibinfo{number}{1},
  Article \bibinfo{articleno}{11} (\bibinfo{date}{May} \bibinfo{year}{2020}),
  \bibinfo{numpages}{29}~pages.
\newblock
\showISSN{2476-1249, 2476-1249}
\urldef\tempurl%
\doi{10.1145/3379477}%
\showDOI{\tempurl}


\bibitem[\protect\citeauthoryear{Scully, van Kreveld, Boxma, Dorsman, and
  Wierman}{Scully et~al\mbox{.}}{2020c}]%
        {scully_characterizing_2020}
\bibfield{author}{\bibinfo{person}{Ziv Scully}, \bibinfo{person}{Lucas van
  Kreveld}, \bibinfo{person}{Onno~J. Boxma}, \bibinfo{person}{Jan-Pieter
  Dorsman}, {and} \bibinfo{person}{Adam Wierman}.}
  \bibinfo{year}{2020}\natexlab{c}.
\newblock \showarticletitle{Characterizing Policies with Optimal Response Time
  Tails under Heavy-Tailed Job Sizes}.
\newblock \bibinfo{journal}{\emph{Proceedings of the ACM on Measurement and
  Analysis of Computing Systems}} \bibinfo{volume}{4}, \bibinfo{number}{2},
  Article \bibinfo{articleno}{30} (\bibinfo{date}{June} \bibinfo{year}{2020}),
  \bibinfo{numpages}{33}~pages.
\newblock
\showISSN{2476-1249, 2476-1249}
\urldef\tempurl%
\doi{10.1145/3392148}%
\showDOI{\tempurl}


\bibitem[\protect\citeauthoryear{Sigman}{Sigman}{1990}]%
        {sigman_one-dependent_1990}
\bibfield{author}{\bibinfo{person}{Karl Sigman}.}
  \bibinfo{year}{1990}\natexlab{}.
\newblock \showarticletitle{One-Dependent Regenerative Processes and Queues in
  Continuous Time}.
\newblock \bibinfo{journal}{\emph{Mathematics of Operations Research}}
  \bibinfo{volume}{15}, \bibinfo{number}{1} (\bibinfo{year}{1990}),
  \bibinfo{pages}{175--189}.
\newblock
\showISSN{0364765X, 15265471}
\urldef\tempurl%
\url{http://www.jstor.org/stable/3689937}%
\showURL{%
\tempurl}


\bibitem[\protect\citeauthoryear{Upadhyaya}{Upadhyaya}{2016}]%
        {upadhyaya_queueing_2016}
\bibfield{author}{\bibinfo{person}{Shweta Upadhyaya}.}
  \bibinfo{year}{2016}\natexlab{}.
\newblock \showarticletitle{Queueing Systems with Vacation: An Overview}.
\newblock \bibinfo{journal}{\emph{International Journal of Mathematics in
  Operational Research}} \bibinfo{volume}{9}, \bibinfo{number}{2}
  (\bibinfo{year}{2016}), \bibinfo{pages}{167}.
\newblock
\showISSN{1757-5850, 1757-5869}
\urldef\tempurl%
\doi{10.1504/IJMOR.2016.077997}%
\showDOI{\tempurl}


\bibitem[\protect\citeauthoryear{van Vreumingen}{van Vreumingen}{2019}]%
        {vreumingen_queueing_2019}
\bibfield{author}{\bibinfo{person}{Rocco van Vreumingen}.}
  \bibinfo{year}{2019}\natexlab{}.
\newblock \emph{\bibinfo{title}{Queueing Systems with Non-Standard Service
  Policies and Server Vacations}}.
\newblock \bibinfo{thesistype}{Master's\ thesis}. \bibinfo{school}{University
  of Amsterdam}, \bibinfo{address}{{Amsterdam, The Netherlands}}.
\newblock
\urldef\tempurl%
\url{http://www.mt-support.nl/VanVreumingen/Rocco/wp-content/uploads/2020/01/Master-thesis-Queueing-systems-with-non-standard-service-policies-and-server-vacations.pdf}%
\showURL{%
\tempurl}


\bibitem[\protect\citeauthoryear{Wang, Xie, and {Harchol-Balter}}{Wang
  et~al\mbox{.}}{2021}]%
        {wang_zero_2021}
\bibfield{author}{\bibinfo{person}{Weina Wang}, \bibinfo{person}{Qiaomin Xie},
  {and} \bibinfo{person}{Mor {Harchol-Balter}}.}
  \bibinfo{year}{2021}\natexlab{}.
\newblock \showarticletitle{Zero Queueing for Multi-Server Jobs}.
\newblock \bibinfo{journal}{\emph{Proceedings of the ACM on Measurement and
  Analysis of Computing Systems}} \bibinfo{volume}{5}, \bibinfo{number}{1},
  Article \bibinfo{articleno}{7} (\bibinfo{date}{Feb.} \bibinfo{year}{2021}),
  \bibinfo{numpages}{25}~pages.
\newblock
\showISSN{2476-1249}
\urldef\tempurl%
\doi{10.1145/3447385}%
\showDOI{\tempurl}


\bibitem[\protect\citeauthoryear{Welch}{Welch}{1964}]%
        {welch_generalized_1964}
\bibfield{author}{\bibinfo{person}{Peter~D. Welch}.}
  \bibinfo{year}{1964}\natexlab{}.
\newblock \showarticletitle{On a Generalized {{M}}/{{G}}/1 Queuing Process in
  Which the First Customer of Each Busy Period Receives Exceptional Service}.
\newblock \bibinfo{journal}{\emph{Operations Research}} \bibinfo{volume}{12},
  \bibinfo{number}{5} (\bibinfo{date}{Oct.} \bibinfo{year}{1964}),
  \bibinfo{pages}{736--752}.
\newblock
\showISSN{0030364X, 15265463}
\urldef\tempurl%
\doi{10.1287/opre.12.5.736}%
\showDOI{\tempurl}


\bibitem[\protect\citeauthoryear{Whitt}{Whitt}{1982}]%
        {whitt_existence_1982}
\bibfield{author}{\bibinfo{person}{Ward Whitt}.}
  \bibinfo{year}{1982}\natexlab{}.
\newblock \showarticletitle{Existence of Limiting Distributions in the
  {$GI/GI/s$} Queue}.
\newblock \bibinfo{journal}{\emph{Mathematics of Operations Research}}
  \bibinfo{volume}{7}, \bibinfo{number}{1} (\bibinfo{date}{Feb.}
  \bibinfo{year}{1982}), \bibinfo{pages}{88--94}.
\newblock
\showISSN{0364-765X, 1526-5471}
\urldef\tempurl%
\doi{10.1287/moor.7.1.88}%
\showDOI{\tempurl}


\bibitem[\protect\citeauthoryear{Wierman, {Harchol-Balter}, and
  Osogami}{Wierman et~al\mbox{.}}{2005}]%
        {wierman_nearly_2005}
\bibfield{author}{\bibinfo{person}{Adam Wierman}, \bibinfo{person}{Mor
  {Harchol-Balter}}, {and} \bibinfo{person}{Takayuki Osogami}.}
  \bibinfo{year}{2005}\natexlab{}.
\newblock \showarticletitle{Nearly Insensitive Bounds on {{SMART}} Scheduling}.
\newblock \bibinfo{journal}{\emph{ACM SIGMETRICS Performance Evaluation
  Review}} \bibinfo{volume}{33}, \bibinfo{number}{1} (\bibinfo{date}{June}
  \bibinfo{year}{2005}), \bibinfo{pages}{205--216}.
\newblock
\showISSN{0163-5999}
\urldef\tempurl%
\doi{10.1145/1071690.1064236}%
\showDOI{\tempurl}


\bibitem[\protect\citeauthoryear{Williams, {Harchol-Balter}, and Wang}{Williams
  et~al\mbox{.}}{2022}]%
        {williams_mmk_2022}
\bibfield{author}{\bibinfo{person}{Jalani~K. Williams}, \bibinfo{person}{Mor
  {Harchol-Balter}}, {and} \bibinfo{person}{Weina Wang}.}
  \bibinfo{year}{2022}\natexlab{}.
\newblock \showarticletitle{The {{M}}/{{M}}/k with Deterministic Setup Times}.
\newblock \bibinfo{journal}{\emph{Proceedings of the ACM on Measurement and
  Analysis of Computing Systems}} \bibinfo{volume}{6}, \bibinfo{number}{3}
  (\bibinfo{date}{Dec.} \bibinfo{year}{2022}), \bibinfo{pages}{1--45}.
\newblock
\showISSN{2476-1249}
\urldef\tempurl%
\doi{10.1145/3570617}%
\showDOI{\tempurl}


\bibitem[\protect\citeauthoryear{Wolff and Wang}{Wolff and Wang}{2003}]%
        {wolff_idle_2003}
\bibfield{author}{\bibinfo{person}{Ronald~W. Wolff} {and}
  \bibinfo{person}{Chia-Li Wang}.} \bibinfo{year}{2003}\natexlab{}.
\newblock \showarticletitle{Idle Period Approximations and Bounds for the
  {{GI}}/{{G}}/1 Queue}.
\newblock \bibinfo{journal}{\emph{Advances in Applied Probability}}
  \bibinfo{volume}{35}, \bibinfo{number}{3} (\bibinfo{date}{Sept.}
  \bibinfo{year}{2003}), \bibinfo{pages}{773--792}.
\newblock
\showISSN{0001-8678, 1475-6064}
\urldef\tempurl%
\doi{10.1239/aap/1059486828}%
\showDOI{\tempurl}


\end{thebibliography}

\appendix
\makeatletter
\ifelsarticle{%
    \renewcommand{\thesection}{\Alph{section}}%
    \renewcommand{\appendixname}{Appendix}%
    \def\@seccntformat#1{\appendixname~\csname the#1\endcsname.\quad}%
}{}
\makeatother

\section{Summary of Notation}
\label{sec:notation}

\paragraph*{Core \G/G/k/setup model (\cref{sec:model:ggk, sec:model:setup_times})}
\*[beginpenalty=10000] $k$: number of servers
\** Each server has speed $1/k$, so the total service capacity is always $k \cdot 1/k = 1$
\* $S$: job size random variable
\** With $k$ servers, job of size~$S$ results in $k S$ service time
\* $U$: setup work random variable (\cref{sec:model:setup_times})
\** With $k$ servers, setup work~$U$ results in $k U$ setup time
\* $A$: interarrival time random variable
\* $\lambda = 1/\E{A}$: arrival rate
\* $\load = \lambda \E{S} < 1$: load
\* $N$: number of jobs in the system
\* $\E{N}_{\text{SYS}}^\pi$: mean number of jobs in system~SYS under policy~$\pi$
\** To reduce clutter, we omit ``SYS'' and/or ``$\pi$'' when there is no ambiguity
\*/
\paragraph*{Remaining interarrival time (\cref{assump:remabd})}
\*[beginpenalty=10000] $\rema$: residual interarrival time, i.e. time until next arrival
\* $\agea$: interarrival age, i.e. time since last arrival
\* $\remamin, \remamax$: bounds on $\E{\rema \given \agea}$
\*/
\paragraph*{Markov-process job model (\cref{sec:model:scheduling})}
\*[beginpenalty=10000] See \cref{ex:known_model, ex:unknown_model} for concrete examples of the Markov-process job model
\* $\mathbb{X}$: space of possible job states
\* $X(t)$: state of a job that has received $t$ service so far
\* $X_i$: state of the $i$th job currently in the system ($i \in \{1, \dots, N\}$)
\* $\top$: terminal state, i.e. a job completes once $X(t) = \top$
\* $S(x)$: remaining work of a job in state~$x$
\* $W = \sum_{i = 1}^N S(X_i)$: (total) work in the system
\*/
\paragraph*{Gittins policy (\cref{sec:model:gittins})}
\*[beginpenalty=10000] $\gittins$: abbreviation for Gittins in formulas, e.g. $\E{N}^\gittins$
\* $\rank_\gittins(x)$: rank, i.e. priority (lower is better), of job in state~$x$
\** See \cref{ex:known_gittins, ex:unknown_gittins} for concrete examples of the rank function
\*/
\paragraph*{Main results (\cref{sec:results})}
\*[beginpenalty=10000] $\loss{(a)}, \loss{(c)}, \loss{(c)}$: suboptimality due to (a)~multiple servers, (b)~non-Poisson arrivals, (c)~setup times
\* $c^2_V = \frac{\Var{V}}{\E{V}^2}$: squared coefficient of variation of~$V$
\*/
\paragraph*{WINE and \r-work (\cref{sec:background})}
\*[beginpenalty=10000] $S_r(x)$: remaining \r-work of a job in state~$x$
\* $W_r = \sum_{i = 1}^N S_r(X_i)$: \r-work in the system
\* \r-fresh, \r-recycled: see \cref{def:r-fresh, def:r-recycled}
\* $S_\rfresh = S_r(X(0))$: remaining \r-work of a newly arrived job
\* $\rho_\rfresh = \lambda \E{S_\rfresh}$: \r-fresh load
\* $\lambda_\rrcy$: average rate of \r-recyclings
\* $S_\rrcy$: remaining \r-work of a newly \r-recycled job
\* $\rho_\rrcy = \lambda_\rrcy \E{S_\rrcy}$: \r-recycled load
\* $J_r$: fraction of servers that are serving jobs of rank at most~$r$
\* $J_\setup$: fraction of servers that are setting up
\*/
\paragraph*{Stating work decomposition law (\cref{sec:work-decomp:palm})}
\*[beginpenalty=10000] $\E_\arv{\cdot}$: expectation sampled immediately before an arrival
\* $\E_\rrcy{\cdot}$: expectation sampled immediately before an \r-recycling
\* $\E_\racc{\cdot}$: an expectation defined in terms of $\E_\rrcy{\cdot}$ in \cref{eq:r-acc}
\* $V_\excess$: excess of random variable $V$
\** Key property: $\E{V_\excess} = \frac{\E{V^2}}{2 \E{V}}$
\*/
\paragraph*{Proofs of main results (\cref{sec:proofs})}
\*[beginpenalty=10000] $\mResidual^\pi$: contribution to $\E{N}^\pi$ due to residual interarrival time, bounded in \cref{lem:remabd}
\* $\mRcy^\pi$: contribution to $\E{N}^\pi$ due to recyclings, bounded in \cref{lem:error-recycle-all}
\* $\mIdle^\pi$: contribution to $\E{N}^\pi$ due to idle servers, bounded in \cref{lem:error-idle-all}
\* $\mSetup^\pi$: contribution to $\E{N}^\pi$ due to setup times, bounded in \cref{lem:error-setup, lem:error-setup-single}
\*/

\section{Additional Model Details}
\label{sec:model_extra}

\subsection{Processor Sharing}
\label{sec:model_extra:scheduler_actions}

Thus far, we have discussed the scheduler as assigning jobs to servers in one-to-one fashion. However, because we are in a preemptive setting, it is possible for the scheduler to effectively share a server between multiple jobs by rapidly switching between them. As such, we explicitly allow the scheduler to use \emph{processor sharing}, simultaneously serving multiple jobs at reduced rates. Processor sharing arises naturally when using Gittins \citep{aalto_gittins_2009, aalto_properties_2011}.

Exactly how processor sharing works with multiple servers and setup times has some subtle corner cases, so for completeness, we give the details below. But for ease of discussion, we do not concern ourselves with these details during the proofs, because they complicate the arguments without adding insight. With that said, modifying the proofs to account for processor sharing is a straightforward exercise. For instance, instead of referring to the job at server~$i$, we might refer to an appropriately weighted random choice among jobs currently sharing server~$i$.

In a system without setup times, the scheduler's action is a \emph{service rate vector} $\alpha = (\alpha_1, \dots, \alpha_N)$, where recall that $N$ is the number of jobs. Here $\alpha_i$ is the rate at which job~$i$ is being served. In the \G/G/k, the fact that we have $k$ servers of service rate~$1/k$ corresponds to $\alpha$ obeying the constraints $\sum_{i = 1}^N \alpha_i \leq 1$ and $\alpha_i \in \sqgp{0, 1/k}$. From this perspective, a non-idling policy is one such that no $\alpha_i$ can be increased without violating a constraint. To clarify, the service rate vector~$\alpha$ is a \emph{gradual} control: it is chosen at every moment in time and can be changed at will.

In a system with setup times, the situation is similar, but we can only assign jobs to busy servers. The first constraint on $\alpha$ thus changes to $\sum_{i = 1}^N \alpha_i \leq K_{\mathrm{busy}}/k$, where $K_{\mathrm{busy}}$ is the number of busy servers.

\subsection{General Definition of the Gittins Rank Function}
\label{sec:model_extra:gittins_rank}

How does Gittins decide what rank to assign each job? The main idea is to assign jobs low rank if they are likely to finish with a small amount of service. To formalize this intuition, consider a job in state~$x$, and suppose we were to serve the job until it enters some state in set $\mathbb{Y} \subseteq \mathbb{X}$. This would involve serving the job for some amount of time, and there would be some chance the job completes. Let
\[
    S_{\mathbb{Y}}(x)
    &= \text{amount of service a job starting at~$x$ needs to finish or enter $\mathbb{Y}$} \\*
    \label{eq:service_until}
    &= \gp[\big]{\inf\curlgp{s \geq 0 \given X(s) \in \mathbb{Y} \cup \{\top\}} \given X(0) = x}, \\
    C_{\mathbb{Y}}(x)
    &= \text{event that a job starting at~$x$ finishes before entering $\mathbb{Y}$} \\*
    \label{eq:completion_event}
    &= \sqgp[\big]{X\gp[\big]{\inf\curlgp{s \geq 0 \given X(s) \in \mathbb{Y} \cup \{\top\}}} = \top \given X(0) = x}.
\]
We can think of $\E{S_{\mathbb{Y}}(x)}/\P{C_{\mathbb{Y}}(x)}$ as a time-per-completion ratio for jobs in state~$x$, as measured by serving them until they finish or enter~$\mathbb{Y}$. The key idea of the Gittins policy is to let a job's rank be its best possible time-per-completion ratio:\footnote{%
    We consider the ratio to be $\infty$, and thus not the infimum, if $\P{C_{\mathbb{Y}}(x)} = 0$.}
\[
    \label{eq:rank_gittins}
    \rank_\gittins(x) = \inf_{\mathbb{Y} \subseteq \mathbb{X}} \frac{\E{S_{\mathbb{Y}}(x)}}{\P{C_{\mathbb{Y}}(x)}}.
\]

As an example, consider the case of known job sizes, where a job's state~$x$ is its remaining work. The best completion ratio is achieved by always running the job until it finishes, i.e. $\mathbb{Y} = \emptyset$. This results in $S_\emptyset(x) = x$ and the event $C_\emptyset(x)$ occurring with probability~$1$, so $\rank_\gittins(x) = x$. That is, Gittins always serves the jobs of least remaining work, so it reduces to SRPT (\cref{ex:known_gittins}).

\subsection{Technical Considerations for the Job Markov Process}
\label{sec:model_extra:technicalities}

We have omitted some technical details and assumptions in our definition of the Gittins policy. For example, in 
\crefrange{eq:service_until}{eq:rank_gittins},
we must restrict attention to sets $\mathbb{Y} \subseteq \mathbb{X}$ such that $S_{\mathbb{Y}}(x)$ and $C_{\mathbb{Y}}(x)$ are measurable with respect to the natural filtration on the job Markov process. More generally, the theory of the Gittins policy relies on being able to solve an optimal stopping problem which is known in the Markov-process job model literature as the ``Gittins game'' \citep{scully_gittins_2020, scully_gittins_2021, scully_new_2022}.

Following the convention of prior literature, we consider the technical foundations of the Gittins game to be outside the scope of this paper. Our results apply to any job Markov process where the foundations can be established. As explained by \citet[Appendix~D]{scully_gittins_2020}, this is not a restrictive assumption.

\section{Deferred Proofs}
\label{sec:proofs_extra}

\restate*{lem:error-idle-all}

\begin{proof}
    At a high level, we follow the same main as the proof of \citep[Proposition~17.6]{scully_new_2022}, but one of the steps requires modification to account for setup times.
    
    By following the same argument as \citep[Lemma~17.4]{scully_new_2022}, one can show that
    \[
        \mIdle^\gittins = \int_0^\infty \frac{\E{(1-\bs_r-\bs_\setup)\wk_r}}{r^2(1-\load_\rfresh)} \d{r} \leq (k-1)\inf_{n\in \Z_{\geq 1}} n \gp*{\frac{1}{1-\load}}^{1/n}.
        \label{eq:setup-model:idleness-term-bound-17-10}
    \]
    The key fact is that when $1-\bs_r - \bs_\setup > 0$, there is an idle server, so there are at most $k-1$ jobs in the system, as otherwise a server would start setting up. Given this fact, the rest of the proof carries through verbatim. We can also show an alternative bound for $\mIdle^\gittins$ that is tighter in light traffic:
    \[
         \mIdle^\gittins = \int_0^\infty \frac{\E{(1-\bs_r-\bs_\setup)\wk_r}}{r^2(1-\load_\rfresh)} \d{r} \leq (k - 1)\frac{\load}{1 - \load} + (k - 1)\1(\P{U > 0} > 0). \label{eq:setup-model:idleness-term-bound-17-4}
    \]
    We prove \cref{eq:setup-model:idleness-term-bound-17-4} below. Combining \cref{eq:setup-model:idleness-term-bound-17-10} and \cref{eq:setup-model:idleness-term-bound-17-4} yields
    \[
        \mIdle^\gittins
        &\leq (k - 1) \min\curlgp[\bigg]{\inf_{n\in \Z_{\geq 1}} n \gp[\bigg]{\frac{1}{1-\load}}^{1/n}, \frac{\load}{1 - \load} + \1(\P{U > 0} > 0)} \\
        &\leq (k - 1) \min\curlgp[\bigg]{\inf_{n\in \Z_{\geq 1}} n \gp[\bigg]{\frac{1}{1-\load}}^{1/n}, \frac{\load}{1 - \load}} + (k - 1) \1(\P{U > 0} > 0) \\
        &\leq (C - 1)(k - 1)\log\frac{1}{1 - \load} + (k - 1) \1(\P{U > 0} > 0),
    \]
    where the last inequality follows from the same computation as in \citep[Proposition~17.6]{scully_new_2022}.
    
    It remains only to prove \cref{eq:setup-model:idleness-term-bound-17-4}. The proof follows the same steps as \citep[Lemma~17.5]{scully_new_2022}, but setup times add some additional considerations. For the purposes of this proof, we abuse notation slightly by defining $X_i = \top$ for the state of job~$i$ if there is no $i$th job. We then define the remaining \r-work of~$\top$ to be $S_r(\top) = 0$.
    
    We begin by noting that if $\bs_r + \bs_\setup < 1$, then there are less than~$k$ jobs in the system, so
    \[
        \label{eq:setup-model:idleness-term-bound-17-6}
        \E{(1-\bs_r-\bs_\setup)\wk_r} &= \sum_{i=1}^k \E{S_r(X_i) (1-\bs_r-\bs_\setup)}
        \leq  \frac{k - 1}{k} \sum_{i=1}^k \E{S_r(X_i)},
    \]
    where the last equality follows from the fact that $\bs_r + \bs_\setup < 1$ is an integer multiple of~$1/k$. Applying \cref{lem:single-job-wine, eq:setup-model:idleness-term-bound-17-6} to the definition of $\mIdle^\gittins$ yields
    \[
        \mIdle^\gittins
        &= \int_0^\infty \frac{\E{(1-\bs_r-\bs_\setup)\wk_r}}{r^2(1-\load_\rfresh)} \d{r} \\
        &\leq \frac{k-1}{k(1-\load)}\int_0^\infty  \sum_{i=1}^k \frac{\E{S_r(X_i)}}{r^2}\d{r} \\
        &\leq \frac{k-1}{k(1-\load)} \sum_{i=1}^k \P{X_i \neq \top} \\
        &= \frac{(k - 1) \E{\min\curlgp{N, k}}}{k(1 - \load)}. \label{eq:setup-model:idleness-term-bound-17-7}
    \]
    The fact that we are considering the setup-non-idling version of Gittins (\cref{sec:model:setup_times}) implies that the number of non-idle servers (i.e. busy or setting-up) is always at least $\min\curlgp{N, k}$. So it suffices to bound the mean fraction of non-idle servers.
    \* The mean fraction of busy servers is~$\load$.
    \* The mean fraction of setting-up servers is at most $(1 - \load) \1(\P{U > 0} > 0)$.
    \** If there are no setup times, it is clearly zero.
    \** If there are setup times, it is at most the fraction of non-busy servers, namely $1 - \load$.
    \*/
    This means
    \[
        \frac{(k - 1) \E{\min\curlgp{N, k}}}{k(1 - \load)}
        \leq (k - 1)\frac{\load}{1 - \load} + (k - 1) \1(\P{U > 0} > 0).
        \label{eq:setup-model:idleness-term-bound-17-8}
    \]
    Combining \cref{eq:setup-model:idleness-term-bound-17-7} and \cref{eq:setup-model:idleness-term-bound-17-8} yields \cref{eq:setup-model:idleness-term-bound-17-4}, as desired.
\end{proof}

\restate*{lem:setup-model:number-during-setup}

\begin{proof}
    Observe that during setup, the number of jobs in the system can be divided into three sets: jobs present when setup started, jobs triggered the setup and jobs that have arrived since setup started. 
    Therefore, to bound $\E{N \given J_{\setup,i}=1, \upti_{\age,i}=a}$, it is sufficient to bound the total number of jobs in the three sets on the right. Just before the setup starts, there is at least one idle server, so there are at most $k-1$ jobs present. And there is exactly $1$ job that triggers the setup. Let $N_{\arv}(a)$ be the number of jobs that have arrived since $a$ unit of time ago, then the above argument shows that
    \[
        \E{N\given J_{\setup,i}=1, \upti_{\age,i}=a} &\leq \E{N_{\arv}(a) \given J_{\setup,i}=1, \upti_{\age,i}=a} + k. \label{eq:number-during-setup-vs-arrivals}
    \]
    Therefore, to prove the lemma, it is sufficient to show that for any $a\geq 0$,
    \[
        \E{N_{\arv}(a) \given J_{\setup,i}=1, \upti_{\age,i}=a} \leq \arate a + \arate \remamax - 1. \label{eq:arrivals-during-setup-goal}
    \] 
    
    Conditioned on $J_{\setup,i}=1$ and $\upti_{\age,i}=a$, we know that server $i$ started setup process $a$ unit of time ago. Since the setup process is triggered by an arrival, if we denote the sequence of interarrival times since the setup as $A_1, A_2, \dots, A_n,\dots$, then we have the equation
    \[
        \E*{\sum_{n=1}^{N_{\arv}(a)+1} A_{n} \given J_{\setup,i}=1, \upti_{\age,i}=a} = \E{a + \rema \given J_{\setup,i}=1, \upti_{\age,i}=a}. \label{eq:total-arrival-time-equals-age-plus-res}
    \]
    The right-hand side of \eqref{eq:total-arrival-time-equals-age-plus-res} can be upper bounded using Assumption~\ref{assump:remabd} and the fact that $\rema$ is conditionally independent of the past given the age:
    \[
        \E{a + \rema \given J_{\setup,i}=1, \upti_{\age,i}=a} = \E{a+\rema \given \upti_{\age,i}=a} \leq a+\remamax. \label{eq:total-arrival-time-equation-rhs}
    \]
    To bound the left-hand side of \eqref{eq:total-arrival-time-equals-age-plus-res}, observe that $J_{\setup,i}=1$ and $\upti_{\age,i}=a$ is equivalent to having a setup process start $a$ unit of time ago and last until now, which is determined by the setup time and is independent of arrival process in the recent $a$ unit of time, so we have
    \[
         \E*{\sum_{n=1}^{N_{\arv}(a)+1} A_{n} \given J_{\setup,i}=1, \upti_{\age,i}=a} =  \E*{\sum_{n=1}^{N_{\arv}(a)+1} A_{n}}.\label{eq:total-arrival-time-equation-lhs-independence}
    \]
    We further observe that $A_1, A_2, \dots, A_n, \dots$ are a sequence of i.i.d. variables, and $N_{\arv}+1$ is the smallest $n$ such that the partial sum $A_1 + \dots A_n \geq a$, by Wald's equation,
    \[
        \E*{\sum_{n=1}^{N_{\arv}(a)+1} A_{n}} = \E{N_{\arv}(a)+1}\E{A} = \frac{\E{N_{\arv}(a)+1}}{\lambda}. \label{eq:total-arrival-time-equation-lhs-ward-equation}
    \]
    Combining \crefrange{eq:total-arrival-time-equals-age-plus-res}{eq:total-arrival-time-equation-lhs-ward-equation}, we get $\E{N_{\arv}(a)+1}/\lambda \leq a + \remamax$, implying \eqref{eq:arrivals-during-setup-goal}.
\end{proof}

\section{Towards Proving Stability of the \G/G/k/setup under Work-Conserving Scheduling Policies}
\label{sec:stability_sketch}

In this section, we sketch some initial ideas that we believe could be used to prove the stability of the \G/G/k/setup under Gittins and other complex scheduling policies. For simplicity, we focus on the \G/G/k under stationary scheduling policies.

\subsection{Partial Proof Sketch}

Assume we are using some scheduling policy that makes the system a Markov process. Consider the embedded Markov process of our system at arrival instants. We construct a Lyapunov function and show its negative drift for this embedded Markov process when the total work in the system or the largest job size is outside a bounded region. The negative drift of the Lyapunov function should imply the stability of the system after applying a certain version of the Foster-Lyapunov theorem, although there are still some technical obstacles, as we discuss in the next section. 

At any time~$t\geq 0$, we let $W(t)$ be the total work in the system, let $\Sbig(t)$ be the size of the largest job in the system, and let $\rema(t)$ be the residual arrival time.

Fix large constants $\alpha > 0$ and $\beta > 0$.
Define the Lyapunov function $V = W - \rho \rema + k (\Sbig - \alpha)^+$. Note that at each arrival point, $\rema=0$ and $V\geq 0$. 
The continuous rate of change of $V$ is 
\[
    \dot{V} &= -(1-\rho) + 1-J - k \indic{\Sbig>\alpha} \dot{S}_{\text{big}}(t) \\
    &\leq -(1-\rho) + 1-J - \indic{\Sbig>\alpha, J<1} \\
    &\leq -(1-\rho) + (1-J)\indic{\Sbig \leq \alpha},
\]
where in the first inequality, we use the fact that when $J<1$, all jobs are in service, so the size of the largest job decreases at a rate $1/k$; and in the second inequality, we use the fact that $\indic{J<1}\geq 1-J$.
The expected jump of $V$ at an arrival point is
\[
    \Delta V &= \E{S - \rho A + k\max((\Sbig-\alpha)^+, (S-\alpha)^+) - k(\Sbig-\alpha)^+} \\
    &\leq k \E{(S-\alpha)^+}, 
\]
where we have used the fact that $\E{S} = \rho \E{A}$.
Consider the change of $V$ between two consecutive arrivals:
\[
    \E{V(A)} - V(0) &= \int_0^A \dot{V}(t) \d{t} +  \lambda \Delta V \\
    &\leq \E[\Big]{\int_0^A  \gp[\big]{-(1-\rho) + (1-J(t))\indic{\Sbig(t) \leq \alpha}}  \d{t} } +  \lambda k \E[\Big]{(S-\alpha)^+}  \\
    &=  - \lambda (1-\rho) +  \E[\Big]{\int_0^A  (1-J(t))\indic{\Sbig(t) \leq \alpha}  \d{t} }+ \lambda k \E[\Big]{(S-\alpha)^+}.
\]
Let $\mathcal{R} = [0, k(\alpha+\beta)+\beta]\times [0, \alpha+\beta]$, a compact region. We show that $\E{V(A)} - V(0) < 0$ as long as $(W(0), \Sbig(0)) \notin \mathcal{R}$. There are two cases to consider:
\* If $\Sbig(0) > \alpha + \beta$, then $\Sbig(t) > \alpha + \beta - t$. As a result,
\[
    \E{V(A)} - V(0) &\leq - \lambda (1-\rho) + \E{(A - \beta)^+} + \lambda k \E{(S-\alpha)^+}. 
\]
\* If $\Sbig(0) \leq \alpha + \beta$ but $W(0) > k (\alpha+\beta)+\beta$, then we have $\Sbig(t) \leq \alpha+\beta$, and $W(t) \geq k(\alpha+\beta) + \beta - t$. For $t \leq \beta$, $N(t) \geq W(t)/\Sbig \geq k$, implying $J(t) = 1$. As a result,
\[
    \E{V(A)} - V(0) \leq - \lambda (1-\rho) + \E{(A-\beta)^+} + \lambda k \E{(S-\alpha)^+}. 
\]
\*/
Taking sufficiently large $\alpha$ and $\beta$, we get $\E{V(A)} - V(0) < 0$ in both cases. 

\subsection{Remaining Obstacles}

We have given a Lyapunov function that has negative drift for the embedded Markov process when $(W, \Sbig)$ is outside a compact region $\mathcal{R}$. However, $(W, \Sbig)$ is not a complete state description of the Markov process. Moreover, the set of states such that $(W, \Sbig)\in\mathcal{R}$ might not be compact. We believe one can show that this set of states is \emph{petite} \citep{meyn_markov_2009}, even if it is not compact, which will suffice. However, this seems to require reasoning about the details of the job state space.

Of course, so far we have just been focusing on the existence of a stationary distribution for the embedded process at arrival instants. Once this is established, we need to use it to construct a stationary distribution for the continuous-time process, but this step is more standard.

\end{document}